\newtheorem{assumption}{Assumption}[section]
\newtheorem{definition}{Definition}[section]
\newtheorem{proposition}{Proposition}[section]
\newtheorem{lemma}{Lemma}[section]
\newtheorem{remark}{Remark}[section]
\newcommand\propref[1]{Proposition~\ref{#1}}
\newcommand\figref[1]{Figure~\ref{#1}}
\newcommand\figsref[1]{Figures~\ref{#1}}
\newcommand\defref[1]{Definition~\ref{#1}}
\newcommand\assref[1]{Assumption~\ref{#1}}
\newcommand\algref[1]{Algorithm~\ref{#1}}
\newcommand\lemref[1]{Lemma~\ref{#1}}
\title{\LARGE \bf{ 
		Joint Observer Gain and Input Design for Asymptotic Active Fault Diagnosis}}
\author{Feng Xu$^{*}$, Yiming Wan, Ye Wang and Vicen\c{c} Puig
	 \thanks{Corresponding author: Feng Xu (xu.feng@sz.tsinghua.edu.cn).}
	\thanks{Feng Xu is with Tsinghua Shenzhen International Graduate School, Tsinghua University, Shenzhen 518055, P.R.China.}
	\thanks{Yiming Wan is with School of Artificial Intelligence and Automation, Huazhong University of Science and Technology, and Key Laboratory of Image Processing and Intelligent Control, Ministry of Education China, Wuhan 430074, P.R.China.}
	\thanks{Ye Wang is with Department of Electrical and Electronic Engineering, The University of Melbourne, Parkville VIC 3010, Australia.}
	\thanks{Vicen\c c Puig is with Institut de Rob\`{o}tica i Inform\`{a}tica Industrial (CSIC-UPC), Universitat Polit\`ecnica de Catalunya, Barcelona 08028, Spain.}
}
\begin{document}

	\maketitle
	\thispagestyle{empty}
	\pagestyle{empty}


\begin{abstract}                          
	This paper proposes a joint gain and input design method for observer-based asymptotic active fault diagnosis, which is based on a newly-defined notion named the excluding degree of the origin from a zonotope. Using the excluding degree, a quantitative specification is obtained to characterize the performance of set-based robust fault diagnosis. Furthermore, a single gain design method and a joint gain and input design method are proposed, respectively. This is the first work to achieve a joint observer gain and input design for set-based active fault diagnosis. Compared with the existing methods that design gains and input separately, the proposed joint gain and input design method has advantages to exploit the fault diagnosis potential of observer-based schemes. Finally, several examples are used to illustrate the effectiveness of the proposed methods.
\end{abstract}

	\IEEEpeerreviewmaketitle
	
	\section{Introduction}
	\label{Section1}
	
	As the complexity of technical systems increases, fault diagnosis (FD) plays an important role in improving their safety and reliability. Since the operation of technical systems is always under the effect of system uncertainties such as process disturbances, measurement noises, etc., the performance  of FD methods is deeply affected by uncertainties. For any FD method, if it cannot deal with uncertainties in a systematic way, its effectiveness in a real FD task is not trustable. In the literature, the FD ability of dealing with uncertainties is called robustness. Two classes of robust FD methods are proposed in the literature based on the ways to model uncertainties: the stochastic and set-based methods. The stochastic methods consider that uncertainties are modeled by means of known stochastic distributions and then use the probability theory as the mathematical tool to implement FD \cite{DOHLER2016244,WanTAC20188013740,ShangAuto2021109434,WanRNC2021}. Differently, the set-based methods are deterministic approaches by considering that uncertainties are bounded by known sets, e.g., zonotopes, intervals, polytopes, ellipsoids, etc. \cite{Blanchini2015}, and leverages set theory to implement output consistency test with explicit uncertainty propagation bounds for robust FD. However, both classes of methods have their own advantages and disadvantages, which should be chosen based on particular applications by considering specific features of uncertainties. 
	
	This paper focuses on set-based FD, which also includes two classes, i.e., the passive FD (PFD) and active FD (AFD) methods. The PFD methods just collect system input and output information for FD, while the AFD methods additionally design inputs to excite the system to obtain extra fault information. Therefore, the PFD methods are more simple but lead to more conservative FD results than the AFD methods. This paper considers two main challenges of set-based robust FD methods:
	\begin{itemize}
		\item Gain optimization design for set-based PFD;
		\item Joint gain and input design for set-based AFD.
	\end{itemize}
	
	The gain optimization design for a set-valued observer (SVO) takes into account that FD decisions are made by testing the consistency between measured and estimated system outputs. Since the bounds of uncertainties are considered, the SVO-based methods have the advantage being robust to uncertainties but provide relatively conservative FD performance. Therefore, the gain optimization of SVO is a key issue to keep the advantage of robustness of set-based FD methods and simultaneously reduce their conservatism. The gain optimization of observers for FD has been widely considered under the stochastic FD frameworks in the literature, where an important idea is to attenuate the effect of uncertainties on FD or maximize the relative magnitude of faults with respect to uncertainties \cite{Ding2013}. Particularly, this is because under the stochastic framework representation of uncertainties, state and output estimations and residuals are all vectors. In this case, it is straightforward to formulate optimization problems considering the trade-off between FD performance and robustness to design FD observer gains. However, when considering that uncertainties are bounded and that SVOs are used to implement robust FD, all estimations for states and outputs and residuals are in the form of sets. This results in difficulties on how to formulate the relative magnitude of faults with respect to uncertainties in the form of sets. Consequently, the gain optimization of SVO-based FD is still a relatively open topic. However, different from FD, the gain optimization of SVOs for state estimation is less challenging. This is because the optimality of SVO for state estimation only depends on the sizes of state estimation sets. With this idea, the gain optimization of SVO-based state estimation was done in \cite{Combastel2015265,Xu2019Auto} by designing observer gains to minimize the Frobenius norm ($F$-norm) sizes of the state estimation zonotopes. Besides, based on \cite{Combastel2015265}, time-varying locally optimal gains were designed in \cite{Xu2020RNC1} to optimize interval observers for state estimation as well.
	
	Within the authors' knowledge scope, there only exist few works to consider the effect of observer gains on FD of SVOs. Particularly, in \cite{MESEGUER2010JPC}, a systematic analysis of the effect of observer gains on fault detectability of SVO was carried out. In \cite{Xu2019JPC}, a locally optimal observer gain was computed  for FD by using a gain from \cite{Combastel2015265} at a certain time instant (after a sufficiently long time). Different from these two works, the method in \cite{PourasgharJPC2019} proposed a method to directly design gains to improve the performance of SVO-based FD, where FD gains were designed by maximizing the relative magnitude of faults to uncertainties based on a fractional objective function. However, in \cite{PourasgharJPC2019}, only generator matrices (i.e., sizes) of zonotopes of residuals and uncertainties were considered, which simplified the description of relative magnitude of faults to uncertainties under the set-based framework. Particularly, in the formulated fractional objective function, the numerator was the component of $F$-norm sizes of residual zonotopes related to faults while the denominator was the component of $F$-norm sizes of residual zonotopes related to uncertainties. Finally, FD gains were obtained based on the generalized eigenvalues at each time instant. Additionally, the method in \cite{PourasgharJPC2019} was extended to design FD gains for the set-theoretic unknown input observer of descriptor systems in \cite{WangJFI2019}. However, the method proposed in \cite{PourasgharJPC2019} has an important shortcoming since FD gains computed based on the generalized eigenvalues were not optimal for the fractional objective function and that the designed observer stability is not guaranteed. This is because the fractional objective function in \cite{PourasgharJPC2019} was constrained while the used method in \cite{PourasgharJPC2019} can only compute the optimal solution of the unconstrained case. 
	
	In order to address this problem, a fractional programming method was proposed in \cite{TanRNC2020} to obtain optimal FD gains corresponding to the fractional objective function in \cite{PourasgharJPC2019}. However, \cite{PourasgharJPC2019} and \cite{TanRNC2020} only considered the effect of generator matrices of residual zonotopes on set-based FD. In general, except for generator matrices, the centers of residual zonotopes also affect fault detectability of SVOs. Although \cite{PourasgharJPC2019} and \cite{TanRNC2020} considered time-varying and linear parameter-varying (LPV) systems, their proposed principles of designing optimal FD gains can be used for linear time-invariant (LTI) systems. Motivated by this fact, the authors' preliminary results \cite{Xu2022CDC} proposed a new notion named the excluding degree of the origin from a residual zonotope to simultaneously consider the effect of the centers and generator matrices of residual zonotopes to improve the FD performance. The advantage of the method in \cite{Xu2022CDC} consists in that the proposed excluding degree has an analytical expression according with the set-based FD criterion and can reflect the FD performance directly. However, the main disadvantage of the proposed excluding degree in \cite{Xu2022CDC} consists in that it cannot provide a specific value to the excluding degree in the case that the origin is just on the border of a residual zonotope. This means that the excluding degree value that guarantees fault detection is unknown. 
	
	In order to overcome this shortcoming, \cite{FanAuto2023} proposed a new exclusion tendency notion that characterizes the degree of residual zonotopes to exclude the origin based on a linear programming (LP) problem and is able to characterize the critical point of set-based FD. In particular, when the exclusion tendency of the origin from a residual zonotope takes a value larger than $1$, it guarantees that faults are detected. Otherwise, it is judged that the system is still healthy. However, the main disadvantage of the exclusion tendency in \cite{FanAuto2023} consists in that it relies on solving a linear programming (LP) problem and does not have an analytical expression. This means that both the excluding degree and the exclusion tendency can describe the set-based FD performance but both of them have their own advantages and disadvantages. Besides, when designing FD gains based on the excluding degree in \cite{Xu2022CDC}, the output $y_{k+1}$ was not used and thus resulted in some conservatism of SVO-based FD performance. It is also realized that all the above works on gain optimization of SVOs depend on solving optimization problems online to obtain FD gains, which implies high online computational complexity especially for large-scale systems. Motivated by this drawback, the latest work \cite{WangCR2024111376} proposed a notion named unguaranteed detectable faults set and then designed an analytical optimal gain by minimizing the Frobenius radius ($F$-radius) of unguaranteed detectable faults set, which achieves the lowest computational complexity among all the existing gain optimization methods of set-based FD. Note that all the works above reflect the state of the art of the gain optimization study of set-based FD.
	
	On the joint gain and input design for set-based AFD, it is still an open issue in the literature and its key challenge consists in the nonlinear couplings of gains and inputs, which makes the formulation and solution of AFD gain and input design problems difficult. In the literature, we can divide the existing set-based AFD methods into two approaches. The first approach generates state and output reachable sets based on the set-based versions of system models. The second approach generates state and output estimation sets based on SVOs. Within the authors' knowledge scope, the early results on set-based AFD belong to the first approach. In particular, the earliest work on set-based AFD was proposed in \cite{Nikoukhah1998}, where AFD inputs were designed to separate two polytopic output sets, i.e., a healthy set and a faulty set. In \cite{Tabatabaeipour2015IJSS}, AFD inputs were designed to separate a group of polytopic output sets. The polytopes are flexible to describe complex geometric shapes and are closed under the common set operations such as Minkowski sum, mapping by an appropriate matrix, etc. However, on one hand, polytopes results in high computational complexity in the design of AFD inputs. On the other hand, they are computationally unstable when designing AFD inputs to separate a group of polytopes especially for systems with dimensions larger than ten \cite{Scott20141580}. In order to overcome the computational unstability issue of polytopes, the work \cite{Scott20141580} proposed to use zonotopes to replace polytopes and designed AFD inputs by separating a group of output zonotopes. Due to the centrosymmetric feature of zonotopes, the computational unstability issue is overcome under the zonotopic framework. This is because the method in \cite{Scott20141580} transformed the separation constraints of a group of output zonotopes into a group of mixed-integer constraints. Consequently, a mixed-integer quadratic programming (MIQP) problem was formulated to design AFD inputs, which has exponentially increasing computational complexity with respect to the system dimension, the number of faults and AFD input sequence length. In order to reduce computational complexity, \cite{Marseglia2017Auto} proposed to use parametric programming methods to separate two zonotopes out of a group of zonotopes at a time. In our opinion, the root of computational complexity of the introduced works above is the set separation-based AFD input design conditions.
	
	Differently, the second approach turns to a bank of SVOs instead of set versions of a bank of system models. In \cite{Raimondo2016Auto}, the idea in \cite{Scott20141580} was further extended to closed-loop AFD input design based on set membership estimators. In \cite{Wang2023IJRNC}, an integrated AFD and control scheme was proposed, where AFD was still done based on the set membership estimators and the set separation conditions. In \cite{ZhangTASE2023}, an AFD input design method was proposed based on the separation of a group of zonotopes for linear time-varying (LPV) systems. Since AFD inputs are still designed based on the separation of a group of output zonotopes, the observer-based AFD methods proposed in \cite{Raimondo2016Auto,ZhangTASE2023,Wang2023IJRNC} still have high computational complexity. In order to overcome the computational complexity issue, \cite{Xu2021Auto1} proposed a new SVO-based AFD framework introducing a new notion named set separation tendency, and designed AFD inputs to increase the separation tendency of output zonotopes step by step such that AFD was finally achieved. In particular, the set separation tendency is increased by designing observer gains to minimize the size of all output sets and simultaneously designing inputs to increase the centers distance of all output sets. The simulation results show that the proposed method in \cite{Xu2021Auto1} has much less computational complexity than that of the output set separation-based methods \cite{Scott20141580,Raimondo2016Auto}. Since the observer gains designed in \cite{Xu2021Auto1} aim to minimize the size of output zonotopes instead of directly optimizing an FD objective, following the SVO-based AFD framework proposed in \cite{Xu2021Auto1}, the latest work \cite{FanAuto2023} further proposed a new FD gain design method to improve the FD performance of the whole SVO-based AFD framework. However, due to that the nonlinear couplings result in challenges for joint gain and input design, both \cite{FanAuto2023} and \cite{Xu2021Auto1} can only design gains and AFD inputs, separately, which means some FD performance losses. Besides, there are also some interesting AFD works considering stochastic and hybrid uncertainties. For example, the work \cite{Guo2024Auto} effectively addressed the input design problem incorporating chance constraints and presented the relationship between the misdiagnosis probability and the input signals for stochastic system, resulting in superior diagnosis performance and lower conservatism, which is a remarkable milestone for stochastic AFD. The work \cite{Scott2013CDC} considered that uncertainties were both bounded and subject to stochastic distributions and improved the AFD performance. For more additional AFD works on stochastic, set-based and hybrid AFD methods, the readers are referred to \cite{Heirung201935,TAN2021TAC,Cao9775733,Qiu2023Auto}.
	
	To exploit the potential of the SVO-based AFD framework and based on preliminary results in \cite{Xu2022CDC}, the main contributions of this paper are summarized as follows:
	\begin{itemize}
		\item Based on an analytical excluding degree specification, a new observer gain design method is proposed to optimize the FD performance;
		\item Based on the excluding degree, a novel joint observer gain and input design method is further proposed for the SVO-based AFD framework. 
	\end{itemize}
	
	The remainder of this paper is organized as follows. Section 2 introduces some preliminaries, the system model and the SVO. The results on gain and input optimization design are presented in Section 3. In Section 4, examples are used to illustrate the effectiveness of the proposed methods. The paper is concluded in Section 5.

	\section{Preliminaries and Problem Formulation}
	\label{Section2}
	\subsection{Preliminaries}
	\label{Section2_1}
	The identity and null matrices of appropriate dimensions are denoted by $I$ and $O$, respectively. \textbf{0} denotes an appropriate vector full of zeros. $\mathrm{diag}(\cdot)$ denotes a (block) diagonal matrix. $\Vert\cdot\Vert$ denotes the Euclidean norm. $\otimes$ denotes the Kronecker product. $\mathrm{vec}(\cdot)$ is the vectorizing operation that transforms a matrix into a column vector. $\mathrm{vec}^{-1}(\cdot)$ is an inverse operation of $\mathrm{vec}(\cdot)$. $\times$ denotes the Cartesian product. An interval $[\underline{\gamma},\overline{\gamma}]$ of a scalar $\gamma$ is a closed and connected subset of $\mathbb{R}$ ($\mathbb{R}$ is the set of real numbers). An interval matrix or vector is represented by bold letters and is a matrix or vector whose elements are intervals. For an interval matrix, $\mathrm{mid}(\cdot)$ and $\mathrm{rad}(\cdot)$ denote its middle point and radius, respectively. The Minkowski sum of two sets $X$ and $Y$ is $X\oplus Y=\{x+y\ |\ x\in X, y\in Y\}$. A zonotope $Z$ is defined as $Z=g \oplus H\mathbb{B}^{r}$, where $g\in \mathbb{R}^{n}$ and $H \in \mathbb{R}^{n \times r}$ are its center vector and generator matrix, respectively, and $\mathbb{B}^{r}$ is a box composed of $r$ unitary intervals. For brevity, $Z=g \oplus H\mathbb{B}^{r}$ is abbreviated as $Z=\langle g,H\rangle$. Given $Z_1=\langle g_1,H_1\rangle$ and $Z_2=\langle g_2,H_2\rangle$, $Z_1 \oplus Z_2 = \langle g_1+g_2,[H_1\:\: H_2]\rangle$.	Given $Z=\langle g,H\rangle$ and an appropriate matrix $K$, $KZ=\langle Kg,KH\rangle$. The $F$-radius of a zonotope $Z$ with generator matrix $H \in
	\mathbb{R}^{n \times r}$ is the Frobenius norm of $H$, i.e., $\|Z\|_{F}=\|H\|_{F}=\sqrt{\sum_{i=1}^r\|h^i\|^2}$, where $\|h^i\|=\sqrt{(h^i)^Th^i}$ and $h^i$ is the $i$-th column of $H$ \cite{Alamo2005}. The square of the Frobenius radius of a zonotope is called its $F$-norm size for brevity in this paper.
	

	\subsection{Problem Formulation}
	\label{Section2_2}
	In this paper, the discrete LTI system subject to multiplicative actuator faults is modeled as
	\begin{subequations}
		\label{PlantDynamics1}
		\begin{align}
			\label{PlantDynamics1A}
			x_{k+1}=&Ax_k+BG_iu_k+E\omega_k, \\
			\label{PlantDynamics1B}
			y_k=&Cx_k+F\eta_k,
		\end{align}
	\end{subequations}
	where $A \in \mathbb{R}^{n_x \times n_x}$, $B \in \mathbb{R}^{n_x \times n_u}$, $C \in \mathbb{R}^{n_y \times n_x}$, $E \in \mathbb{R}^{n_x \times n_{\omega}}$ and $F \in \mathbb{R}^{n_y \times n_{\eta}}$ are parametric matrices, $k$ denotes the $k$-th discrete time instant, $x_k \in \mathbb{R}^{n_x}$ and $y_k \in \mathbb{R}^{n_y}$ are the state and output vectors, respectively, $u_k \in \mathbb{R}^{n_u}$ is the input vector, $\omega_k \in \mathbb{R}^{n_{\omega}}$ represents the process disturbance vector, and $\eta_k \in \mathbb{R}^{n_{\eta}}$ is the measurement noise vector. Without loss of generality, this paper considers single faults. Consequently, there are $n_u+1$ actuator modes (the healthy mode and $n_u$ faulty modes) considered here and $G_i \in \mathbb{R}^{n_u \times n_u}$ ($i\in \mathbb{I}_u=\{0,1,2,\cdots,n_u\}$) is used to model the $i$-th actuator mode of the system \eqref{PlantDynamics1}. Particularly, $G_0$ is an identity matrix modeling the mode that all actuators are healthy and $G_i=\text{diag}([1,\cdots,1,g_i,1,\cdots,1])$ ($i\in \mathbb{I}_u\backslash\{0\}$) is a diagonal matrix modeling a fault in the $i$-th actuator, where $g_i=1$ means that the $i$-th actuator is healthy and $g_i \in [0,1)$ means that the $i$-th actuator is faulty.
	
	\begin{assumption}
		\label{SystemDetectability} The pair ($A$, $C$) in \eqref{PlantDynamics1} is detectable.
	\end{assumption}
	
	\begin{assumption}
		\label{Uncetainties} $\omega_k$ and
		$\eta_k$ are unknown but bounded by $\mathcal{W} = \langle \omega^c,H_{\omega} \rangle$ and
		$\mathcal{V} =\langle \eta^c,H_{\eta} \rangle $, respectively.
	\end{assumption}
	
	\assref{Uncetainties} is a classical assumption for set-based methods. In reality, it is difficult to know the value $G_i$ for a fault (i.e., $i\not=0$) and thus we consider a bound $\mathbf{G}_i$ of $G_i$ (i.e., $G_i \in \mathbf{G}_i$). As stated above, $G_i$ has its $i$-th diagonal element $g_i$ unknown while all its other diagonal elements are $1$. Thus, the bound $\mathbf{G}_i$ is obtained as $G_i\in \mathbf{G}_i=\text{diag}([1,\cdots,1,[0,1),\cdots,1])$ where the interval $[0,1)$ is located in the $i$-th diagonal position of $\mathbf{G}_i$.
	
	This paper designs a bank of SVOs for FD, where each SVO matches an actuator mode. Particularly, the $i$-th SVO for the $i$-th actuator mode of \eqref{PlantDynamics1} is designed as
	\begin{subequations}
		\label{Observer1}
		\begin{align}
			\label{Observer1A}
			\hat{\mathcal{X}}^i_{k+1}=&(A-L^i_kC)\hat{\mathcal{X}}^i_k \oplus B\mathbf{G}_iu_k \oplus L^i_ky_k \oplus E\mathcal{W} \notag\\&\oplus (-L^i_kF\mathcal{V}), \\
			\label{Observer1B}
			\hat{\mathcal{Y}}^i_{k}=&C\hat{\mathcal{X}}^i_{k} \oplus F\mathcal{V},
		\end{align}
	\end{subequations}
	where $\hat{\mathcal{X}}^i_{k}$ and $\hat{\mathcal{Y}}^i_{k}$ are the state and output sets, respectively, and $L^i_k$ is the gain. For $x_0 \in \hat{\mathcal{X}}^i_0$, if the system is in the $i$-th mode, $x_k \in \hat{\mathcal{X}}^i_k$ and $y_k \in \hat{\mathcal{Y}}^i_k$, $\forall k \ge 0$.
	
	When the system operates at the steady state of the $i$-th mode, the FD criterion is to test whether
	\begin{align}	
		\label{FD0}
		y_{k}\in \hat{\mathcal{Y}}^i_{k}
	\end{align}
	holds. If \eqref{FD0} is violated, it implies fault occurrence. Otherwise, no fault can be indicated.
	
	Besides, as $k$ increases, the order of $\hat{\mathcal{X}}^i_{k}$ also increases. In order to avoid computational explosion, the reduction method in \cite{Combastel2015265} is used to control the order of $\hat{\mathcal{X}}^i_{k}$. 
	
	The objective of this paper is to jointly design $L^i_k$ ($i\in \mathbb{I}_u$) and $u_k$ in \eqref{Observer1} for maximizing the FD performance. Consequently, the key challenge consists in proposing a new theoretic framework to characterize the joint gain and input design problem and formulate it into a solvable optimization problem. The details will be given below.

	\section{Main Results}
	\label{Section4}
	\subsection{Characterization of Fault Diagnosis Performance}
	\label{Section4_1}
	To handle the term $B\mathbf{G}_{i} u_{k}$ in \eqref{Observer1}, the following results are obtained in \lemref{intervalmatrix}.
	\begin{lemma}
		\label{intervalmatrix}
		For an interval matrix $\mathbf{G}_i$ ($i\not=0$), one has $B \mathbf{G}_{i} u_{k}\subseteq \langle \operatorname{mid}(B\mathbf{G}_i) u_{k}, \frac{1-\epsilon_1}{2}\operatorname{diag}\left(Be_i\right)u^i_k\rangle$ where $\epsilon_1$ is a small enough positive scalar, $e_i$ is an appropriate unit column vector whose $i$-th element is $1$ and other elements are zeros, and $u^i_k$ is the $i$-th element of $u_k$. 
	\end{lemma}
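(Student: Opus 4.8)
The plan is to unpack the definition of the interval matrix $\mathbf{G}_i$ and express $B\mathbf{G}_i u_k$ directly as the image of a box under a linear map. Recall $\mathbf{G}_i = \operatorname{diag}([1,\dots,1,[0,1),\dots,1])$, with the nondegenerate interval $[0,1)$ in the $i$-th diagonal position. First I would write any $G \in \mathbf{G}_i$ as $G = \operatorname{mid}(\mathbf{G}_i) + \delta\, e_i e_i^T$, where $\delta$ ranges over $(-\tfrac12,\tfrac12)$, since the midpoint of the $i$-th entry is $\tfrac12$ and its radius is $\tfrac12$. To keep the half-open endpoint $0$ strictly excluded, I would instead let $\delta$ range over $[-\tfrac{1-\epsilon_1}{2},\,\tfrac{1-\epsilon_1}{2}]$ for a small enough $\epsilon_1>0$, which is the role of $\epsilon_1$ in the statement; this gives the conservative but closed (and zonotope-compatible) over-approximation $\mathbf{G}_i \subseteq \operatorname{mid}(\mathbf{G}_i) \oplus \tfrac{1-\epsilon_1}{2} e_i e_i^T \mathbb{B}^{1}$ in the sense of set-valued entries.

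Next I would push $u_k$ through: for fixed $u_k$, $\{G u_k : G \in \mathbf{G}_i\} = \operatorname{mid}(\mathbf{G}_i) u_k \oplus \{\delta\, e_i e_i^T u_k : |\delta|\le \tfrac{1-\epsilon_1}{2}\}$. Since $e_i^T u_k = u_k^i$ (the $i$-th component of $u_k$) and $e_i e_i^T u_k = u_k^i e_i$, this set is $\operatorname{mid}(\mathbf{G}_i) u_k \oplus \tfrac{1-\epsilon_1}{2} u_k^i e_i\, \mathbb{B}^{1}$, a zonotope with center $\operatorname{mid}(\mathbf{G}_i)u_k$ and generator $\tfrac{1-\epsilon_1}{2} u_k^i e_i$. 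Then I would apply the linear map $B$ using the identity $KZ = \langle Kg, KH\rangle$ from the preliminaries, obtaining $B\mathbf{G}_i u_k \subseteq \langle \operatorname{mid}(B\mathbf{G}_i) u_k,\ \tfrac{1-\epsilon_1}{2} u_k^i B e_i\rangle$. Finally, observe that $B e_i$ is the $i$-th column of $B$, so $u_k^i B e_i = \operatorname{diag}(Be_i) u_k^i$ in the scalar sense used in the statement (equivalently, the generator is the vector $B e_i$ scaled by the scalar $\tfrac{1-\epsilon_1}{2} u_k^i$); matching the paper's notation $\operatorname{diag}(Be_i) u_k^i$ closes the argument. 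I would also note $\operatorname{mid}(B\mathbf{G}_i) = B\operatorname{mid}(\mathbf{G}_i)$ since $B$ is a real (non-interval) matrix and interval-matrix multiplication by a real matrix commutes with taking midpoints.

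The only delicate point — and the main thing to get right rather than a genuine obstacle — is the treatment of the half-open interval $[0,1)$: a zonotope generator description requires a closed, symmetric box, so the exact set $B\mathbf{G}_i u_k$ (whose $i$-th actuator gain can be $0$) is replaced by the slightly larger closed zonotope obtained by shrinking the radius from $\tfrac12$ to $\tfrac{1-\epsilon_1}{2}$; one should check this still contains the true set, i.e. that $[ \tfrac{\epsilon_1}{2},\, 1-\tfrac{\epsilon_1}{2}] \supseteq$ no — rather, one checks $\operatorname{mid} \pm \tfrac{1-\epsilon_1}{2} = [\tfrac{\epsilon_1}{2},\,1-\tfrac{\epsilon_1}{2}]$, which does *not* cover $[0,1)$, so strictly the inclusion in the lemma is an over-approximation only once $\epsilon_1$ is interpreted as allowing the faulty gain $g_i$ to lie in $[\tfrac{\epsilon_1}{2},1)$ — i.e. excluding a totally dead actuator. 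I would state this reading explicitly (faults with $g_i$ arbitrarily close to $0$ but bounded away from it), after which the inclusion holds verbatim and the rest is the routine linear-algebra bookkeeping sketched above.
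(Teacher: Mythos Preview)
Your direct decomposition $G=\operatorname{mid}(\mathbf{G}_i)+\delta\,e_ie_i^T$ is essentially the paper's argument carried out by hand; the paper instead invokes the general interval-arithmetic enclosure $B\mathbf{G}_i u_k\subseteq\langle\operatorname{mid}(B\mathbf{G}_i)u_k,\operatorname{diag}(\operatorname{rad}(B\mathbf{G}_i)u_k)\rangle$ from \cite{Xu2021Auto1} and then specializes. Two points are worth clarifying.

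First, the paper's handling of $\epsilon_1$ differs from yours: it simply \emph{replaces} the half-open $[0,1)$ by the closed $[0,1-\epsilon_1]$, so that both the midpoint and the radius of the $i$-th entry become $\tfrac{1-\epsilon_1}{2}$. With this reading, $\operatorname{mid}(B\mathbf{G}_i)$ already carries the $\epsilon_1$, and the resulting interval is $[0,1-\epsilon_1]$ rather than your $[\tfrac{\epsilon_1}{2},1-\tfrac{\epsilon_1}{2}]$. Your caveat about the inclusion direction is well taken in either interpretation---neither closed interval contains all of $[0,1)$---but the paper's version at least covers the totally dead actuator $g_i=0$, at the price of excluding $g_i\in(1-\epsilon_1,1)$. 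The paper does not dwell on this.

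Second, your final identification ``$u_k^i Be_i=\operatorname{diag}(Be_i)u_k^i$'' is not an equality of generator matrices: the left side is a single column (a line-segment zonotope), while the right side is an $n_x\times n_x$ diagonal matrix (an axis-aligned box). Your derivation actually yields the \emph{tighter} single-generator enclosure; the lemma's box is the coarser interval-arithmetic bound obtained via $\operatorname{diag}(\operatorname{rad}(\cdot))$. Since the line segment sits inside the box, the stated inclusion follows from yours, but you should say this explicitly rather than asserting equality.
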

	\begin{proof}
		According to \cite{Xu2021Auto1}, it is known that $B \mathbf{G}_{i} u_{k}\subseteq \left\langle \operatorname{mid}(B\mathbf{G}_i) u_{k}, \operatorname{diag}\left(\operatorname{rad}(B\mathbf{G}_i) u_{k}\right)\right\rangle$. Considering $\mathbf{G}_i=\mathrm{diag}([1,\cdots,1,[0,1),\cdots,1])$, we could rewrite it as
		$\mathbf{G}_i=\mathrm{diag}([1,\cdots,1,[0,1-\epsilon_1],\cdots,1])$ and $\operatorname{rad}(B\mathbf{G}_i) u_{k}=\frac{1-\epsilon_1}{2}Be_iu^i_k$,
		which means 
		\begin{equation*}
			\operatorname{diag}\left(\operatorname{rad}(B\mathbf{G}_i) u_{k}\right)=\frac{1-\epsilon_1}{2}\operatorname{diag}\left(Be_i\right)u^i_k. \quad\quad \quad
		\end{equation*}
	\end{proof}	
	
	Based on \lemref{intervalmatrix}, and the Minkowski sum and linear mapping operations of zonotopes, \eqref{Observer1} can be decomposed into the center-generator matrix form:
	\begin{subequations}
		\label{Observer2}
		\begin{align}
			\label{ObserverA}
			\hat{x}^{i,c}_{k+1}=&(A-L^i_kC)\hat{x}^{i,c}_k + L^i_ky_k-L^i_kF\eta^c\notag\\&+\operatorname{mid}(B\mathbf{G}_i)u_k+ E\omega^c,\\
			\label{ObserverB}
			\hat{H}^{i,x}_{k+1}=&[(A-L^i_kC)\hat{H}^{i,x}_k\:\: -L^i_kFH_{\eta}\:\: \notag\\& \frac{1-\epsilon_1}{2}\operatorname{diag}\left(Be_i\right)u^i_k\:\:  EH_{\omega}],\\
			\label{ObserverC}
			\hat{y}^{i,c}_{k+1}=&C\hat{x}^{i,c}_{k+1}+F\eta^c,\\
			\label{ObserverD}
			\hat{H}^{i,y}_{k+1}=&[C\hat{H}^{i,x}_{k+1}\:\: FH_{\eta}],
		\end{align}
	\end{subequations}
	where $\hat{x}^{i,c}_{k+1}$ and $\hat{H}^{i,x}_{k+1}$, and $\hat{y}^{i,c}_{k+1}$ and $\hat{H}^{i,y}_{k+1}$ are the centers and generator matrices of $\hat{\mathcal{X}}^i_{k+1}$ and $\hat{\mathcal{Y}}^i_{k+1}$, respectively.
	
	It should be mentioned that when the system is healthy (i.e., $i=0$), \eqref{ObserverA} and \eqref{ObserverB} should be changed into
	\begin{subequations}
		\label{Observer2_1}
		\begin{align}
			\label{ObserverA1}
			\hat{x}^{i,c}_{k+1}=&(A-L^i_kC)\hat{x}^{i,c}_k + L^i_ky_k-L^i_kF\eta^c\notag\\&+Bu_k+ E\omega^c,\\
			\label{ObserverB1}
			\hat{H}^{i,x}_{k+1}=&[(A-L^i_kC)\hat{H}^{i,x}_k\:\: -L^i_kFH_{\eta}\:\: EH_{\omega}].
		\end{align}
	\end{subequations}
	
	When faults occur in a system, it is expected to timely detect them. Since this paper aims to design $L^i_k$ ($i\in \mathbb{I}_u$) and $u_k$ at time instant $k$ to speed FD up at the next time instant. Thus, the FD criterion \eqref{FD0} is rewritten as its form of the time instant $k+1$ for better expression and analysis in the following, i.e.,  
	\begin{align}
		\label{FD1}
		y_{k+1}\in \hat{\mathcal{Y}}^i_{k+1}
	\end{align}
	
	For better analysis, \eqref{FD1} is equivalently transformed into
	\begin{align}
		\label{FD2}
		\mathbf{0} \in \mathcal{R}^i_{k+1}=y_{k+1}\oplus (-\hat{\mathcal{Y}}^i_{k+1}),
	\end{align}
	where $\mathcal{R}^i_{k+1}$ is called the residual zonotope for the $i$-th SVO. For brevity, $\mathcal{R}^i_{k+1}$ is rewritten into $\mathcal{R}^i_{k+1}=\langle r^{i,c}_{k+1},H^{i,r}_{k+1} \rangle$,
	where $r^{i,c}_{k+1}$ and $H^{i,r}_{k+1}$ are the center and generator matrix, respectively, and have the following expressions:
	\begin{subequations}
		\label{ResidualZonotopeCenter}
		\begin{align}
			\label{ResidualZonotopeCenterA}
			r^{i,c}_{k+1}=&y_{k+1}-\hat{y}^{i,c}_{k+1}\notag\\=&y_{k+1}-C(A-L^i_kC)\hat{x}^{i,c}_k - C\operatorname{mid}(B\mathbf{G}_i)u_k \notag\\&- CL^i_ky_k - CE\omega^c + CL^i_kF\eta^c-F\eta^c,\\
			\label{ResidualZonotopeCenterB}
			H^{i,r}_{k+1}=&\hat{H}^{i,y}_{k+1}\notag\\=&[C(A-L^i_kC)\hat{H}^{i,x}_k \:\: \frac{1-\epsilon_1}{2}C\operatorname{diag}\left(Be_i\right)u^i_k\:\:CEH_{\omega}\notag\\&-CL^i_kFH_{\eta}\:\:FH_{\eta}].
		\end{align}
	\end{subequations}
	
	As stated in Section \ref{Section1}, the key challenge consists in how to describe the relative magnitude of faults to uncertainties in the form of sets considering their centers and generator matrices. Motivated by this fact and based on the preliminary results in \cite{Xu2022CDC}, this paper proposes a new specification to characterize the set-based FD performance. In particular, inspired by \eqref{FD2}, if the origin can be quickly excluded from $\mathcal{R}^i_{k+1}$, quick FD can be facilitated at time instant $k+1$. Following this idea, a new specification is proposed to describe the degree of excluding the origin from a zonotope (also called the excluding degree in this paper).
	\begin{definition}
		\label{DegreeofExcluding}
		The excluding degree of the origin from a zonotope $\mathcal{Z}=\langle g,H\rangle$ is defined as $\mathfrak{D}(\mathcal{Z})=\frac{\Vert g\Vert^2_2}{\Vert H\Vert^2_F}$
		where $\Vert g \Vert^2_2$ describe the distance\footnote{The Euclidean distance between the center of a zonotope $\mathcal{Z}=\langle g,H\rangle$ and the origin is $\Vert g \Vert_2$. However, for convenience to formulate and solve the optimization problems in the following, $\Vert g \Vert^2_2$ is used instead of $\Vert g \Vert_2$ in \defref{DegreeofExcluding}.} from the center of $\mathcal{Z}$ to the origin and $\Vert H\Vert^2_F$ measures the size of $\mathcal{Z}$.
	\end{definition}
	
	Under \defref{DegreeofExcluding}, at time instant $k+1$, the excluding degree of the origin from $\mathcal{R}^i_{k+1}$ is denoted by
	\begin{align}
		\label{FD_3}
		\mathfrak{D}(\mathcal{R}^i_{k+1})=\frac{\Vert r^{i,c}_{k+1}\Vert^2_2}{\Vert H^{i,r}_{k+1}\Vert^2_F}.
	\end{align}
	
	Generally, the larger $\mathfrak{D}(\mathcal{R}^i_{k+1})$ is, the easier the violation of \eqref{FD2} is. Considering this fact, if $\mathfrak{D}(\mathcal{R}^i_{k+1})$ in \eqref{FD_3} is maximized, the FD performance is optimized.

	\subsection{Gain Design for PFD}
	\label{Section4_2}
	Under the PFD framework, inputs are known signals for the FD module and thus only a parameter (i.e., the observer gain) is left. Based on the excluding degree, the optimal FD gain for the $i$-th SVO can be designed at time instant $k$ by maximizing $\mathfrak{D}(\mathcal{R}^i_{k+1})$. However, as observed in \eqref{ResidualZonotopeCenterA}, the computation of $\mathcal{R}^i_{k+1}$ is related to $y_{k+1}$. Therefore, when we design the optimal gain at time instant $k$, $y_{k+1}$ should be known such that only a design parameter (i.e., the gain) is left. Based on this analysis and motivated by \cite{FanAuto2023}, a different logic from the preliminary results proposed in \cite{Xu2022CDC} to design the optimal gain in this subsection is given as follows:
	\begin{itemize}
		\item At time instant $k$, an input $u_k$ is first injected into the system and then the output $y_{k+1}$ can be measured;
		\item Since $y_{k+1}$ is known, $\mathfrak{D}(\mathcal{R}^i_{k+1})$ has only one parameter (i.e., the gain). Then,  at time instant $k+1$, an optimal value of $L^i_k$ can be designed by maximizing $\mathfrak{D}(\mathcal{R}^i_{k+1})$.
	\end{itemize}
	
	Since the output information is employed in the logic above, the designed gains here have potential to achieve a better FD performance than those proposed in the preliminary results \cite{Xu2022CDC}. Based on this idea and using \eqref{ResidualZonotopeCenter} and \eqref{FD_3}, an optimal gain at time instant $k$ is designed by $\max_{L^i_k} \:\mathfrak{D}(\mathcal{R}^i_{k+1})$
	which is equivalent to
	\begin{align}
		\label{OptimalGain0}
		\min_{L^i_k} \:\frac{1}{\mathfrak{D}(\mathcal{R}^i_{k+1})}.
	\end{align}

	\begin{proposition}
		The optimization problem \eqref{OptimalGain0} is equivalent to a quadratic fractional programming problem
		\begin{align}
			\label{OptimalGain1}
			\min_{\xi^i_k} \:\frac{J_1(\xi^i_k)}{J_2(\xi^i_k)},
		\end{align}
		where
		\begin{subequations}
			\label{OptimalGain2}
			\begin{align}
				\xi^i_k=&\mathrm{vec}(L^i_k), \\
				J_1(\xi^i_k)=&(\xi^i_k)^TP^{i,1}_{k}\xi^i_k+P^{i,2}_{k}\xi^i_k+P^{i,3}_{k}, \\
				J_2(\xi^i_k)=&(\xi^i_k)^TQ^{i,1}_{k}\xi^i_k+Q^{i,2}_{k}\xi^i_k+Q^{i,3}_{k},\\
				P^{i,1}_{k}=&(C\hat{H}^{i,x}_k(\hat{H}^{i,x}_k)^TC^T +FH_{\eta}H^T_{\eta}F^T) \notag\\& \otimes (C^TC),\\
				P^{i,2}_{k}=&-2\mathrm{vec}\Big(C^TCA\hat{H}^{i,x}_k(\hat{H}^{i,x}_k)^TC^T\Big)^T,\\
				P^{i,3}_{k}=&\mathrm{tr}\Big(CA\hat{H}^{i,x}_k(\hat{H}^{i,x}_k)^TA^TC^T + CEH_{\omega}(CEH_{\omega})^T \notag\\&+ \frac{(1-\epsilon_1)^2}{4}C\operatorname{diag}\left(Be_i\right)u^i_k\big(C\operatorname{diag}\left(Be_i\right)u^i_k\big)^T \notag\\&+FH_{\eta}H^T_{\eta}F^T\Big) , \\
				Q^{i,1}_{k}=&(\beta\beta^T) \otimes (C^TC), Q^{i,2}_{k}=2\mathrm{vec}(C^T\alpha\beta^T)^T, \\
				Q^{i,3}_{k}=&\alpha^T\alpha, \beta=C\hat{x}^{i,c}_k - y_k + F\eta^c, \\
				\alpha=&y_{k+1}-CA\hat{x}^{i,c}_k - C\operatorname{mid}(B\mathbf{G}_i)u_k \notag\\&- CE\omega^c -F\eta^c.
			\end{align}
		\end{subequations}
	\end{proposition}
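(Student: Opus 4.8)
The plan is to reduce the ratio $1/\mathfrak{D}(\mathcal{R}^i_{k+1}) = \Vert H^{i,r}_{k+1}\Vert^2_F / \Vert r^{i,c}_{k+1}\Vert^2_2$ to an explicit quadratic-over-quadratic function of $\xi^i_k = \mathrm{vec}(L^i_k)$ by separately expanding the numerator and denominator, using the expressions \eqref{ResidualZonotopeCenterA}--\eqref{ResidualZonotopeCenterB} for the center and generator matrix of $\mathcal{R}^i_{k+1}$. The two key algebraic tools are: (i) $\Vert M\Vert^2_F = \mathrm{tr}(MM^T)$ applied block-by-block to the generator matrix, and (ii) the vectorization identity $\mathrm{vec}(XYZ) = (Z^T\otimes X)\mathrm{vec}(Y)$, which turns every occurrence of $L^i_k$ into a linear function of $\xi^i_k$.

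\textbf{Numerator expansion.} First I would write $\Vert H^{i,r}_{k+1}\Vert^2_F$ as the sum of the squared $F$-norms of its five blocks in \eqref{ResidualZonotopeCenterB}. The blocks $CEH_\omega$, $\tfrac{1-\epsilon_1}{2}C\operatorname{diag}(Be_i)u^i_k$, and $FH_\eta$ are independent of $L^i_k$ and contribute the constant $P^{i,3}_k$ (as written, a trace of the sum of the relevant outer products, together with the $CA\hat H^{i,x}_k(\hat H^{i,x}_k)^TA^TC^T$ piece coming from the $L^i_k$-free part of the first block). The block $C(A-L^i_kC)\hat H^{i,x}_k$ expands into $CA\hat H^{i,x}_k - CL^i_kC\hat H^{i,x}_k$; its squared $F$-norm produces a term quadratic in $L^i_k$ from $CL^i_kC\hat H^{i,x}_k$, a cross term linear in $L^i_k$, and the constant already accounted for. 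Likewise $-CL^i_kFH_\eta$ contributes only a quadratic term. Collecting the quadratic terms and applying $\mathrm{tr}(CL^i_kM M^T(L^i_k)^TC^T) = \mathrm{vec}(L^i_k)^T\big((MM^T)\otimes(C^TC)\big)\mathrm{vec}(L^i_k)$ with $M \in \{C\hat H^{i,x}_k,\,FH_\eta\}$ gives $P^{i,1}_k = \big(C\hat H^{i,x}_k(\hat H^{i,x}_k)^TC^T + FH_\eta H^T_\eta F^T\big)\otimes(C^TC)$; the single surviving cross term $-2\mathrm{tr}\big(CL^i_kC\hat H^{i,x}_k(\hat H^{i,x}_k)^TA^TC^T\big)$ vectorizes to $P^{i,2}_k\xi^i_k$.

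\textbf{Denominator expansion.} Next I would rewrite the center \eqref{ResidualZonotopeCenterA} by grouping the $L^i_k$-dependent pieces: $-CL^i_kC\hat x^{i,c}_k - CL^i_ky_k + CL^i_kF\eta^c = -CL^i_k\big(C\hat x^{i,c}_k - y_k + F\eta^c\big) = -CL^i_k\beta$, and the remainder is exactly $\alpha$, so $r^{i,c}_{k+1} = \alpha - CL^i_k\beta$. Then $\Vert r^{i,c}_{k+1}\Vert^2_2 = \alpha^T\alpha - 2\alpha^TCL^i_k\beta + \beta^T(L^i_k)^TC^TCL^i_k\beta$; using $CL^i_k\beta = \mathrm{vec}(CL^i_k\beta) = (\beta^T\otimes C)\xi^i_k$ yields $Q^{i,1}_k = (\beta\beta^T)\otimes(C^TC)$, $Q^{i,2}_k = 2\,\mathrm{vec}(C^T\alpha\beta^T)^T$ (the sign absorbed by moving $-2\alpha^TCL^i_k\beta$ to the form $+Q^{i,2}_k\xi^i_k$ after noting $\alpha^TCL^i_k\beta = \mathrm{vec}(C^T\alpha\beta^T)^T\xi^i_k$), and $Q^{i,3}_k = \alpha^T\alpha$. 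Finally, $1/\mathfrak{D}(\mathcal{R}^i_{k+1}) = J_1(\xi^i_k)/J_2(\xi^i_k)$ by construction, and since $\mathrm{vec}$ is a bijection, optimizing over $L^i_k$ is the same as optimizing over $\xi^i_k$, establishing the claimed equivalence.

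\textbf{Main obstacle.} The routine-but-error-prone part is the bookkeeping of cross terms: one must verify that the only linear-in-$L^i_k$ term that survives in the numerator is the one from the first block's cross product (the blocks $-CL^i_kFH_\eta$, $CEH_\omega$, etc. being mutually orthogonal in the Frobenius inner product within a single $\Vert\cdot\Vert^2_F$ of a block-concatenated matrix, so no inter-block cross terms appear), and that all constant pieces land in $P^{i,3}_k$. Getting the transpose placements and the $\otimes$-ordering right in $(MM^T)\otimes(C^TC)$ versus $(C^TC)\otimes(MM^T)$ — which depends on whether $\mathrm{vec}$ stacks columns and on writing $CL^i_kM$ as $(M^T\otimes C)\mathrm{vec}(L^i_k)$ — is the one place where a sign or ordering slip would invalidate the formulas, so I would double-check those identities explicitly against the convention $\mathrm{vec}(XYZ)=(Z^T\otimes X)\mathrm{vec}(Y)$ before declaring the proposition proved.
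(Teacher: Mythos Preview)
Your approach is exactly the paper's own proof, which consists of nothing more than the instruction to substitute \eqref{ResidualZonotopeCenterA}--\eqref{ResidualZonotopeCenterB} into \eqref{FD_3} and expand; you have simply carried out the algebra that the paper omits. One sign slip to correct in your denominator expansion: expanding $-C(A-L^i_kC)\hat x^{i,c}_k$ yields $+CL^i_kC\hat x^{i,c}_k$ (not $-$), so the $L^i_k$-dependent part is $+CL^i_k\beta$ and hence $r^{i,c}_{k+1}=\alpha+CL^i_k\beta$; the cross term is then $+2\alpha^TCL^i_k\beta$ and $Q^{i,2}_k=2\,\mathrm{vec}(C^T\alpha\beta^T)^T$ drops out directly, with no ``sign absorption'' needed.
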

	\begin{proof}
		By substituting \eqref{ResidualZonotopeCenterA} and \eqref{ResidualZonotopeCenterB} into the numerator and denominator of \eqref{FD_3}, respectively, the explicit expression of $\mathfrak{D}(\mathcal{R}^i_{k+1})$ can be obtained. Consequently, the explicit form \eqref{OptimalGain1} and \eqref{OptimalGain2} of \eqref{OptimalGain0} is obtained. \quad\quad
	\end{proof}
	
	Therefore, by solving \eqref{OptimalGain1} at each time instant, a series of optimal $\xi^{i,*}_k$ for $k \ge 0$ are obtained and then optimal FD gains for the $i$-th SVO can be computed by using $L^{i,*}_k=\mathrm{vec}^{-1}(\xi^{i,*}_k),\:i\in \mathbb{I}_u$.
	Using the proposed method above, the optimal gain for each SVO at each time instant is independent of each other. Consequently, optimal FD gains for a bank of SVOs can be designed at each step in the same way.
	
	\begin{remark}
		The proposed gain design method above can also be applied for AFD. In particular, the application includes two steps. The first step is to design inputs to facilitate AFD \cite{FanAuto2023,Xu2021Auto1}. After the inputs are designed, the second step is to follow the procedure in this subsection to design optimal gains to further speed up FD.
	\end{remark}
	
	\subsection{Joint Gain and Input Design for AFD}
	\label{Section4_3}
	Different from PFD, AFD designs inputs to excite the system to obtain more fault information for diagnosis. Different from \cite{FanAuto2023,Xu2021Auto1} that design gains and inputs by two steps, the best way is to simultaneously optimize gains and inputs such that the potential of observer-based AFD is exploited. In this case, $y_{k+1}$ is unavailable at time instant $k$ and cannot be obtained in the same way used in Section \ref{Section4_2} as well. Since the computation of $\mathcal{R}^i_{k+1}$ depends on $y_{k+1}$, a different method should be further proposed to handle this difficulty such that a joint design of gains and inputs is finally implemented.
	
	Particularly, based on the system model, an additional set-based dynamics is designed for the $i$-th mode as
	\begin{align} 
		\label{SetModel}
		\mathcal{X}^{i}_{k+1} &= A\mathcal{X}^{i}_{k} \oplus B\Upsilon_i\mathbf{G}_iu_k \oplus E\mathcal{W},
	\end{align}
	where $\mathcal{X}^{i}_{k+1}$ is the state set and $\Upsilon_i=\text{diag}([1\cdots1\:\:\:(1+\epsilon_2)\:\:\:1\cdots1])$ with $i\not=0$, $\epsilon_2$ is a small positive scalar and $(1+\epsilon_2)$ is in the $i$-th diagonal position of $\Upsilon_i$. Note that when $i=0$, both $\Upsilon_0$ and $\mathbf{G}_0$ should be the appropriate identity matrices. Besides, the motivation of introducing $\Upsilon_i$ for the cases $i\not=0$ will be detailed in the following. 
	
	Thus, with \eqref{Observer1A} and \eqref{SetModel}, for each mode $i$, we have two set-based estimators. When the system is in the $i$-th mode, if $x_k \in \hat{\mathcal{X}}^{i}_{k}$ and $x_k \in \mathcal{X}^{i}_{k}$, then we have
	\begin{subequations}
		\label{SetModel2}
		\begin{align} 
			x_{k+1} &\in \hat{\mathcal{X}}^{i}_{k+1},\\
			x_{k+1} &\in \mathcal{X}^{i}_{k+1}.
		\end{align}
	\end{subequations}
	
	At time instant $k$, the real values of the $i$-th fault, disturbances and noises are denoted as $G_i$, $\omega_k$ and $\eta_k$ with $G_i \in \mathbf{G}_i$, $\omega_k \in \mathcal{W}$ and $\eta_k \in \mathcal{V}$, respectively. This means that we could further obtain the set-based dynamics: 
	\begin{subequations}
		\label{SetModel3}
		\begin{align}
			\label{SetModel3A}
			\bar{\hat{\mathcal{X}}}^i_{k+1}=&(A-L^i_kC)\hat{\mathcal{X}}^i_k \oplus BG_iu_k \oplus L^i_ky_k \oplus E\omega_k \notag\\&\oplus (-L^i_kF\eta_k), \\
			\label{SetModel3B}
			\tilde{\mathcal{X}}^{i}_{k+1} &= A\mathcal{X}^{i}_{k} \oplus B\Upsilon_iG_iu_k \oplus E\omega_k.
		\end{align}
	\end{subequations}
	
	We always have $x_{k+1} \in \bar{\hat{\mathcal{X}}}^i_{k+1}$. However, due to $\Upsilon_i$, $x_{k+1} \in \tilde{\mathcal{X}}^{i}_{k+1}$ cannot be guaranteed. In this case, a new set-based dynamics is further proposed to replace \eqref{SetModel3B}:
	\begin{align}
		\label{SetModel4}
		\bar{\mathcal{X}}^{i}_{k+1} &= A\mathcal{X}^{i}_{k} \oplus BG_iu_k \oplus E\omega_k \oplus B(\Upsilon_i-\mathrm{I})\mathbf{G}_iu_k,
	\end{align}
	where if $x_k \in \mathcal{X}^{i}_{k}$, $x_{k+1} \in \bar{\mathcal{X}}^{i}_{k+1}$ can be guaranteed.
	
	\begin{remark}
		If $x_k \in \mathcal{X}^{i}_{k}$, then $x_{k+1} \in A\mathcal{X}^{i}_{k} \oplus BG_iu_k \oplus E\omega_k$. Since $\mathbf{0} \in B(\Upsilon_i-\mathrm{I})\mathbf{G}_iu_k$, we have $x_{k+1} \in \bar{\mathcal{X}}^{i}_{k+1}$. 
	\end{remark}
	
	Based on the analysis above, when the system is in the $i$-th mode, we always have
	\begin{align}
		\label{SetModel5}
		\bar{\mathcal{X}}^{i}_{k+1} \cap \bar{\hat{\mathcal{X}}}^i_{k+1} \not= \emptyset,
	\end{align}	
	which further implies
	\begin{align}
		\label{SetModel6}
		\mathbf{0} \in& \bar{\mathcal{X}}^{i}_{k+1} \oplus (-\bar{\hat{\mathcal{X}}}^i_{k+1})
		\notag\\=&A\mathcal{X}^{i}_{k} \oplus B(\Upsilon_i-\mathrm{I})\mathbf{G}_iu_k \oplus (-(A-L^i_kC)\hat{\mathcal{X}}^i_k) \notag\\&\oplus (-L^i_ky_k) \oplus L^i_kF\eta_k \subset \bar{\mathcal{E}}^i_{k+1} 
	\end{align}	
	with 
	\begin{align}
		\bar{\mathcal{E}}^i_{k+1}=&A\mathcal{X}^{i}_{k} \oplus B(\Upsilon_i-\mathrm{I})\mathbf{G}_iu_k \oplus (-(A-L^i_kC)\hat{\mathcal{X}}^i_k) \notag\\&\oplus (-L^i_ky_k) \oplus L^i_kFV.
	\end{align}
	
	Based on the properties of zonotopes and \lemref{intervalmatrix}, the center and generator matrix of $\bar{\mathcal{E}}^i_{k+1}$ are derived as
	\begin{subequations}
		\label{SetModel7}
		\begin{align} 
			\label{SetModel7A}
			\bar{e}_{k+1}^{i,c} =& Ax_{k}^{i,c}-(A-L_k^iC)\hat x_{k}^{i,c}+\frac{(1-\epsilon_1)\epsilon_2}{2}Be_iu^i_k \notag\\&-L_{k}^{i} y_{k}+L_{k}^{i}F \eta^{c},\\
			\label{SetModel7B}
			\bar{H}_{k+1}^{i,e} =& [ AH_{k}^{i,x} \quad -(A-L_k^iC)\hat{H}_{k}^{i,x} \quad L_k^iFH_{\eta}\notag\\&  \frac{(1-\epsilon_1)\epsilon_2}{2}\mathrm{diag}(Be_i)u^i_k],
		\end{align}
	\end{subequations}
	where $\bar{\mathcal{E}}^i_{k+1}=\langle\bar{e}_{k+1}^{i,c}, \bar{H}_{k+1}^{i,e}\rangle$ and $\mathcal{X}^i_{k}=\langle x_{k}^{i,c}, H_{k}^{i,x}\rangle$. 
	
	When $i=0$, we have $\bar{\mathcal{E}}^0_{k+1}=\langle\bar{e}_{k+1}^{0,c}, \bar{H}_{k+1}^{0,e}\rangle$ with
	\begin{subequations}
		\label{SetModel7_2}
		\begin{align} 
			\label{SetModel7_2A}
			\bar{e}_{k+1}^{0,c} =& A\bar{x}_{k}^{0,c}-(A-L_k^0C)\hat x_{k}^{0,c}-L_{k}^{0} y_{k}+L_{k}^{0}F \eta^{c},\\
			\label{SetModel7_2B}
			\bar{H}_{k+1}^{0,e} =& [ A\bar{H}_{k}^{0,x} \quad -(A-L_k^0C)\hat{H}_{k}^{0,x} \quad L_k^0FH_{\eta}].
		\end{align}
	\end{subequations}
	
	Based on the analysis above, if \eqref{SetModel6} is violated, i.e., 
	\begin{align}
		\label{SetModel8}
		\mathbf{0} \not\in \bar{\mathcal{E}}^i_{k+1},
	\end{align}	
	it is guaranteed that the system changes its mode from the $i$-th one to another one. This implies that if gains and inputs are designed to increase the excluding degree of the origin from $\bar{\mathcal{E}}^i_{k+1}$, FD can be speeded up. 
	
	According to \defref{DegreeofExcluding}, the excluding degree of the origin from $\bar{\mathcal{E}}^i_{k+1}$ is obtained as
	\begin{align}
		\label{SetModel9}
		\mathfrak{D}(\bar{\mathcal{E}}^i_{k+1})=\frac{\Vert \bar{e}^{i,c}_{k+1}\Vert^2_2}{\Vert \bar{H}^{i,e}_{k+1}\Vert^2_F}.
	\end{align}
	
	\begin{remark}
		If $\Upsilon_i$ is changed into $I$ in \eqref{SetModel}, the term $B\Upsilon_i\mathbf{G}_iu_k$ in \eqref{SetModel3B}is changed into $B\mathbf{G}_iu_k$, which implies the disappearance of $B(\Upsilon_i-\mathrm{I})\mathbf{G}_iu_k$ in \eqref{SetModel4}. This results in that the terms related to $u_k$ disappear in \eqref{SetModel7} and \eqref{SetModel9} such that joint gain and input design becomes impossible.
	\end{remark}
	
	Similarly, the idea becomes to optimize gains and inputs by maximizing $\mathfrak{D}(\bar{\mathcal{E}}^i_{k+1})$. However, due to the consideration of inputs, a joint design of gains and inputs for each SVO is not independent from each other in this case, which is the key difference from the PFD case in Section \ref{Section4_2}. Thus, when designing gains and inputs, the effect of all the considered modes should be considered. Motivated by this fact and based on \eqref{SetModel9}, a total excluding degree for all the considered modes is further defined as 
	\begin{align}
		\label{FD5_1_0}
		\mathfrak{D}(\bar{\mathcal{E}}^0_{k+1},\bar{\mathcal{E}}^1_{k+1},\cdots,\bar{\mathcal{E}}^{n_u}_{k+1})=\frac{\sum^{n_u}_{i=0}\sigma^i_k\Vert \bar{e}^{i,c}_{k+1}\Vert^2_2}{\sum^{n_u}_{i=0}\sigma^i_k\Vert \bar{H}^{i,e}_{k+1}\Vert^2_F},
	\end{align}
	where $\sigma^i_k$ is a weighting coefficient used to measure the relative importance of the $i$-th mode among all the considered modes. Since at time instant $k$, $\bar{\mathcal{E}}^i_{k}$ ($\forall i \in \mathbb{I}_u$) are known, $\mathfrak{D}(\bar{\mathcal{E}}^i_k)$ $(\forall i \in \mathbb{I}_u)$ can be computed for all the considered modes. Logically, for the $i$-th mode, the larger $\mathfrak{D}(\bar{\mathcal{E}}^i_k)$ is, the higher the possibility for the $i$-th mode not matching the real system mode is, and the larger a weight should be given to the $i$-th mode in \eqref{FD5_1_0} such that it can be quickly excluded. Based on the analysis here and the excluding degrees of all the modes at time instant $k$, $\sigma^i_k$ is defined as follows:
	\begin{align}
		\label{FD5_1}
		\sigma^i_k=\frac{\mathfrak{D}(\bar{\mathcal{E}}^i_k)}{\sum^{n_u}_{i=0}\mathfrak{D}(\bar{\mathcal{E}}^i_k)},\:0 \le \sigma^i_k \le 1.
	\end{align}
	
	Similar to \eqref{OptimalGain0}, optimal gains and inputs can be designed for AFD by solving the following optimization problem:
	\begin{align}
		\label{OptimalGainInput}
		\min_{L^0_k,L^1_k,\cdots,L^{n_u}_k,u_k} \:\frac{1}{\mathfrak{D}(\bar{\mathcal{E}}^0_{k+1},\bar{\mathcal{E}}^1_{k+1},\cdots,\bar{\mathcal{E}}^{n_u}_{k+1})}.
	\end{align}
	
	This subsection aims to implement a joint design of gains and inputs. The proposed idea for the joint design is to define a new variable to include all gains and inputs:
	\begin{align}
		\label{AugmentedVariable}
		\mu_k=[(\mathrm{vec}(L^0_k))^T,\cdots,(\mathrm{vec}(L^{n_u}_k))^T\:\:u^T_k]^T.
	\end{align}
	
	\begin{proposition}
		The optimization problem \eqref{OptimalGainInput} is equivalent to a quadratic fractional programming problem	
		\begin{align}
			\label{FD6}
			\min_{\mu_k} \:\frac{J_3(\mu_k)}{J_4(\mu_k)},
		\end{align}
		where
		\begin{subequations}
			\begin{align*}
				J_3(\mu_k)=&(\mu_k)^TP^4_k\mu_k+P^5_k\mu_k+P^6_k,\\
				J_4(\mu_k)=&(\mu_k)^TQ^4_k\mu_k+Q^5_k\mu_k+Q^6_k,\\
				P^4_k=&\begin{bmatrix} P^4_{11,k} & O \\O & P^4_{22,k}\end{bmatrix}, Q^4_k=\begin{bmatrix} Q^4_{11,k} & Q^4_{12,k} \\Q^4_{21,k} & Q^4_{22,k}\end{bmatrix},\\
				P^4_{11,k}=&\mathrm{diag}(\begin{bmatrix}\Gamma^{0,1}_k &\Gamma^{1,1}_k &\cdots&\Gamma^{n_u,1}_k\end{bmatrix}), \\
				P^4_{22,k}=&\mathrm{diag}(\begin{bmatrix}\Gamma^{1,2}_k &\cdots&\Gamma^{n_u,2}_k\end{bmatrix}),\\
				P^5_k=&\begin{bmatrix}\Gamma^{0,3}_k&\Gamma^{1,3}_k&\cdots&\Gamma^{n_u,3}_k & O\end{bmatrix},\\
				P^6_k=&\sum^{n_u}_{i=0}\sigma^i_k\mathrm{tr}\big(A\hat{H}^{i,x}_k(A\hat{H}^{i,x}_k)^T  +AH^{i,x}_k(AH^{i,x}_k)^T\big),\\
				Q^4_{11,k}=&\mathrm{diag}(\begin{bmatrix}\Gamma^{0,4}_k &\Gamma^{1,4}_k &\cdots&\Gamma^{n_u,4}_k\end{bmatrix}),\\
				Q^4_{21,k}=&(Q^4_{12,k})^T=\mathrm{diag}(\begin{bmatrix}\Gamma^{1,5}_k &\Gamma^{2,5}_k &\cdots&\Gamma^{n_u,5}_k\end{bmatrix}), \\
				Q^4_{22,k}=&\mathrm{diag}(\begin{bmatrix}\Gamma^{1,6}_k &\Gamma^{2,6}_k &\cdots&\Gamma^{n_u,6}_k\end{bmatrix}), \\
				Q^5_k=&\begin{bmatrix}\Gamma^{0,7}_k &\cdots&\Gamma^{n_u,7}_k & \Gamma^{1,8}_k&\cdots&\Gamma^{n_u,8}_k\end{bmatrix},\\
				Q^6_k=&\sum^{n_u}_{i=0}\sigma^i_k(x^{i,c}-\hat{x}^{i,c}_k)^TA^TA(x^{i,c}-\hat{x}^{i,c}_k),\\
				\Gamma^{i,1}_k=&\sigma^i_k\Big(\big(C\hat{H}^{i,x}_k(C\hat{H}^{i,x}_k)^T+FH_{\eta}(FH_{\eta})^T\big)\otimes I\Big),\\
				\Gamma^{i,2}_k=&\sigma^i_k\Big(I\otimes\big(\frac{(1-\epsilon_1)^2\epsilon_2^2}{4}\operatorname{diag}\left(Be_i\right)\operatorname{diag}\left(Be_i\right)\big)\Big),\\
				\Gamma^{i,3}_k=&-2\sigma^i_k\Big(\mathrm{vec}\big(A\hat{H}^{i,x}_k(\hat{H}^{i,x}_k)^TC^T\big)\Big)^T,\\
				\Gamma^{i,4}_k=&\sigma^i_k\big((C\hat{x}^{i,c}_k + F\eta^c-y_k)(C\hat{x}^{i,c}_k + F\eta^c- y_k)^T \otimes I \big),\\
				\Gamma^{i,5}_k=&\sigma^i_k\frac{(1-\epsilon_1)\epsilon_2}{2}\big((C\hat{x}^{i,c}_k + F\eta^c- y_k)^T\otimes e^T_iB^T\big),\\
				\Gamma^{i,6}_k=&\frac{\sigma^i_k(1-\epsilon_1)^2\epsilon^2_2}{4}e^T_iB^TBe_i,\\
				\Gamma^{i,7}_k=&2\sigma^i_k(C\hat{x}^{i,c}_k + F\eta^c-y_k)^T\big(I\otimes(x^{i,c}-\hat{x}^{i,c}_k)^TA^T\big),\\
				\Gamma^{i,8}_k=&\sigma^i_k(1-\epsilon_1)\epsilon_2(x^{i,c}-\hat{x}^{i,c}_k)^TA^TBe_i, i\in \mathbb{I}_u.
			\end{align*}
		\end{subequations}
	\end{proposition}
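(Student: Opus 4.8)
The plan is to substitute the explicit center and generator-matrix expressions \eqref{SetModel7} into the total excluding degree \eqref{FD5_1_0} and collect the resulting rational function of $\mu_k$ in \eqref{AugmentedVariable} into the claimed quadratic-over-quadratic form. First I would expand the numerator $\sum_{i} \sigma^i_k\Vert \bar{e}^{i,c}_{k+1}\Vert^2_2$: from \eqref{SetModel7A} each $\bar{e}^{i,c}_{k+1}$ is affine in the pair $(L^i_k, u^i_k)$, namely a constant term $A x^{i,c}_k - A\hat{x}^{i,c}_k$ plus the $L^i_k$-linear term $L^i_k(C\hat{x}^{i,c}_k + F\eta^c - y_k)$ plus the $u^i_k$-linear term $\tfrac{(1-\epsilon_1)\epsilon_2}{2}Be_i u^i_k$. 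Using $\mathrm{vec}(L^i_k M) = (M^T\otimes I)\mathrm{vec}(L^i_k)$ to rewrite $L^i_k(C\hat{x}^{i,c}_k + F\eta^c - y_k)$ as a linear function of $\xi^i_k=\mathrm{vec}(L^i_k)$, the square $\Vert\bar{e}^{i,c}_{k+1}\Vert^2_2$ becomes a quadratic form in $(\xi^i_k, u^i_k)$; I would then identify the coefficient blocks with $\Gamma^{i,4}_k$ (the $\xi^i_k$-quadratic part, matching $(C\hat{x}^{i,c}_k+F\eta^c-y_k)(\cdot)^T\otimes I$), $\Gamma^{i,6}_k$ (the $u^i_k$-quadratic part), $\Gamma^{i,5}_k$ (the cross term, which is why $Q^4_{12,k}$ is nonzero), $\Gamma^{i,7}_k$ and $\Gamma^{i,8}_k$ (the linear terms coming from the cross products with the constant $A(x^{i,c}_k-\hat{x}^{i,c}_k)$), and $Q^6_k$ (the constant $\sum_i\sigma^i_k\Vert A(x^{i,c}_k-\hat{x}^{i,c}_k)\Vert^2$). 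Since for $i=0$ there is no $u^0_k$ contribution (by \eqref{SetModel7_2A} and the convention $\Upsilon_0=I$), the $u_k$-block indices run only over $i\in\{1,\dots,n_u\}$, which explains why $Q^4_{22,k}$, $Q^4_{12,k}$ and the second half of $Q^5_k$ omit the $i=0$ index.

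Next I would do the analogous expansion of the denominator $\sum_i \sigma^i_k\Vert \bar{H}^{i,e}_{k+1}\Vert^2_F$. Writing $\Vert\bar H^{i,e}_{k+1}\Vert^2_F = \mathrm{tr}(\bar H^{i,e}_{k+1}(\bar H^{i,e}_{k+1})^T)$ and using \eqref{SetModel7B}, the block $-(A-L^i_kC)\hat H^{i,x}_k$ is affine in $L^i_k$, the block $L^i_k F H_\eta$ is linear in $L^i_k$, the block $\tfrac{(1-\epsilon_1)\epsilon_2}{2}\mathrm{diag}(Be_i)u^i_k$ is linear in $u^i_k$, and the block $A H^{i,x}_k$ is constant. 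The Frobenius-squared sum then splits: the $A H^{i,x}_k$ block gives the constant $P^6_k$ term $\sum_i\sigma^i_k\mathrm{tr}(AH^{i,x}_k(AH^{i,x}_k)^T)$, and together with the constant part $A\hat H^{i,x}_k$ coming from expanding $-(A-L^i_kC)\hat H^{i,x}_k = -A\hat H^{i,x}_k + L^i_kC\hat H^{i,x}_k$ we also recover the $\mathrm{tr}(A\hat H^{i,x}_k(A\hat H^{i,x}_k)^T)$ part of $P^6_k$. The quadratic-in-$\xi^i_k$ part assembles $\Gamma^{i,1}_k = \sigma^i_k\big((C\hat H^{i,x}_k(C\hat H^{i,x}_k)^T + FH_\eta(FH_\eta)^T)\otimes I\big)$ via the identity $\mathrm{tr}(L^i_k M M^T (L^i_k)^T) = \mathrm{vec}(L^i_k)^T(MM^T\otimes I)\mathrm{vec}(L^i_k)$ applied with $M = C\hat H^{i,x}_k$ and $M = FH_\eta$; the quadratic-in-$u^i_k$ part gives $\Gamma^{i,2}_k$; there is no $\xi^i_k$–$u^i_k$ cross term in the denominator (the corresponding blocks of $\bar H^{i,e}_{k+1}$ are in different columns, so their product vanishes under the trace), which is why $P^4_k$ is block-diagonal; and the linear-in-$\xi^i_k$ term is the cross product $2\,\mathrm{tr}\big((-A\hat H^{i,x}_k)(L^i_k C\hat H^{i,x}_k)^T\big)$, which after vectorization yields $\Gamma^{i,3}_k = -2\sigma^i_k\big(\mathrm{vec}(A\hat H^{i,x}_k(\hat H^{i,x}_k)^TC^T)\big)^T$. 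Stacking over $i$ and permuting coordinates into the order of $\mu_k$ produces the block matrices $P^4_k, P^5_k, P^6_k$ exactly as stated. Finally, since maximizing $\mathfrak{D}$ is the same as minimizing its reciprocal, \eqref{OptimalGainInput} becomes $\min_{\mu_k} J_3(\mu_k)/J_4(\mu_k)$.

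The main obstacle — really the only nontrivial bookkeeping — is handling the vectorization identities consistently so that the left/right Kronecker factors land in the correct order ($M^T\otimes I$ versus $I\otimes M$) and so that the off-diagonal placement of the $\Gamma^{i,5}_k$ and $\Gamma^{i,8}_k$ cross blocks within $Q^4_k$ and $Q^5_k$ respects the stacking convention of $\mu_k$ (all the $\mathrm{vec}(L^i_k)$ first, then $u_k$). One must also be careful with the $i=0$ special case throughout: by \eqref{SetModel7_2} the healthy-mode center and generator carry no $u^0_k$ dependence, so $i=0$ contributes to the $\xi$-blocks ($\Gamma^{0,1}_k, \Gamma^{0,3}_k, \Gamma^{0,4}_k, \Gamma^{0,7}_k$) and to the constants $P^6_k, Q^6_k$, but not to any $u_k$-related block. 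Once these indexing conventions are pinned down, the identification of each $\Gamma^{i,j}_k$ is a direct term-by-term match between the expanded quadratic forms and the claimed expressions, and no genuine inequality or estimation is involved — the proposition is a pure algebraic rewriting.
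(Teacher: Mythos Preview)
Your proposal is correct and follows exactly the same approach as the paper: substitute the center and generator-matrix expressions \eqref{SetModel7} (and \eqref{SetModel7_2} for $i=0$) into \eqref{FD5_1_0}, then use vectorization identities and the stacking \eqref{AugmentedVariable} to collect the result into the stated quadratic-over-quadratic form. The paper's own proof is a two-sentence sketch of precisely this substitution-and-collection argument; your write-up simply fills in the term-by-term bookkeeping that the paper omits.
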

	\begin{proof}
		By substituting \eqref{SetModel7A}, \eqref{SetModel7B} and \eqref{FD5_1} into the numerator and denominator of \eqref{FD5_1_0}, respectively, we can obtain the explicit expression of $\mathfrak{D}(\bar{\mathcal{E}}^0_{k+1},\bar{\mathcal{E}}^1_{k+1},\cdots,\bar{\mathcal{E}}^{n_u}_{k+1})$ with respect to $L_k^i$ and $u_k$. Furthermore, using \eqref{AugmentedVariable}, $\mathfrak{D}(\bar{\mathcal{E}}^0_{k+1},\bar{\mathcal{E}}^1_{k+1},\cdots,\bar{\mathcal{E}}^{n_u}_{k+1})$ is reformulated to an equivalent form with respect to $\mu_k$ such that \eqref{FD6} is obtained. \quad\quad \quad
	\end{proof}	
	
	Similarly, by solving \eqref{FD6}, an optimal $\mu^{*}_k$ can be obtained. Optimal gains $L^{*}_k$ and input $u^*_k$ can be computed for observer-based AFD based on $\mu^{*}_k$ and \eqref{AugmentedVariable}. By optimizing gains and input using the proposed method above and then injecting to the system at each time instant, FD decisions can be done by testing \eqref{FD1} or \eqref{FD2} online. 
	
	\subsection{Solution of Unconstrained Problem}
	\label{Section4_4}
	First, we consider the case that all gains and inputs are unconstrained. If the designed SVOs are stable, we will be able to see the FD potential of the proposed method. Since \eqref{OptimalGain1} and \eqref{FD6} belong to the same type of optimization problems. Without loss of generality, only \eqref{FD6} is considered in this subsection for saving space and the methods to solve \eqref{FD6} can be applied to \eqref{OptimalGain1} directly.
	
	In the unconstrained case, we consider $\mu_k \in \mathcal{S}$ for brevity of analysis where $\mathcal{S}$ is a convex compact set, i.e.,
	\begin{align}
		\label{OptimalGain3}
		\min_{\mu_k \in \mathcal{S}} \:\frac{J_3(\mu_k)}{J_4(\mu_k)}.
	\end{align}
	
	When $\mathcal{S}$ is sufficiently large, \eqref{OptimalGain3} is equivalent to \eqref{FD6}. Under this condition, a method is further proposed to solve \eqref{FD6} based on \eqref{OptimalGain3}. To solve \eqref{OptimalGain3}, it is reformulated as a parametric programming problem with the following formulation:
	\begin{align}
		\label{OptimizationProblem8B}
		M(\gamma)&=\min_{\mu_k \in \mathcal{S}}J_3(\mu_k)-\gamma J_4(\mu_k),\: \gamma \in \mathbb{R}.
	\end{align}
	
	Based on some results in \cite{Dinkelbach1967}, we could obtain some properties of $M(\gamma)$ in \eqref{OptimizationProblem8B} as follows:
	\begin{itemize}
		\item $M(\gamma)$ is concave for $\gamma \in \mathbb{R}$;
		\item $M(\gamma)$ is strictly monotonic decreasing for $\gamma \in \mathbb{R}$;
		\item $M(\gamma)=0$ has a unique solution for $\gamma \in \mathbb{R}$.
	\end{itemize}
	
	\begin{lemma}
		\label{proposition1}
		(\cite{Dinkelbach1967}). The two problems are equivalent:
		\begin{align*}
			\text{I:}&\min_{\mu_k \in \mathcal{S}}\frac{J_3(\mu_k)}{J_4(\mu_k)}=\gamma;
			\:\text{II:}\min_{\mu_k \in \mathcal{S}}J_3(\mu_k)-\gamma J_4(\mu_k)=0.
		\end{align*}
	\end{lemma}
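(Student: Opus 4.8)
The plan is to establish the two implications $\text{I}\Rightarrow\text{II}$ and $\text{II}\Rightarrow\text{I}$ directly, using only the compactness of $\mathcal{S}$, the continuity of $J_3$ and $J_4$, the strict positivity of $J_4$ on $\mathcal{S}$ (which holds because $J_4(\mu_k)$ is the denominator $\sum_{i=0}^{n_u}\sigma^i_k\Vert\bar H^{i,e}_{k+1}\Vert_F^2$ of the excluding degree, a sum of weighted squared $F$-radii that is strictly positive whenever the noise term $FH_\eta$ is nontrivial), and the three structural properties of $M(\gamma)$ already recorded above.

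First I would set $\gamma^\star:=\min_{\mu_k\in\mathcal{S}}J_3(\mu_k)/J_4(\mu_k)$, which is attained at some $\mu_k^\star\in\mathcal{S}$ by continuity of the ratio on the compact set $\mathcal{S}$. For every $\mu_k\in\mathcal{S}$ one has $J_3(\mu_k)/J_4(\mu_k)\ge\gamma^\star$; multiplying through by $J_4(\mu_k)>0$ preserves the inequality and yields $J_3(\mu_k)-\gamma^\star J_4(\mu_k)\ge 0$, with equality at $\mu_k=\mu_k^\star$. Hence $M(\gamma^\star)=\min_{\mu_k\in\mathcal{S}}\bigl(J_3(\mu_k)-\gamma^\star J_4(\mu_k)\bigr)=0$, which proves $\text{I}\Rightarrow\text{II}$ and simultaneously exhibits $\mu_k^\star$ as a minimizer of Problem II.

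For the converse, suppose $M(\gamma)=0$ for some $\gamma$, with the minimum attained at $\bar\mu_k\in\mathcal{S}$. Then $J_3(\mu_k)-\gamma J_4(\mu_k)\ge 0$ for all $\mu_k\in\mathcal{S}$, i.e.\ $J_3(\mu_k)/J_4(\mu_k)\ge\gamma$ after dividing by $J_4(\mu_k)>0$, while $J_3(\bar\mu_k)-\gamma J_4(\bar\mu_k)=0$ gives $J_3(\bar\mu_k)/J_4(\bar\mu_k)=\gamma$. Therefore $\gamma=\min_{\mu_k\in\mathcal{S}}J_3(\mu_k)/J_4(\mu_k)$ with the same minimizer $\bar\mu_k$, proving $\text{II}\Rightarrow\text{I}$. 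Together with the uniqueness of the root of $M(\gamma)=0$, this shows that the scalar $\gamma$ in Problems I and II is one and the same and that the two problems share their optimizers, which is the claimed equivalence.

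The algebra here is negligible; the part that genuinely needs care is the pair of structural facts being invoked — that $J_4$ is strictly positive throughout $\mathcal{S}$ (so the sign of each inequality survives when the denominator is cleared) and that the minima in both problems are attained (compactness of $\mathcal{S}$, continuity of $J_3,J_4$). I expect the main obstacle to be stating these cleanly rather than any computation; the concavity and strict monotonicity of $M$ are then used only to guarantee that the root $\gamma^\star$ is unique, which is precisely what makes ``the'' value $\gamma$ in the two problems well defined.
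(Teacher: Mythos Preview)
Your proof is correct and is precisely the classical Dinkelbach argument. Note, however, that the paper does not actually prove this lemma: it is stated with the citation \cite{Dinkelbach1967} and used as a known result without proof, so there is no ``paper's own proof'' to compare against. What you have written is the standard proof from Dinkelbach's original paper, so in that sense you have reproduced the source argument faithfully; the care you take with positivity of $J_4$ and attainment of minima via compactness of $\mathcal{S}$ is exactly what is needed to make the two implications rigorous.
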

	
	Using \lemref{proposition1}, the optimization problem \eqref{OptimalGain3} can be transformed into finding a value $\gamma^*_k$ such that $M(\gamma^*_k)=0$. Then, the optimal solution $\mu^*_k$ of \eqref{OptimalGain3} is obtained as $\mu^{*}_k=\arg\min_{\mu_k \in \mathcal{S}}J_3(\mu_k)-\gamma^*_k J_4(\mu_k)$.
	By using \eqref{FD6}, \eqref{OptimizationProblem8B} is further transformed into
	\begin{align}
		\label{OptimizationProblem8_2}
		M(\gamma)=\min_{\mu_k \in \mathcal{S}}\:\mu_k^T\bar{O}^1_k(\gamma)\mu_k + \bar{O}^2_k(\gamma)\mu_k + \bar{O}^3_k(\gamma),
	\end{align}	
	where $\bar{O}^1_k(\gamma)=P^4_k-\gamma Q^4_k$, $\bar{O}^2_k(\gamma)=P^5_k-\gamma Q^5_k$ and $\bar{O}^3_k(\gamma)=P^6_k-\gamma Q^6_k$.
	
	Since $M(\gamma)$ is monotonically decreasing, $M(\gamma)=0$ has a unique solution $\gamma^*_k$ at time instant $k$. Thus, a bisection method is proposed in \algref{OptimalFD} to search $\gamma^*_k$ and then the key point is to obtain a bound of $\gamma^*_k$ (i.e., $\underline{\gamma} \le \gamma^*_k \le \overline{\gamma}$) for the bisection method \cite{TanRNC2020}. Since $\mathcal{S}$ is a convex compact set, \eqref{OptimalGain3} must have a minimum. Moreover, \eqref{OptimalGain3} with a sufficiently large $\mathcal{S}$ is equivalent to \eqref{FD6}. Since $\mathcal{S}$ is sufficiently large, \eqref{OptimizationProblem8_2} has a minimum only when $\bar{O}^1_k$ is non-negative definite. Thus, the upper bound $\overline{\gamma}$ of $\gamma$ can be obtained by solving
	\begin{align}
		\label{OptimizationProblem8_Gamma_1}
		\overline{\gamma}=\arg\max \:\gamma,\:\mathrm{s.t.}\:\bar{O}^1_k(\gamma)\succeq 0.
	\end{align}
	
	As known from \defref{DegreeofExcluding}, both the numerator and denominator of the excluding degree are positive. Therefore, based on \lemref{proposition1}, $J_3(\mu_k) >0$, $J_4(\mu_k) >0$ and $\gamma >0$. This means that the lower bound of $\gamma$ is $\underline{\gamma}=0$. After obtaining the interval $\gamma\in [\underline{\gamma},\overline{\gamma}]$, for any given value $\gamma\in [\underline{\gamma},\overline{\gamma}]$, $\bar{O}^1_k(\gamma)$ is positive semidefinite and \eqref{OptimizationProblem8_2} is unconstrained and convex . Thus, the analytical solution of \eqref{OptimizationProblem8_2} can be obtained:
	\begin{align}
		\label{OptimalSolution}
		\mu^{\gamma}_k=-\frac{1}{2}(\bar{O}^1_k(\gamma))^{-1}(\bar{O}^2_k(\gamma))^T
	\end{align}	
	by solving $\frac{\partial \big(\mu_k^T\bar{O}^1_k(\gamma)\mu_k + \bar{O}^2_k(\gamma)\mu_k + \bar{O}^3_k(\gamma)\big)}{\partial \mu_k}=0$.
	
	Using \eqref{OptimalSolution}, the optimal solution of \eqref{OptimizationProblem8_2} is obtained as
	\begin{align}
		\label{OptimalSolution1}
		M(\gamma)=\bar{O}^3_k(\gamma)-\frac{1}{4}\bar{O}^2_k(\gamma)\big(\bar{O}^1_k(\gamma)\big)^{-1}\big(\bar{O}^2_k(\gamma)\big)^T.
	\end{align}	
	
	This implies that for a given value $\gamma\in [\underline{\gamma},\overline{\gamma}]$, $M(\gamma)$ can be efficiently computed by \eqref{OptimalSolution1}. Thus, at time instant $k$, a bisection method is summarized in \algref{OptimalFD} to search $\gamma^*_k$ by solving a series of problems \eqref{OptimizationProblem8_2} for different values of $\gamma \in [\underline{\gamma},\overline{\gamma}]$. Once $\gamma^*_k$ is searched, $\mu^*_k$ can be computed by \eqref{OptimalSolution}.
	Finally, optimal gains $L^{i,*}_k$ ($\forall i \in \mathbb{I}_u$) and inputs $u^*_k$ can be obtained based on $\mu^*_k$ and \eqref{AugmentedVariable}.
	\begin{algorithm}[htbp]
		\caption{Computation of $L^{i,*}_k$ and $u^*_k$}
		\label{OptimalFD}
		\begin{algorithmic}[1]
			\STATE At time instant $k$, given $\underline{\gamma}=0$ and a precision $\epsilon$;
			\STATE Compute $\overline{\gamma}$ by solving \eqref{OptimizationProblem8_Gamma_1};
			\STATE Compute $\gamma=\frac{\underline{\gamma}+\overline{\gamma}}{2}$ and solve \eqref{OptimalSolution1};
			\WHILE{$\vert M(\gamma)\vert> \epsilon$}
			\IF{$M(\gamma)>0$}
			\STATE Set $\underline{\gamma}=\gamma$;
			\ELSE
			\STATE Set $\overline{\gamma}=\gamma$;
			\ENDIF
			\STATE Repeat Step 3 to update $\gamma$ and $M(\gamma)$;
			\ENDWHILE
			\STATE Set $\gamma^*_k=\gamma$;
			\STATE Solve $M(\gamma^*_k)$ in \eqref{OptimizationProblem8_2} by \eqref{OptimalSolution} to obtain $\mu^{*}_k$;
			\STATE Compute $L^{i,*}_k$ ($\forall i \in \mathbb{I}_u$) and $u^*_k$ by \eqref{AugmentedVariable};
			\RETURN $\gamma^*_k$, $\mu^{*}_k$,  $L^{i,*}_k$ ($\forall i \in \mathbb{I}_u$) and $u^*_k$.
		\end{algorithmic}
	\end{algorithm}
	
	\begin{remark}
		In the unconstrained case, the optimization problem has its analytical solution and can be solved efficiently. But designed gains cannot guarantee the SVO stability and we have to further consider the constrained case to design gains with SVO stability guarantees.
	\end{remark}

	\subsection{Solution of Constrained Problem}
	\label{Section4_5}
	In the preliminary results \cite{Xu2022CDC}, a matrix inequality stability condition was established for the SVO. Based on it, optimal gains stabilizing the SVO were computed by a branch and bound method. However, it has quite high computational complexity and is not appropriate for online applications. In order to guarantee the SVO stability and simultaneously reduce computational complexity, we consider solving the constrained case of \eqref{FD6} with a new stability condition established for joint design of gains and inputs here. On one hand, we consider that inputs are energy-bounded, i.e.,
	$u_k \in \mathcal{U}=\{u_k:\Vert u_k-u^c\Vert_2 \le \bar{u}\}$. On the other hand, a new SVO stability condition for gains is established below.
	\begin{proposition}
		\label{FD9}
		The $i$-th SVO \eqref{Observer1} is stable if
		\begin{align}
			\label{OptimizationProblem16_2}
			\mathrm{vec}(L^i_k) \in \mathcal{L}_i=&\{ \mathrm{vec}(L^i_k):\Vert\mathrm{vec}(A)-(C^T\otimes I)\mathrm{vec}(L^i_k)\Vert^2_2 \notag\\&\le 1-\epsilon_3\}, \: \forall k \ge 0,
		\end{align}	
		where $\epsilon_3$ is a sufficiently small positive scalar.
	\end{proposition}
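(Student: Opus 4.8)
The plan is to recognize that membership of $\mathrm{vec}(L^i_k)$ in $\mathcal{L}_i$ is precisely the requirement that the closed-loop matrix $A-L^i_kC$ be a uniform contraction in the Euclidean norm, and then to propagate this contraction through the center/generator recursion of \eqref{Observer1}: the homogeneous part contracts geometrically while every exogenous term stays uniformly bounded, so the state set $\hat{\mathcal{X}}^i_k$ and hence the output set $\hat{\mathcal{Y}}^i_k$ remain uniformly bounded for all $k\ge0$, which is what we mean by SVO stability.

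First I would rewrite the constraint. Using $\mathrm{vec}(XYZ)=(Z^{T}\otimes X)\mathrm{vec}(Y)$ with $X=I$, $Y=L^i_k$, $Z=C$, one has $(C^{T}\otimes I)\mathrm{vec}(L^i_k)=\mathrm{vec}(L^i_kC)$, so $\mathrm{vec}(A)-(C^{T}\otimes I)\mathrm{vec}(L^i_k)=\mathrm{vec}(A-L^i_kC)$. Since $\|M\|_F=\|\mathrm{vec}(M)\|_2$ for any matrix $M$, the condition \eqref{OptimizationProblem16_2} is exactly $\|A-L^i_kC\|_F^2\le 1-\epsilon_3$, i.e. $\|A-L^i_kC\|_F\le\sqrt{1-\epsilon_3}=:\rho<1$ for all $k\ge0$. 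Because the spectral norm is dominated by the Frobenius norm, this gives $\|A-L^i_kC\|_2\le\rho<1$ uniformly in $k$. Inspecting \eqref{ObserverA}--\eqref{ObserverB} (and \eqref{ObserverA1}--\eqref{ObserverB1} when $i=0$), the only term of the center/generator update that acts on the previous set is the linear map $z\mapsto(A-L^i_kC)z$; the remaining terms ($L^i_ky_k$, $\operatorname{mid}(B\mathbf{G}_i)u_k$ or $Bu_k$, $E\omega^{c}$, $-L^i_kF\eta^{c}$, and the appended generator blocks $-L^i_kFH_\eta$, $\frac{1-\epsilon_1}{2}\operatorname{diag}(Be_i)u^i_k$, $EH_\omega$) are exogenous and, under \assref{Uncetainties} and $u_k\in\mathcal{U}$, have uniformly bounded norm and Frobenius norm. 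Hence the transition matrix $\Phi(k,m)=(A-L^i_{k-1}C)\cdots(A-L^i_mC)$ satisfies $\|\Phi(k,m)\|_2\le\rho^{\,k-m}$, so the linear time-varying recursion $z_{k+1}=(A-L^i_kC)z_k$ is uniformly exponentially stable; equivalently $V(z)=\|z\|_2^2$ is a common quadratic Lyapunov function decreasing by the factor $\rho^2<1$ at every step. Using $\|(A-L^i_kC)\hat H^{i,x}_k\|_F\le\|A-L^i_kC\|_2\|\hat H^{i,x}_k\|_F\le\rho\|\hat H^{i,x}_k\|_F$ together with $\|[\,M_1\;M_2\,]\|_F^2=\|M_1\|_F^2+\|M_2\|_F^2$, and the analogous one-step bound for the center error, a telescoping argument then yields uniform bounds of the form $\|\hat H^{i,x}_k\|_F^2\le\rho^{2k}\|\hat H^{i,x}_0\|_F^2+c$ (and a similar bound on the center), so by \eqref{ObserverC}--\eqref{ObserverD} both the center and the $F$-radius of $\hat{\mathcal{Y}}^i_k$ stay uniformly bounded for all $k$, i.e. the $i$-th SVO is stable.

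The step I expect to be the real obstacle is making the last telescoping argument airtight in the presence of the order-reduction operator of \cite{Combastel2015265}: that operator replaces $\hat{\mathcal{X}}^i_k$ by an outer zonotope of a fixed, bounded order and can locally inflate $\|\hat H^{i,x}_k\|_F$ by a constant factor, so one must check that the geometric decay rate $\rho<1$ still dominates this bounded inflation — this is exactly where the strict margin $\epsilon_3>0$ matters (as opposed to merely $\|A-L^i_kC\|_F=1$), since the reduction constant can be absorbed into the effective contraction rate, or its effect on the $F$-radius can be bounded in terms of the retained order. The remaining steps are routine: a Kronecker-product rewriting, the elementary inequality $\|M\|_2\le\|M\|_F$, and the standard uniform-exponential-stability lemma for discrete linear time-varying systems.
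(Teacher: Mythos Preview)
Your proposal is correct and follows essentially the same route as the paper: rewrite the constraint via $\mathrm{vec}(L^i_kC)=(C^{T}\otimes I)\mathrm{vec}(L^i_k)$ to get $\|A-L^i_kC\|_F^2\le 1-\epsilon_3$, invoke $\|\cdot\|_2\le\|\cdot\|_F$, and conclude stability from the uniform contraction of $A-L^i_kC$. Your version is in fact more careful than the paper's short argument---you explicitly treat both center and generator recursions and flag the order-reduction operator, whereas the paper simply abstracts to a generic forced recursion $\mathring{x}_{k+1}=(A-L^{i}_kC)\mathring{x}_k+\delta_k$ with bounded $\delta_k$ and does not discuss the reduction step.
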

	\begin{proof}
		For a dynamics  $\mathring{x}_{k+1}=(A-L^{i}_kC)\mathring{x}_k + \delta_k$, when $\delta_k$ is bounded, the dynamics is stable if $\mathring{x}_{k+1}=(A-L^{i}_kC)\mathring{x}_k$ is stable. Furthermore, if $\Vert A-L^i_kC \Vert_2 \le 1$,
		\begin{align}
			\label{OptimizationProblem16_1}
			\Vert\mathring{x}_{k+1}\Vert_2 =& \Vert(A-L^{i}_kC)\mathring{x}_k\Vert_2 \notag\\\le& \Vert(A-L^{i}_kC)\Vert_2\Vert\mathring{x}_k\Vert_2 \le \Vert\mathring{x}_k\Vert_2.
		\end{align}
		
		If \eqref{OptimizationProblem16_1} holds $\forall k \ge 0$, it means that the dynamics is stable. Besides, since $\Vert A-L^i_kC \Vert_2 \le \Vert A-L^i_kC \Vert_F$, if $\Vert A-L^i_kC \Vert^2_F \le 1-\epsilon_3$, the dynamics is stable as well. Moreover, $\Vert A-L^i_kC \Vert^2_F=\Vert \mathrm{vec}(A-L^i_kC)\Vert^2_2=\Vert\mathrm{vec}(A)-(C^T\otimes I)\mathrm{vec}(L^i_k)\Vert^2_2$. This means that the dynamics is stable if \eqref{OptimizationProblem16_2} holds, which implies that the SVO \eqref{Observer1} is stable.
	\end{proof}
	
	Under \propref{FD9}, considering a general case that the SVO should be stable and that inputs are bounded, it implies the following constraint for \eqref{FD6}:
	\begin{align}
		\label{AugmentedSet}
		\mu_k \in \mathcal{S}= \underbrace{\mathcal{L}_0\times \mathcal{L}_1\times\cdots\times\mathcal{L}_{n_u}}^{n_u+1} \times\mathcal{U} .
	\end{align}	
	
	\begin{remark}
		For brevity, the same notation $\mathcal{S}$ is used in \eqref{AugmentedSet} with that in Section \ref{Section4_4}. Differently, $\mathcal{S}$ in \eqref{AugmentedSet} represents a constrained set for the SVO stability condition and the boundedness of inputs. For this reason, the same notation $\mathcal{S}$ is used here, the optimization problems in this subsection have the same expressions with some optimization problems in Section \ref{Section4_4}. However, without ambiguity, the optimization problems here correspond to the constrained case \eqref{AugmentedSet}.
	\end{remark}
	
	We will further propose a method to handle \eqref{OptimizationProblem8_2} with \eqref{AugmentedSet} directly. It is observed that under the constrained case, a key point of the bisection method in \algref{OptimalFD} is to solve $M(\gamma)$ in \eqref{OptimizationProblem8_2} with \eqref{AugmentedSet} for a given $\gamma$. In \algref{OptimalFD}, when the constrained case is considered, $M(\gamma)$ in \eqref{OptimizationProblem8_2} with \eqref{AugmentedSet} is possible to be an indefinite quadratic programming (IQP) problem for a given $\gamma$ (i.e., $\bar{O}^1_k(\gamma)$ is indefinite). Thus, we consider a more general case that $\bar{O}^1_k(\gamma)$ is indefinite here. In this case, we propose to compute an approximate solution of \eqref{OptimizationProblem8_2} with \eqref{AugmentedSet}, which constructs a convex function to approximate the indefinite quadratic objective function with a given precision such that the original problem can be approximately solved under the convex framework \cite{Yamamoto2007}.
	
	It is observed that $\bar{O}^1_k(\gamma)$ is symmetric. Therefore, based on the congruent transformation, we could find an orthogonal matrix $D$ such that $D^T\bar{O}^1_k(\gamma)D$ is a diagonal matrix. This implies that using the coordinate transformation $\mu_k=Dv_k$, the objective function of \eqref{OptimizationProblem8_2} is transformed into $\Phi(\gamma,v_k)=v^T_k\Theta^1_kv_k + \Theta^2_kv_k + \Theta^3_k$,
	where $\Theta^1_k=D^T\bar{O}^1_k(\gamma)D$, $\Theta^2_k=\bar{O}^2_k(\gamma)D$ and $\Theta^3_k=\bar{O}^3_k(\gamma)$. Thus, the optimization problem \eqref{OptimizationProblem8_2} is transformed into
	\begin{align}
		\label{OptimizationProblem8_3}
		\Phi(\gamma)=\min_{v \in \mathcal{S}_{v}}\:v^T\Theta^1_kv + \Theta^2_kv + \Theta^3_k,
	\end{align}
	where $\mathcal{S}_{v}$	denotes the constraint set of $v$ after the coordinate transformation $\mu_k=Dv_k$ from $\mathcal{S}$. Since $\Theta^1_k$ is a diagonal matrix, we could transform $\Phi(\gamma,v_k)$ into
	\begin{align}
		\label{PiecewiseConstraint0_0}
		\Phi(\gamma,v_k)=f_1(\gamma,v_k) + f_2(\gamma,v_k)
	\end{align}
	with $f_1(\gamma,v_k)=\sum^{t}_{l=1}\theta^{1,l}_k(v^l_k)^2$ and $f_2(\gamma,v_k)=\sum^{n_xn_y(n_u+1)+n_u}_{l=t+1}\theta^{1,l}_k(v^l_k)^2 + \Theta^2_kv_k + \Theta^3_k$.
	where $\theta^{1,l}_k$ is the $l$-th diagonal element of $\Theta^1_k$, $v^l_k$ is the $l$-th element of $v_k$ and $t$ denotes the number of negative diagonal elements of $\Theta^1_k$ (i.e., $\theta^{1,l}_k< 0$ for $l=1,2,\cdots,t$ and $\theta^{1,l}_k> 0$ for $l=t+1,t+2,\cdots,n_xn_y(n_u+1)+n_u$).
	
	Note that only $\theta^{1,l}_k\not=0$ for $l=1,2,\cdots,n_xn_y(n_u+1)+n_u$ are considered. $f_1(\gamma,v_k)$ and $f_2(\gamma,v_k)$ are the concave and convex components of $\Phi(\gamma,v_k)$, respectively. To solve \eqref{OptimizationProblem8_3}, the concave part $f_1(\gamma,v_k)$ is represented as
	\begin{align}
		\label{PiecewiseConstraint0_1}
		f_1(\gamma,v_k)=\sum^{t}_{l=1}f^l_1(\gamma,v_k),
	\end{align}
	where $f^l_1(\gamma,v_k)=\theta^{1,l}_k(v^l_k)^2$ with $l=1,2,\cdots,t$.
	
	In the following, $f_1(\gamma,v_k)$ can be approximated by a piecewise linear function such that the concavity of $f(\gamma,v_k)$ is finally eliminated. Since $v_k=D^{-1}\mu_k$, the intervals of the components of $v_k$ are obtained by using the interval of $\mu_k$. The interval of $v^l_k$ is denoted as $[\underline{v}^l, \overline{v}^l]$. In order to approximate $f^l_1(\gamma,v_k)$, the interval $[\underline{v}^l, \overline{v}^l]$ of $v^l_k$ is divided into $m$ small sub-intervals with the same width, i.e.,
	\begin{align}
		\label{IntervalPartition}
		[\underline{v}^l, \overline{v}^l]=\bigcup_{s=1}^{m}[v^{ls}, v^{l(s+1)}],
	\end{align}	
	where $[v^{ls}, v^{l(s+1)}]$ is the $s$-th small sub-interval and $v^{ls}$ ($s=1,2,\cdots,m+1$) are the $m+1$ ending points of the $m$ small sub-intervals with $v^{l1}=\underline{v}^l$ and $v^{l(m+1)}=\overline{v}^l$.
	
	Based on \eqref{IntervalPartition} and \cite{Yamamoto2007}, $f^l_1(\gamma,v)$ is approximated by
	\begin{align}
		\label{PiecewiseConstraint0}
		\tilde{f}^l_1(\gamma,\zeta)= \theta^{1l}_k(\sum^{m+1}_{s=1}\zeta^{ls}(v^{ls})^2)
	\end{align}
	with
	\begin{subequations}
		\begin{align}
			\label{PiecewiseConstraint}
			&0\le \zeta^{ls} \le 1, \sum^{m+1}_{s=1}\zeta^{ls} =1,\zeta^{l1} \le  \nu^{l1}, \\
			&\zeta^{ls} \le  \nu^{ls} + \nu^{l(s-1)}\:\:\text{for}\:\:s=2,3,\cdots,m, \\
			&\zeta^{l(m+1)} \le  \nu^{lm}, \nu^{ls}=0 \:\text{or}\:1, \sum^{m}_{s=1}\nu^{ls} =1,
		\end{align}
	\end{subequations}
	where $\tilde{f}^l_1(\gamma,\zeta)$ provides a piecewise lower bound for $f^l_1(\gamma,v_k)$ for $v^l_k \in [\underline{v}^l, \overline{v}^l]$. In each small sub-interval of $[\underline{v}^l, \overline{v}^l]$, a linear function is used to approximate $f^l_1(\gamma,v_k)$. The approximate precision is determined by the value of $m$. The larger $m$ is, the higher the precision is. However, computational complexity becomes higher. As $m\rightarrow \infty$, \eqref{PiecewiseConstraint0} becomes equivalent to \eqref{PiecewiseConstraint0_1}. Based on the analysis above, for all the components $f^l_1(\gamma,v_k)$ with $v^l_k \in [\underline{v}^l, \overline{v}^l]$ ($l=1,2,\cdots,t$), we could use the same method to make a piecewise linear approximation. Consequently, for $v_k \in [\underline{v}, \overline{v}]$, $f(\gamma,v_k)$ can be approximated by a linear function:
	\begin{align}
		\label{ApproximateFunction}
		\tilde{f}(\gamma,v_k,\zeta)= \sum^{t}_{l=1}\sum^{m+1}_{s=1}\theta^{1l}_k\zeta^{ls}(v^{ls})^2 + f_2(\gamma,v_k).
	\end{align}
	
	Based on \eqref{PiecewiseConstraint0_1}-\eqref{ApproximateFunction}, under the transformation $\mu_k=Dv_k$, \eqref{OptimizationProblem8_3} is approximated by the optimization problem:
	\begin{align}
		\label{LowerBound}
		&\min_{v_k \in \mathcal{S}_{v},\zeta^{ls},\nu^{ls}} \tilde{f}(\gamma,v_k,\zeta),\: s.t.\:\eqref{PiecewiseConstraint},\:l=1,2,\cdots,t,
	\end{align}	
	whose solution is denoted as $v^*_k$, $\zeta^{ls,*}$ and $\nu^{ls,*}$. An approximate solution of \eqref{OptimizationProblem8_2} with \eqref{AugmentedSet} is obtained by $$\mu_k^*=Dv^*_k.$$ 
	
	Notice that if we want to follow the bisection method in Section \ref{Section4_4} to solve \eqref{OptimalGain3} with the constraint \eqref{AugmentedSet}, the key point is still to find a bound of $\gamma$ (i.e., the upper bound of $\gamma$). Different from the unconstrained case, $\mathcal{S}$ in \eqref{AugmentedSet} is not a sufficiently large set, which means that we cannot directly use \eqref{OptimizationProblem8_Gamma_1} to find an upper bound for $\gamma$ in the constrained case. Thus, the method proposed in \cite{Xu2022CDC} is still used to search an upper bound for $\gamma$ here. Particularly, at time instant $k$, we propose to use the known $\gamma^*_{k-1}$ obtained by the bisection method at time instant $k-1$ and define a step
	length at time instant $k$ as
	\begin{align}
		\label{StepLength}
		\mathring{\gamma}_k=\gamma^*_{k-1},\:k\ge 1,
	\end{align}	
	where $\mathring{\gamma}_0$ is defined as $\mathring{\gamma}_0=\arg\max \:\gamma,\:\mathrm{s.t.}\:\bar{O}^1_k(\gamma)\succeq 0$ (see \eqref{OptimizationProblem8_Gamma_1}).
	Since $\gamma^*_k \ge 0$, $[0,\mathring{\gamma}_k]$ can be used as an initial interval at time instant $k$ for searching an interval
	$[\underline{\gamma},\overline{\gamma}]$ with a width $\mathring{\gamma}_k$ for $\gamma^*_k$ , which is summarized in \algref{OptimalFD1}. Therefore, the problem \eqref{OptimalGain3} with \eqref{AugmentedSet} can be approximately solved by \algref{OptimalFD} as well, where the procedures for searching a bound $[\underline{\gamma},\overline{\gamma}]$ of $\gamma^*_k$ and for solving \eqref{OptimizationProblem8_2} with \eqref{AugmentedSet} are replaced by \algref{OptimalFD1} and the problem \eqref{LowerBound}, respectively.
	\begin{algorithm}[htbp]
		\caption{Searching an interval $[\underline{\gamma},\overline{\gamma}]$ for $\gamma^*_k$}
		\label{OptimalFD1}
		\begin{algorithmic}[1]
			\STATE At time instant $k$, obtain $\mathring{\gamma}_k$ in \eqref{StepLength} or $\mathring{\gamma}_0$ if $k=0$;
			\STATE Given $\underline{\gamma}= 0$ and $\overline{\gamma}=\mathring{\gamma}_k$;
			\WHILE{$M(\overline{\gamma})>0$}
			\STATE Set $\underline{\gamma}=\underline{\gamma}+\mathring{\gamma}_k$;
			\STATE Set $\overline{\gamma}=\overline{\gamma}+\mathring{\gamma}_k$;
			\ENDWHILE
			\RETURN $\underline{\gamma}$ and $\overline{\gamma}$.
		\end{algorithmic}
	\end{algorithm} 	
	
	\begin{remark}
		For a given $\gamma$, only when \eqref{OptimizationProblem8_2} with \eqref{AugmentedSet} is an IQP problem, the proposed method here is used to solve \eqref{OptimizationProblem8_2}. Although \eqref{LowerBound} provides an approximate solution, as long as $m$ is set as a sufficiently large value, \eqref{LowerBound} can provide an arbitrarily accurate solution for \eqref{OptimizationProblem8_2} and \eqref{OptimizationProblem8_3} with \eqref{AugmentedSet}. Moreover, we only need to use \eqref{LowerBound} and $\mu_k^*=Dv^*_k$ to replace \eqref{OptimizationProblem8_2} in \algref{OptimalFD} and a joint gain and input design in the constrained case can be achieved. Thus, $m$ is a key parameter to adjust computational complexity and FD conservatism of the proposed method here. 
	\end{remark}

	\section{Illustrative Examples}	
	An example with two actuators in the form \eqref{PlantDynamics1} is used to illustrate the effectiveness of the proposed methods. The system parametric matrices are given as
	\begin{align*}
		A=&\begin{bmatrix} 0.5 & 0.3 \\0.2 & 0.6 \end{bmatrix}, B=\begin{bmatrix} 0.05 & 0.08 \\0.07 & 0.05 \end{bmatrix}, E=\begin{bmatrix} 0.05 & 0.03 \\0.04 & 0.05 \end{bmatrix}, \\
		C=&\begin{bmatrix} 1 & 0 \\0 & 1\end{bmatrix},
		F=\begin{bmatrix} 0.1 & 0 \\0 & 0.1 \end{bmatrix}.
	\end{align*}
	
	Consider the healthy and faulty modes $G_0=I$ and 
	\begin{align*}
		\mathbf{G}_1=\begin{bmatrix} [0,0.8] & 0 \\0 & 1 \end{bmatrix}, \mathbf{G}_2=\begin{bmatrix} 1 & 0 \\0 & [0, 0.8] \end{bmatrix},
	\end{align*}
	where both $\mathbf{G}_1$ and $\mathbf{G}_2$ denote a fault interval $[0, 0.8]$ in the two actuators, respectively. The initial state, disturbances and noises are bounded by the zonotopes:
	\begin{align*}
		\hat{\mathcal{X}}_0=&\hat{\mathcal{X}}_1=\hat{\mathcal{X}}_2=\langle \begin{bmatrix} 0.55\\0.55 \end{bmatrix},\begin{bmatrix} 0.5 & 0 \\0 & 0.5 \end{bmatrix}\rangle,\\
		\mathcal{W}=&\langle \begin{bmatrix} 0\\0 \end{bmatrix},\begin{bmatrix} 0.5 & 0 \\0 & 0.5 \end{bmatrix}\rangle, \mathcal{V}=\langle \begin{bmatrix} 0\\0 \end{bmatrix},\begin{bmatrix} 0.1 & 0 \\0 & 0.1 \end{bmatrix}\rangle.
	\end{align*}

	\subsection{Gain Design for PFD}
	\label{5_1}
	We consider three scenarios to illustrate the effectiveness of the proposed gain optimization method for PFD: 
	\begin{itemize}
		\item The first scenario does not consider the stability condition in \propref{FD9};
		\item The second scenario considers the stability condition in \propref{FD9};
		\item The third scenario compares the proposed method in this paper with the method proposed in the preliminary results \cite{Xu2022CDC}. For fairness, both of them use the same stability condition in \propref{FD9}.
	\end{itemize}
	
	In the first scenario, three SVOs for the modes $G_0$, $\mathbf{G}_1$ and $\mathbf{G}_2$ are designed, where each SVO matches a mode. The specific faults in the two actuators are set as
	\begin{align*}
		G_1=\begin{bmatrix} 0.55 & 0 \\0 & 1 \end{bmatrix}, G_2=\begin{bmatrix} 1 & 0 \\0 & 0.55 \end{bmatrix}.
	\end{align*}  
	
	We have done two simulations to test the effectiveness of the proposed method. As given above, all the three SVOs use the same initial state set and thus the sets at $k=0$ are omitted in \figsref{uc_G1_PFD_NoStability} and \ref{uc_G2_PFD_NoStability} to save space. The initial system state is set as $x_0=\begin{bmatrix} 0.6 & 0.6 \end{bmatrix}^T$. In the first simulation, constant inputs $u=\begin{bmatrix} -0.7 & 3 \end{bmatrix}^T$ are injected into the system. We test the diagnosis of the first actuator fault $G_1$, which is shown in \figref{uc_G1_PFD_NoStability}. It is shown that the first actuator fault $G_1$ is diagnosed at $k=1$ because $\mathbf{0} \in \mathcal{R}^1_1$, $\mathbf{0} \not\in \mathcal{R}^0_1$ and $\mathbf{0} \not\in \mathcal{R}^2_1$ are tested. Note that the notations $R(l)$ ($l=1,2$) in \figref{uc_G1_PFD_NoStability} denote the first and second components of residual zonotopes, respectively. Similarly, in the second simulation, constant inputs $u=\begin{bmatrix} 3 & -0.7 \end{bmatrix}^T$ are injected into the system. We test the diagnosis of the second actuator fault $G_2$, which is shown in \figref{uc_G2_PFD_NoStability}. Differently, it is shown that the second actuator fault $G_2$ is successfully diagnosed at $k=13$ because $\mathbf{0} \not\in \mathcal{R}^0_{13}$ and $\mathbf{0} \in \mathcal{R}^2_{13}$ are tested.
	\begin{figure}[htbp]
		\centering
		\subfigure[$k=1$]{\includegraphics[height=4.5cm,width=8.5cm]{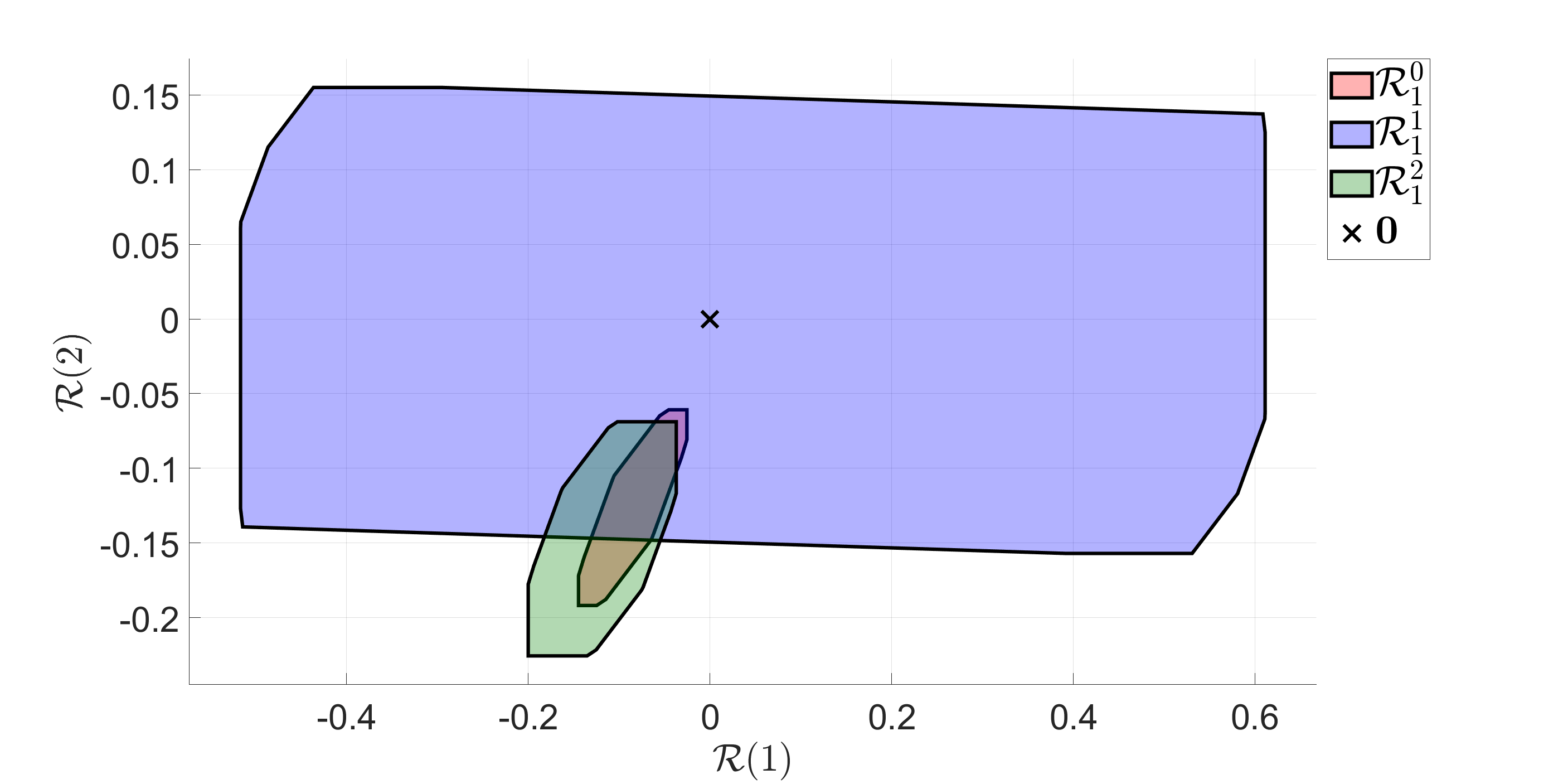}}
		\caption{PFD of the fault $G_1$ without considering stability}
		\label{uc_G1_PFD_NoStability}
	\end{figure}
	\begin{figure}[htbp]
		\centering
		\subfigure[$k=1$]{\includegraphics[height=4.5cm,width=8.5cm]{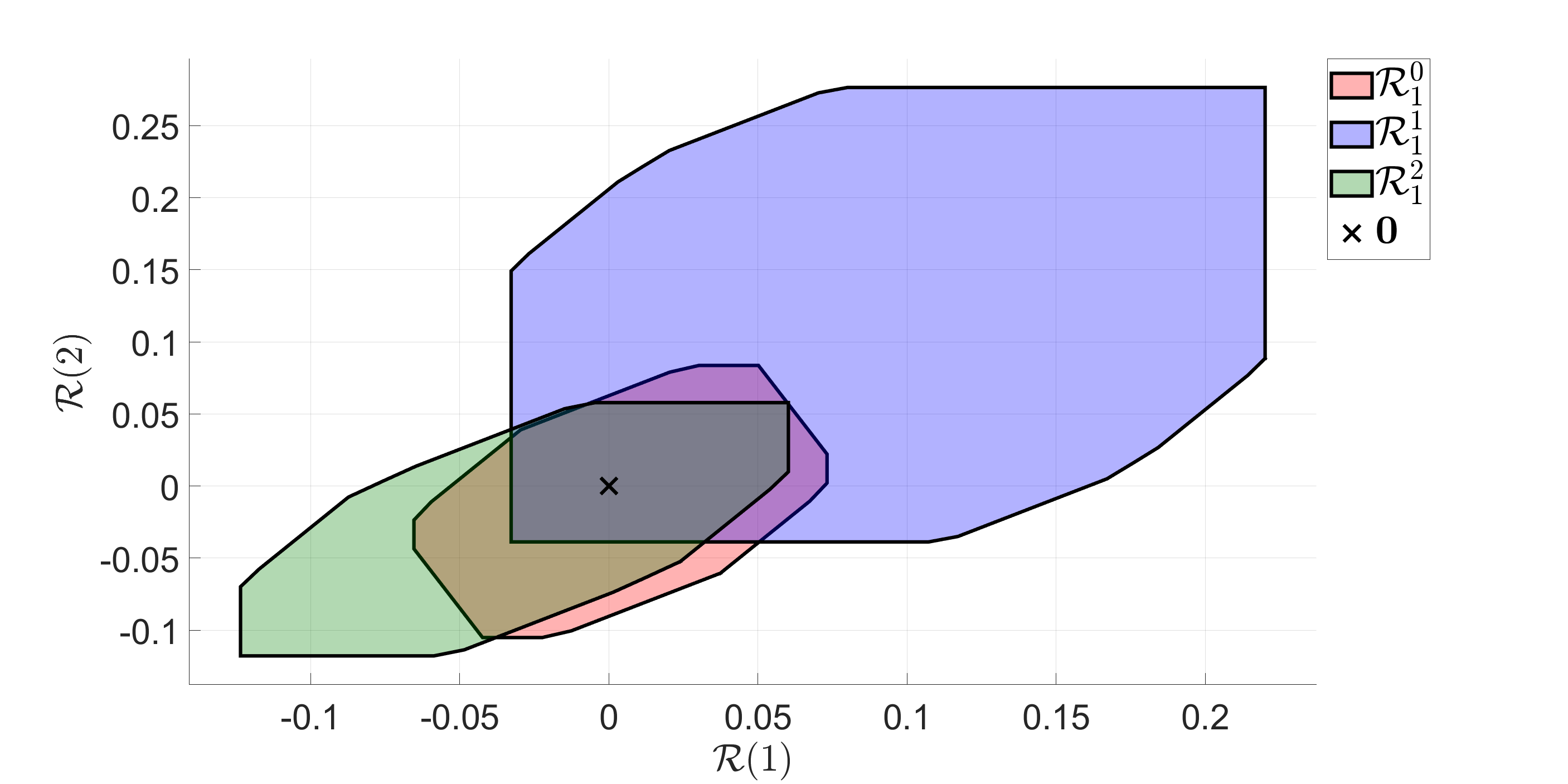}}
		\subfigure[$k=10$]{\includegraphics[height=4.5cm,width=8.5cm]{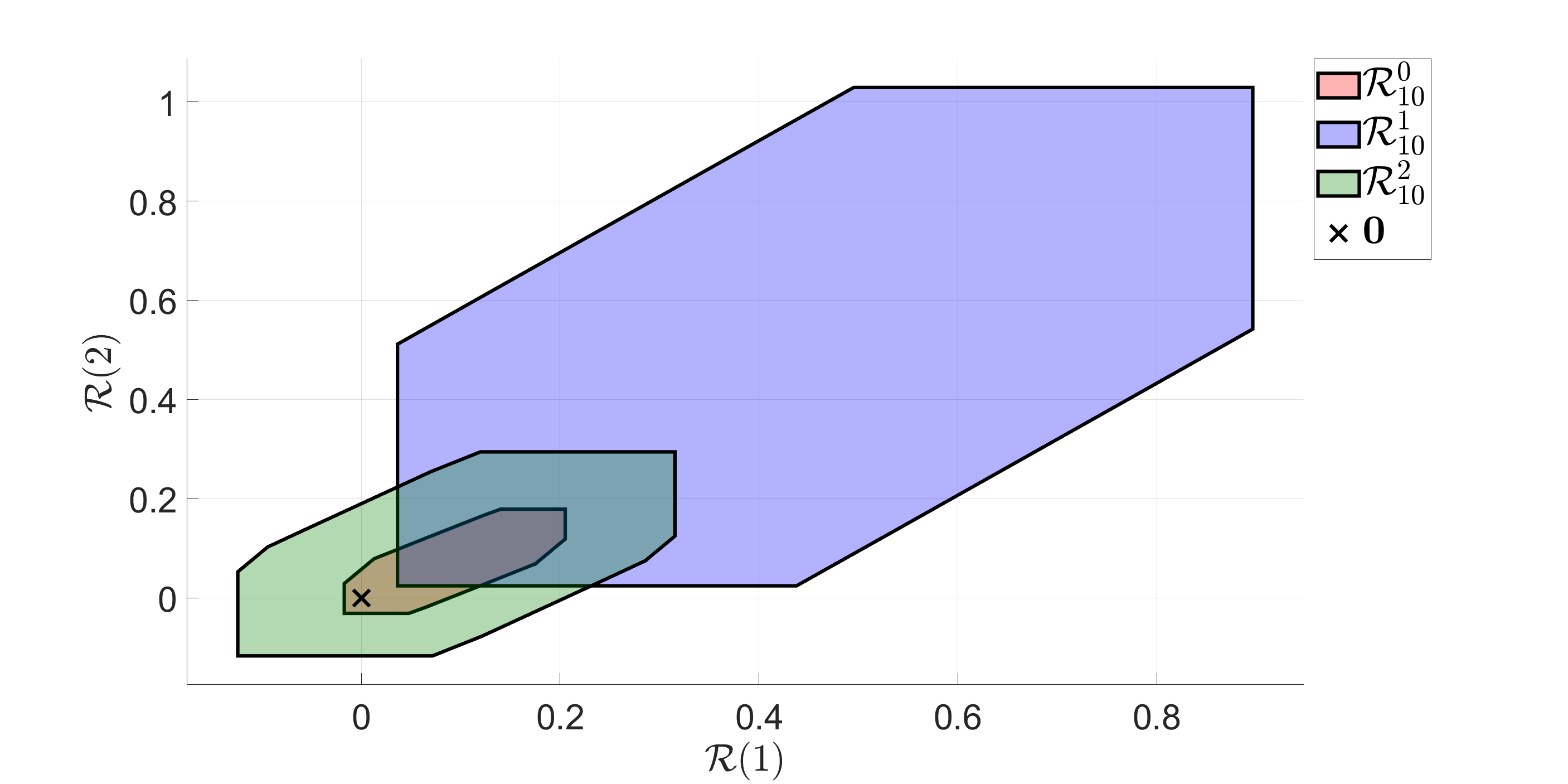}}
		\subfigure[$k=13$]{\includegraphics[height=4.5cm,width=8.5cm]{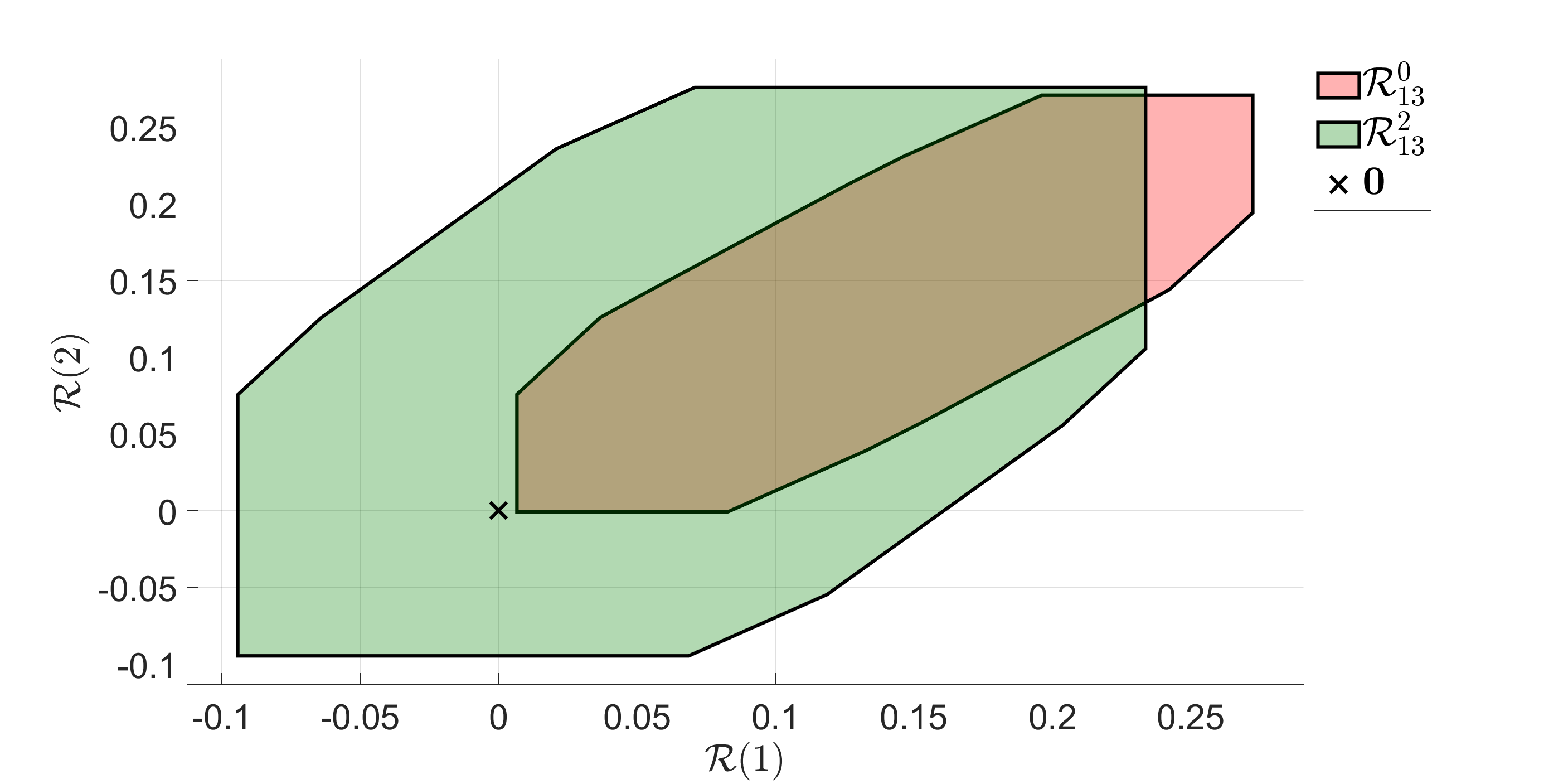}}
		\caption{PFD of the fault $G_2$ without considering stability}
		\label{uc_G2_PFD_NoStability}
	\end{figure}

	In the second scenario, the stability condition given in \propref{FD9} is considered as a constraint when designing gains for the three SVOs. The parameter $\epsilon_3$ in \propref{FD9} is set as $0.001$. Therefore, the stability of all the three SVOs can be guaranteed. However, due to the restriction of the stability constraint, the designed optimal gains have to lose some FD performance in this case. Similar to the first scenario, the second scenario also considers two simulations for diagnosis of the faults in the first and second actuators, respectively. All the parameters, inputs and faults are the same with those given in the first scenario and the only difference consists in the added stability constraint. The simulation results of diagnosis of the faults $G_1$ and $G_2$ are shown in \figsref{c_G1_PFD_Stability} and \ref{c_G2_PFD_Stability}. In \figref{c_G1_PFD_Stability}, it is tested that $\mathbf{0} \not\in \mathcal{R}^0_{13}$ and $\mathbf{0} \in \mathcal{R}^1_{13}$, which imply the diagnosis of the fault $G_1$ in the first actuator at $k=13$. Compared with the first scenario, more diagnosis time is needed. Similarly, in \figref{c_G2_PFD_Stability}, it is shown that the fault $G_2$ in the second actuator is diagnosed at $k=15$ (i.e., $\mathbf{0} \not\in \mathcal{R}^0_{15}$ and $\mathbf{0} \in \mathcal{R}^2_{15}$). These results have assessed the effectiveness of the proposed gain optimization method for PFD.
	\begin{figure}[htbp]
		\centering
		\subfigure[$k=1$]{\includegraphics[height=4.5cm,width=8.5cm]{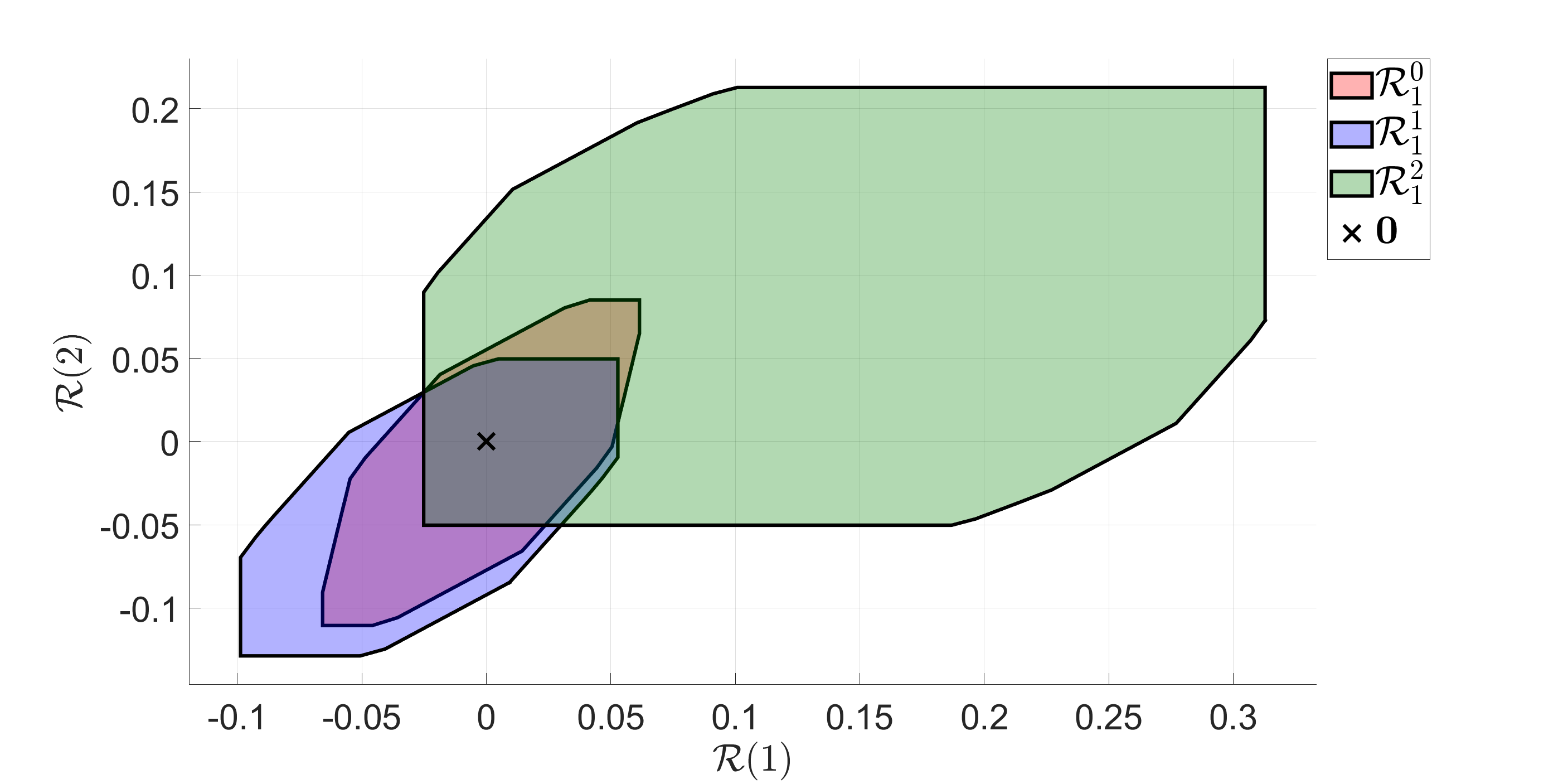}}
		\subfigure[$k=10$]{\includegraphics[height=4.5cm,width=8.5cm]{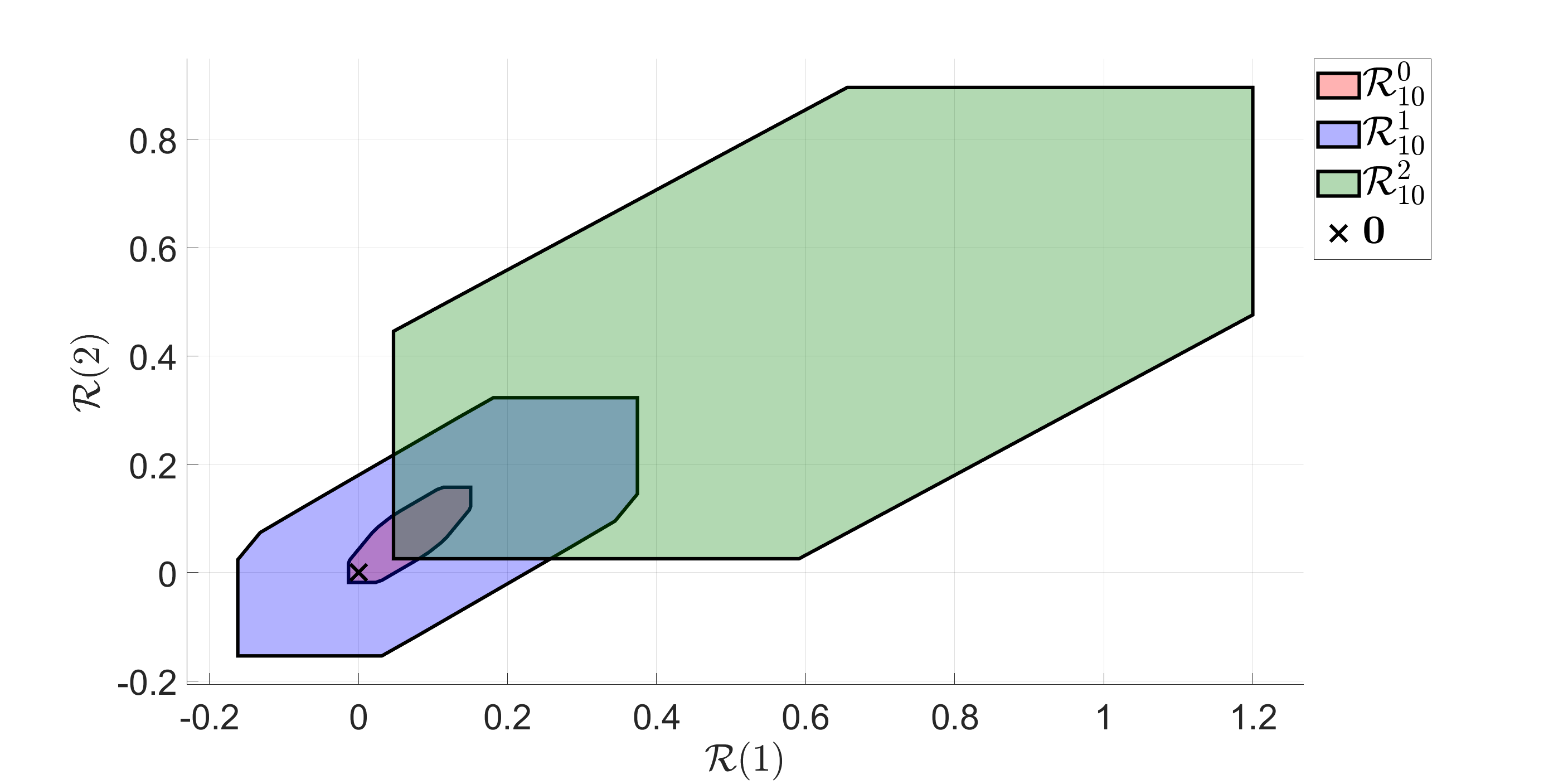}}
		\subfigure[$k=13$]{\includegraphics[height=4.5cm,width=8.5cm]{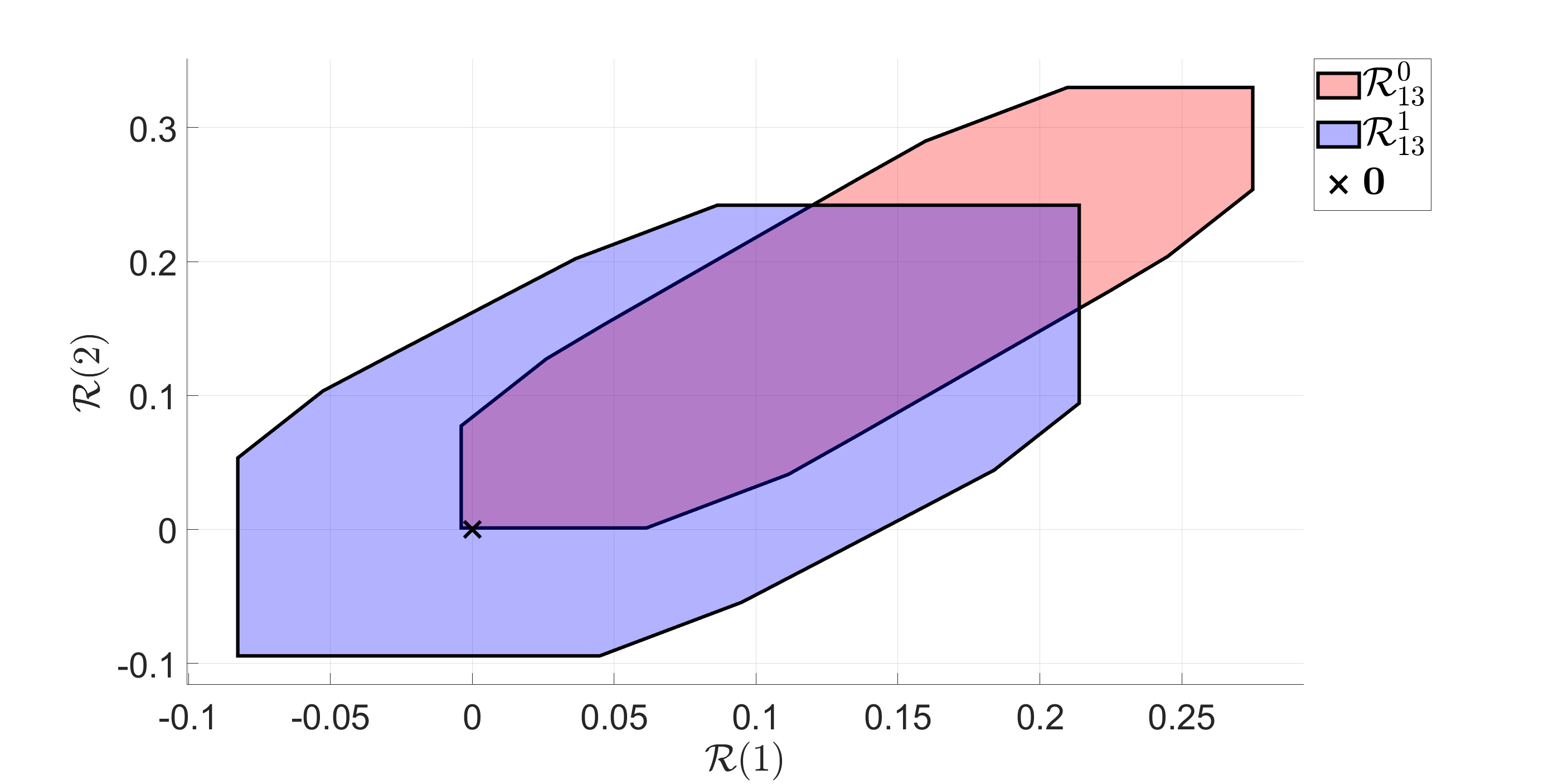}}
		\caption{PFD of the first fault $G_1$ with considering stability}
		\label{c_G1_PFD_Stability}
	\end{figure}
	\begin{figure}[htbp]
		\centering
		\subfigure[$k=1$]{\includegraphics[height=4.5cm,width=8.5cm]{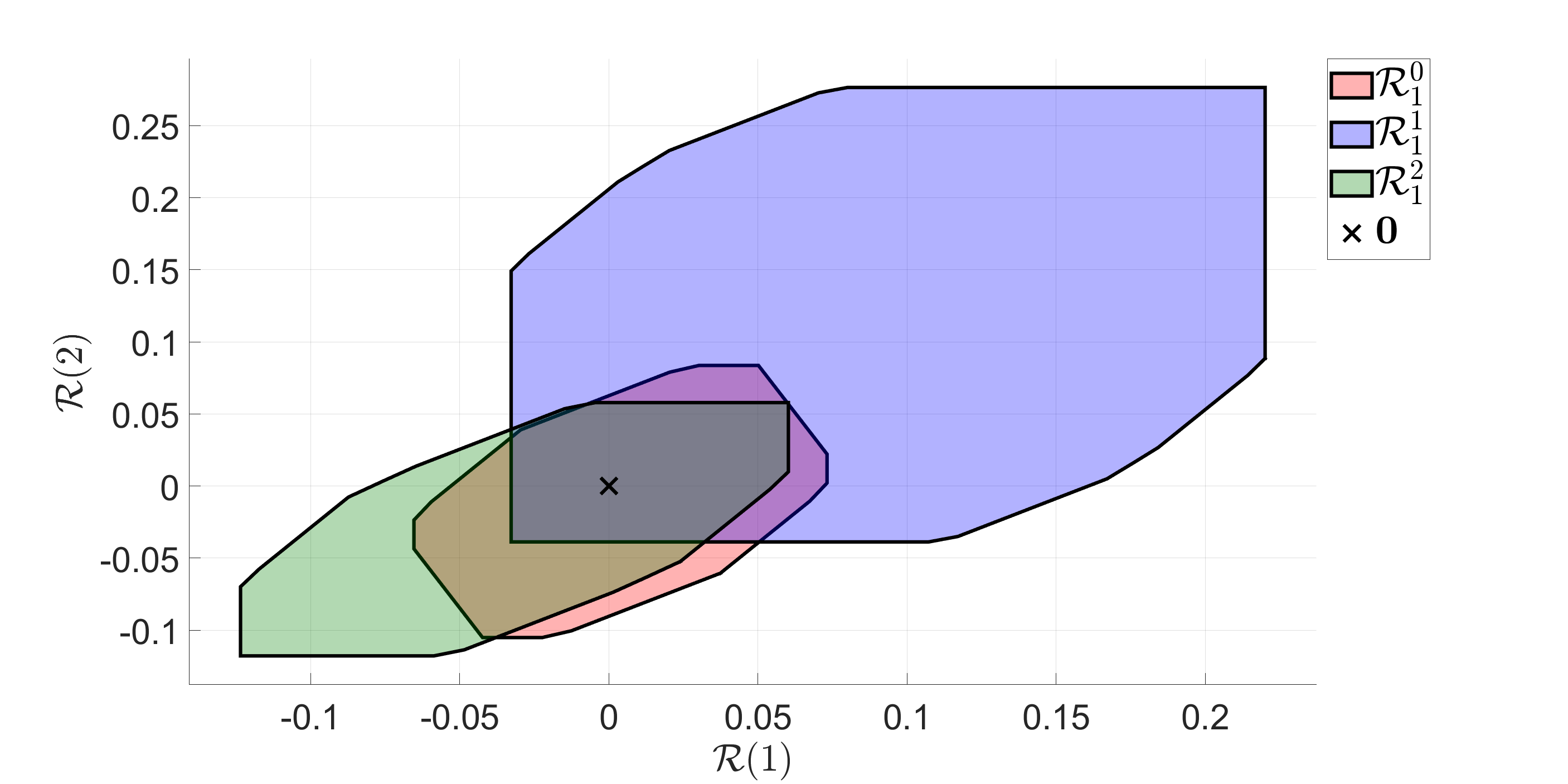}}
		\subfigure[$k=2$]{\includegraphics[height=4.5cm,width=8.5cm]{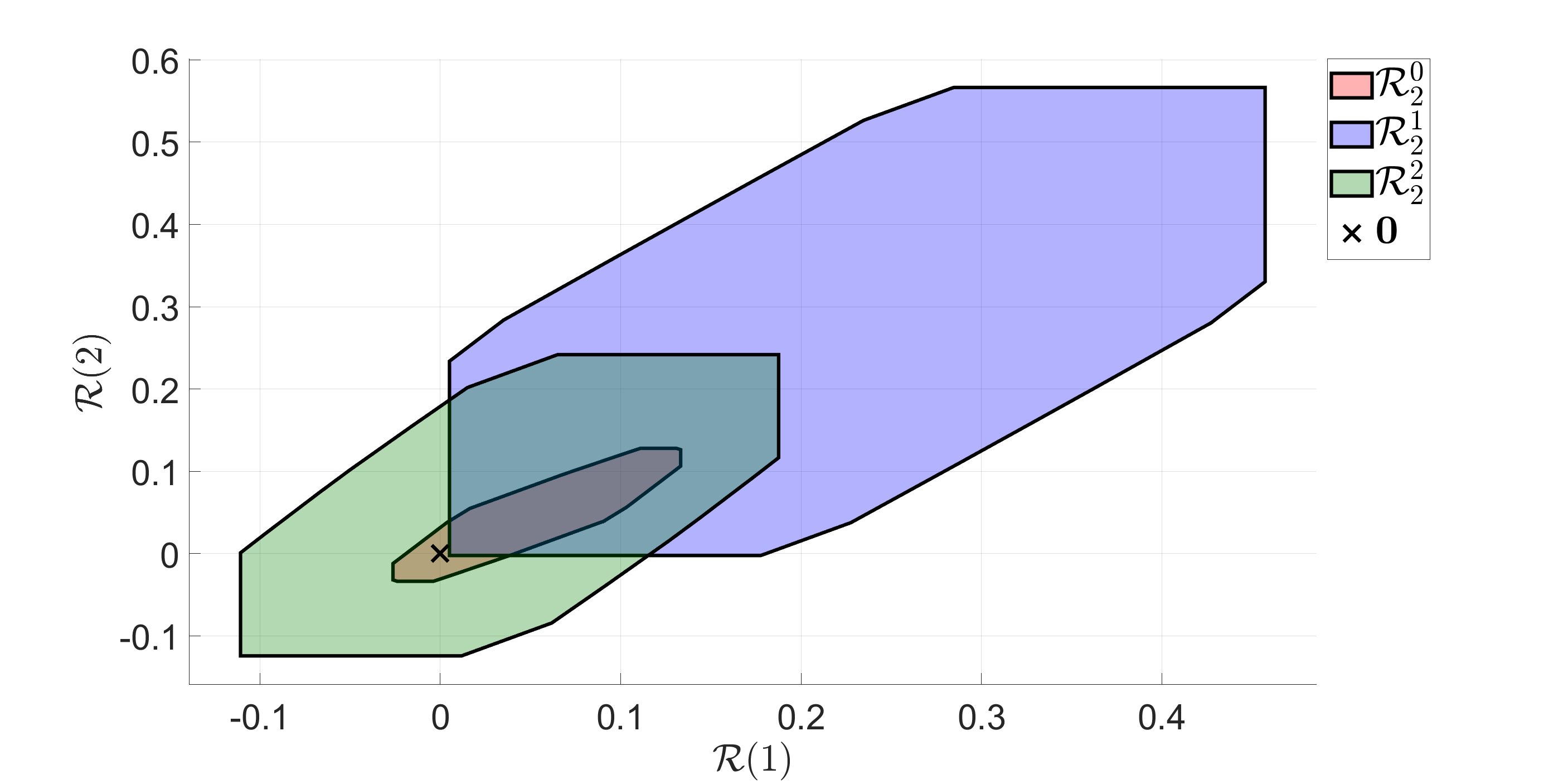}}
		\subfigure[$k=15$]{\includegraphics[height=4.5cm,width=8.5cm]{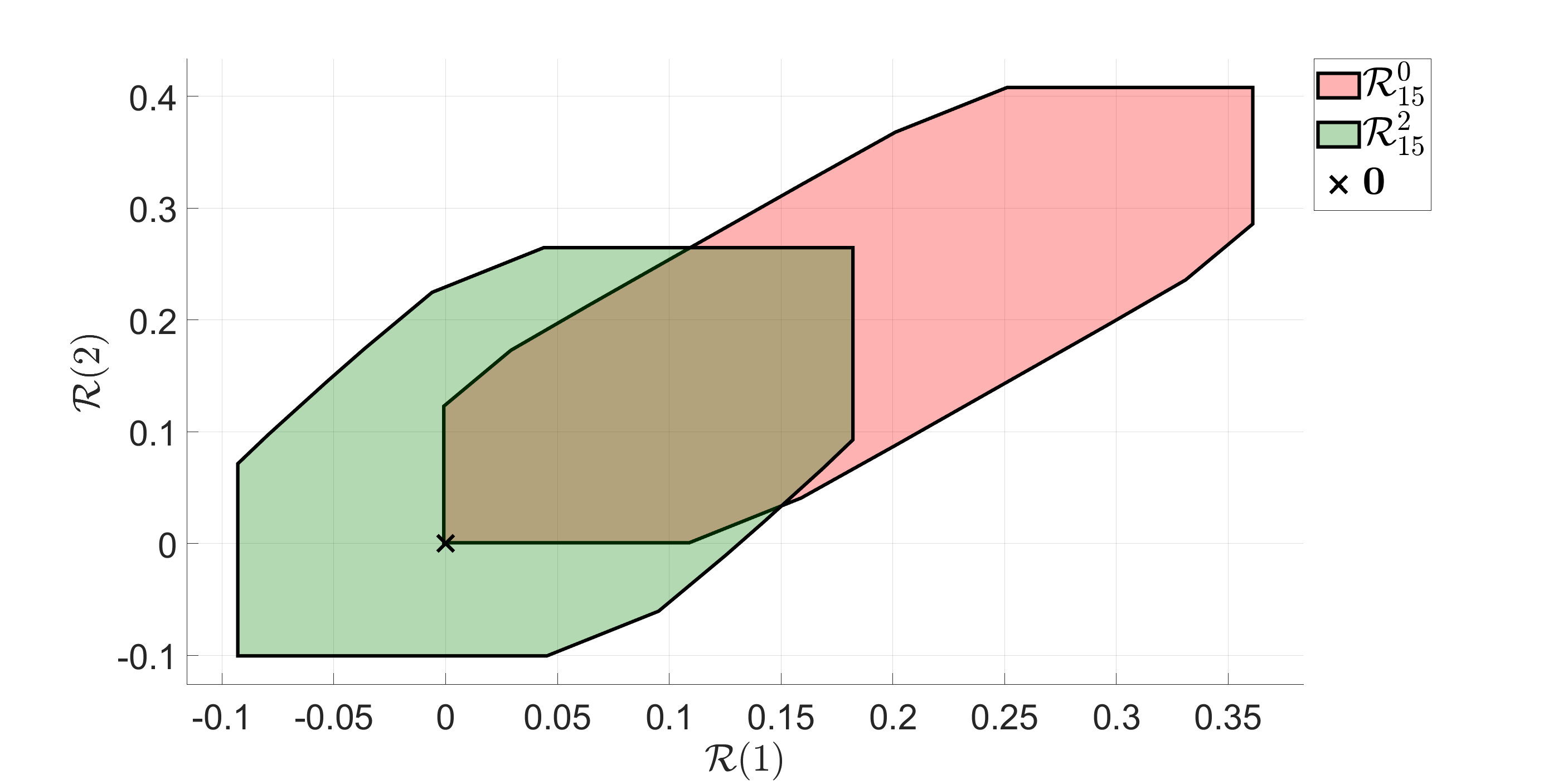}}
		\caption{PFD of the second fault $G_2$ with considering stability}
		\label{c_G2_PFD_Stability}
	\end{figure}
	
	In the third scenario, we compare the proposed method in this paper with the previous gain optimization method proposed in \cite{Xu2022CDC}. In the comparison, the same system parameters with those given in \cite{Xu2022CDC} are used, i.e., 
	\begin{align*}
		A=&\begin{bmatrix} 0.7 & 0.5 \\0 & 0.7 \end{bmatrix}, B=\begin{bmatrix} 1 & 0 \\0 & 1 \end{bmatrix},E=\begin{bmatrix} 0.5 & 0 \\0 & 0.5 \end{bmatrix},\\
		C=&\begin{bmatrix} 0.5 & 0.5 \\0 & 1 \end{bmatrix}, F=\begin{bmatrix} 0.1 & 0 \\0 & 0.1 \end{bmatrix}.
	\end{align*}
	In this example, we consider the healthy mode $G_0 = I$ and a fault interval $\mathbf{G}$ in the two actuators, i.e., 
	\begin{align*}
		\mathbf{G}=\begin{bmatrix} [0,0.9] & 0 \\0 & [0,0.9] \end{bmatrix}.
	\end{align*}
	
	The initial state, disturbances and noises are bounded by the following zonotopes: $\hat{\mathcal{X}}_0 =\langle \begin{bmatrix}\begin{smallmatrix} 0\\0\end{smallmatrix}\end{bmatrix} ,\begin{bmatrix}\begin{smallmatrix} 0.2 & 0 \\0 & 0.2 \end{smallmatrix}\end{bmatrix}\rangle$, $\mathcal{W}=\langle \begin{bmatrix}\begin{smallmatrix} 0\\0\end{smallmatrix} \end{bmatrix},\begin{bmatrix}\begin{smallmatrix} 0.1 & 0 \\0 & 0.1 \end{smallmatrix}\end{bmatrix}\rangle$ and $\mathcal{V}=\langle \begin{bmatrix}\begin{smallmatrix} 0\\0\end{smallmatrix} \end{bmatrix},\begin{bmatrix}\begin{smallmatrix} 0.1 & 0 \\0 & 0.1\end{smallmatrix} \end{bmatrix}\rangle$, respectively.
	
	Since \cite{Xu2022CDC} only considers fault detection, we also simplify the proposed method of this paper to only consider fault detection by only designing an observer and its gains for the healthy mode. It should be emphasized that the stability condition and method solving the indefinite quadratic objective function in \cite{Xu2022CDC} is different from those used in the proposed method of this paper. In order to make a fair comparison and without loss of generality, we change the stability condition and optimization problem solving method used in \cite{Xu2022CDC} to be the same with those used in the proposed method of this paper.
	
	In this simulation, a fault $G=\begin{bmatrix}\begin{smallmatrix}0.45 & 0\\ 0 & 0.45 \end{smallmatrix}\end{bmatrix} \in \mathbf G$ is injected into the system at time instant $k=29$. Since the fault detection performance is influenced by inputs, in order to comprehensively compare the current proposed method with the previous method in \cite{Xu2022CDC}, their fault detection performances are compared with different constant inputs. We choose different inputs $(u_1,u_2)\in \mathbb{P}_u\times \mathbb P_u$, where $\mathbb P_u = \left\{-0.26,-0.22,\cdots,0.22,0.26\right\}$ includes 14 elements with a gap of $0.04$ between two consecutive elements of $\mathbb P_u$. Therefore, $14\times 14$ different inputs in total are considered. Under each input, the current proposed method with the previous method in \cite{Xu2022CDC} are performed to detect the fault in at most $k_{max} = 71$ steps. 
	
	\figref{CDC_NEW_result_PFD} shows the detection time instants of different inputs using the current proposed method with the previous method in \cite{Xu2022CDC}, respectively. In \figref{CDC_NEW_result_PFD}, different colors of solid squares mean different ranges of detection time instants of different given input, where the red solid squares mean that the corresponding inputs need only 1 step to detect the given fault $G$, the orange solid squares mean that the corresponding inputs need 2 to 6 steps to detect the fault, the yellow solid squares mean that the corresponding inputs need 7 to 11 steps to detect the fault, the light blue solid squares mean that the corresponding inputs need 12 to 26 steps to detect the fault, the blue solid squares mean that the corresponding inputs need 27 to 41 steps to detect the fault, and the gray solid squares mean that the corresponding inputs need at least more 42 steps to detect the fault. Moreover, \figref{COMPARE_result_PFD} shows the comparison of the two methods, where the red solid squares mean the detection time of the current proposed method is earlier than those of the previous method in \cite{Xu2022CDC}, the blue solid squares mean opposite, and the yellow solid squares mean that the two methods have the same detection time. In general, \figref{COMPARE_result_PFD} shows that the current proposed method outperforms the previous method in \cite{Xu2022CDC} for most of given inputs in this example. Besides, although only one fault is used for the illustration in the third scenario, we have already tested more faults to compare the two methods. A general result is that the current proposed method can achieve effective performance improvements with respect to the previous method in \cite{Xu2022CDC}. However, due to the limit of space, only a fault is finally used for the illustration here. 
	
	\begin{figure}[htbp]
		\centering
		\subfigure[Ranges of FD time instants of the method in \cite{Xu2022CDC}]{\includegraphics[height=6cm,width=9cm]{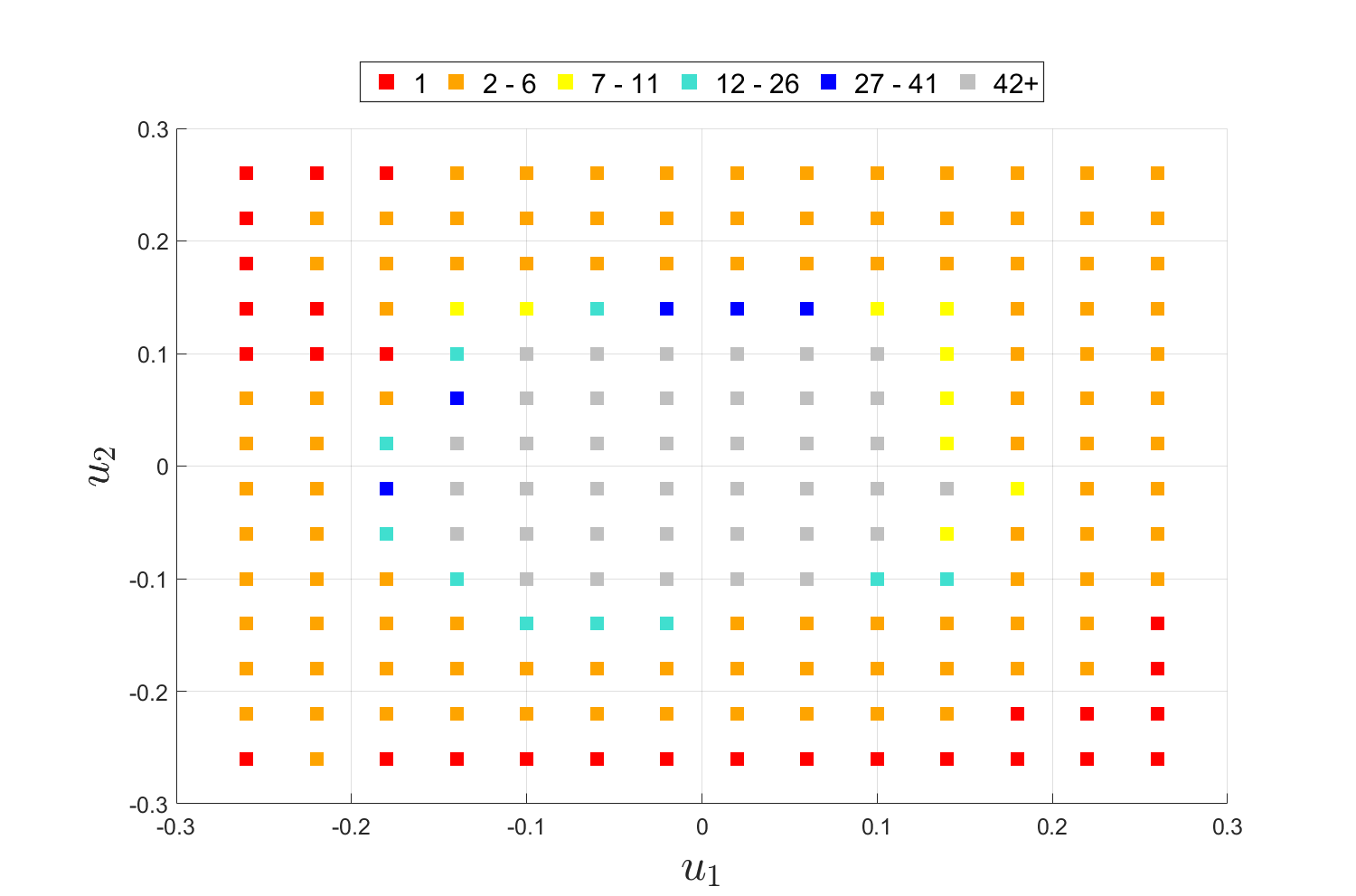}}
		\subfigure[Ranges of FD time instants of the proposed method]{\includegraphics[height=6cm,width=9cm]{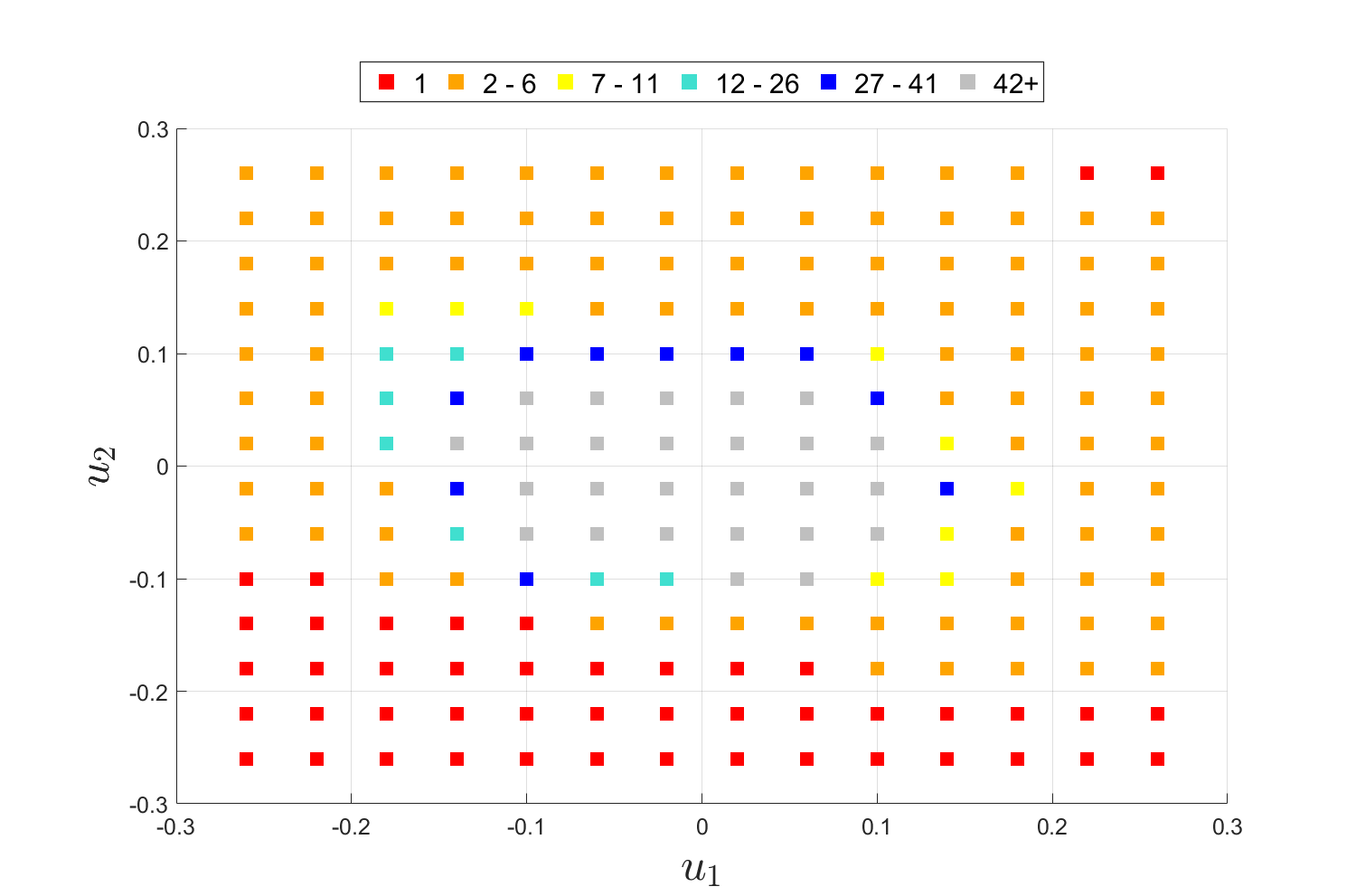}}
		\caption{FD time of different given inputs for both methods}
		\label{CDC_NEW_result_PFD}
	\end{figure}
	\begin{figure}[htbp]
		\centering
		\includegraphics[height=6cm,width=9cm]{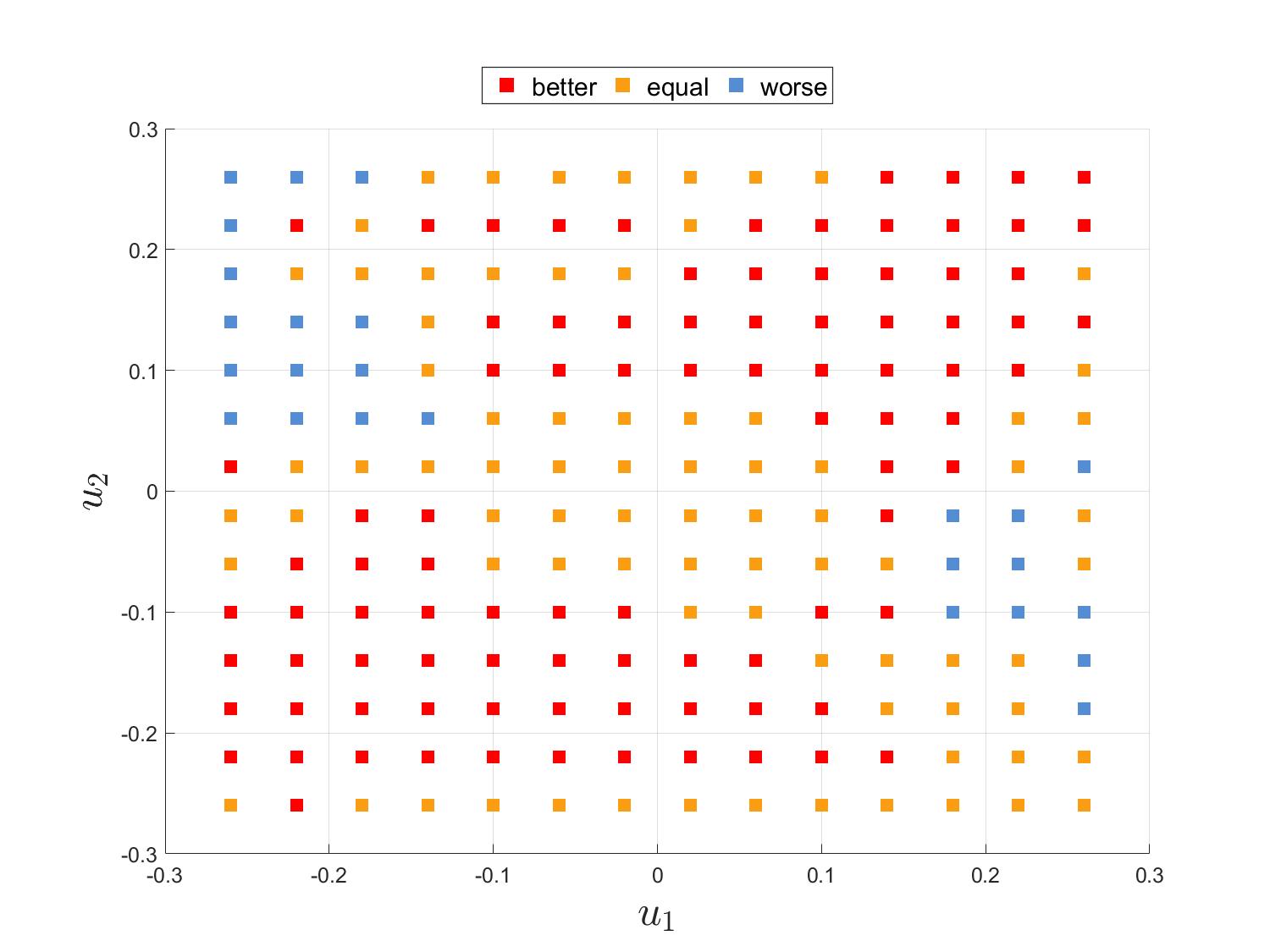}
		\caption{Comparison of FD time of the two methods}
		\label{COMPARE_result_PFD}
	\end{figure} 
	
	\subsection{Joint Gain and Input Design for AFD}
	\label{5_2}
	This subsection aims to illustrate the effectiveness of the proposed joint gain and input optimization method for AFD. We consider the same faults in the two actuators as shown in Subsection \ref{5_1} to show that the proposed joint gain and input optimization method can achieve better diagnosis performance than the single gain optimization method (i.e., the proposed PFD method in Section \ref{Section4_2}). In particular, a hard input constraint $\Vert u\Vert_2 \le 4$ is given, $\epsilon_2$ is set as $0.01$, and the initial state sets used in \eqref{SetModel} and \eqref{SetModel4} are given as $\mathcal{X}^0_0=\mathcal{X}^1_0=\mathcal{X}^2_0=\langle \begin{bmatrix}\begin{smallmatrix} 0.6\\0.6\end{smallmatrix} \end{bmatrix},\begin{bmatrix}\begin{smallmatrix} 0.1 & 0 \\0 & 0.1 \end{smallmatrix}\end{bmatrix}\rangle$. Consequently, the diagnosis results of the fault $G_1$ are shown in \figref{1_AFD_G1} where $\mathbf{0} \in \mathcal{R}^1_{5}$ and $\mathbf{0} \not\in \mathcal{R}^2_{5}$ are tested. This implies that the fault $G_1$ in the first actuator is diagnosed at $k=5$. Compared with the results in Subsection \ref{5_1}, it is shown that the proposed joint gain and input optimization method needs less time for diagnosis and has obvious advantages over the proposed PFD method in Section \ref{Section4_2}. Similarly, the diagnosis results of the fault $G_2$ are shown in \figref{1_AFD_G2} where $\mathbf{0} \not\in \mathcal{R}^1_{2}$ and $\mathbf{0} \in \mathcal{R}^2_{2}$ are tested as well. This also shows that the fault $G_2$ is diagnosed at $k=2$ and that the proposed joint gain and input optimization method has obvious advantages over the proposed PFD method in Section \ref{Section4_2}.
	\begin{figure}[htbp]
		\centering
		\subfigure[]{\includegraphics[height=4cm,width=8.5cm]{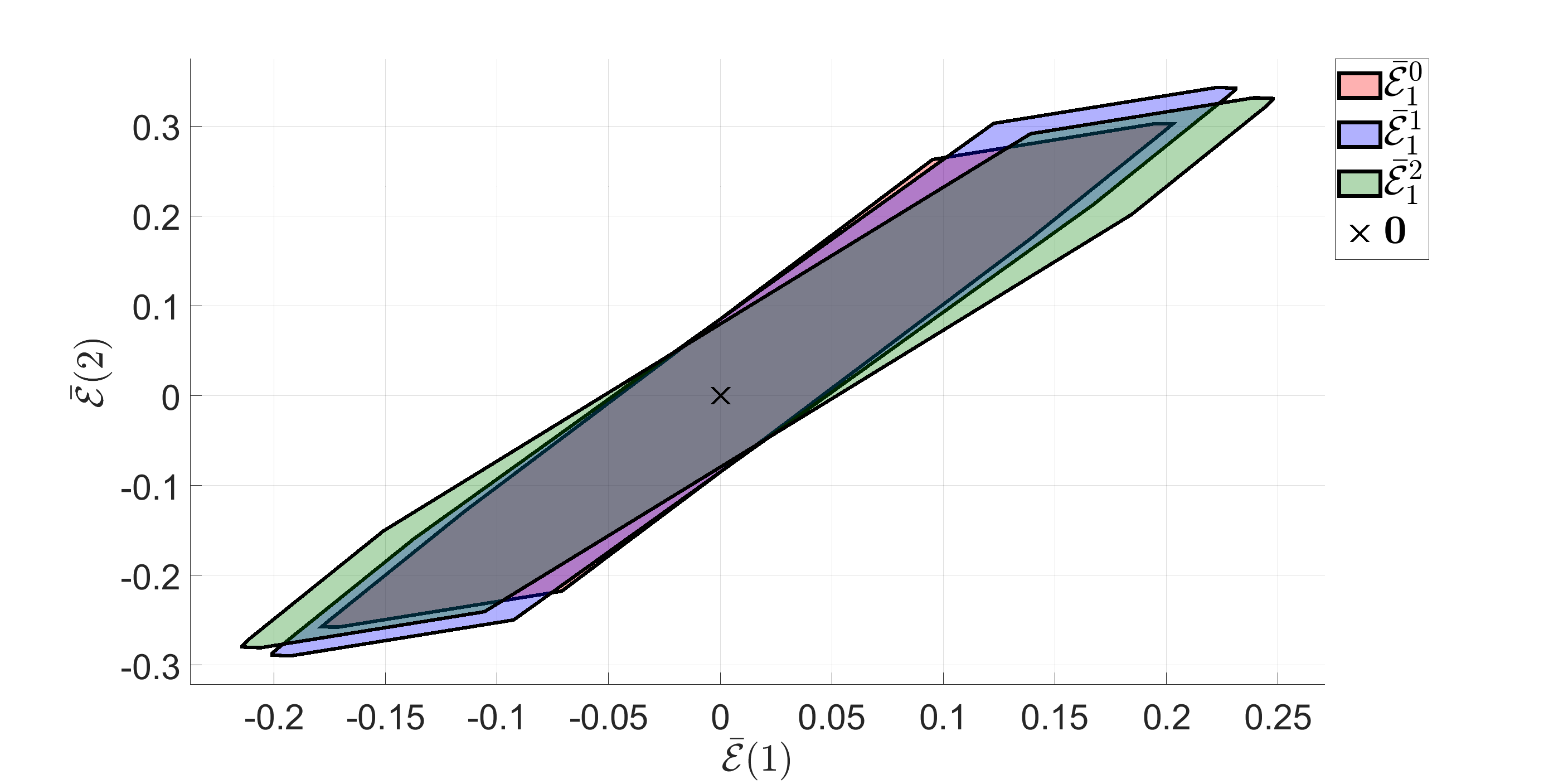}}
		\subfigure[]{\includegraphics[height=4cm,width=8.5cm]{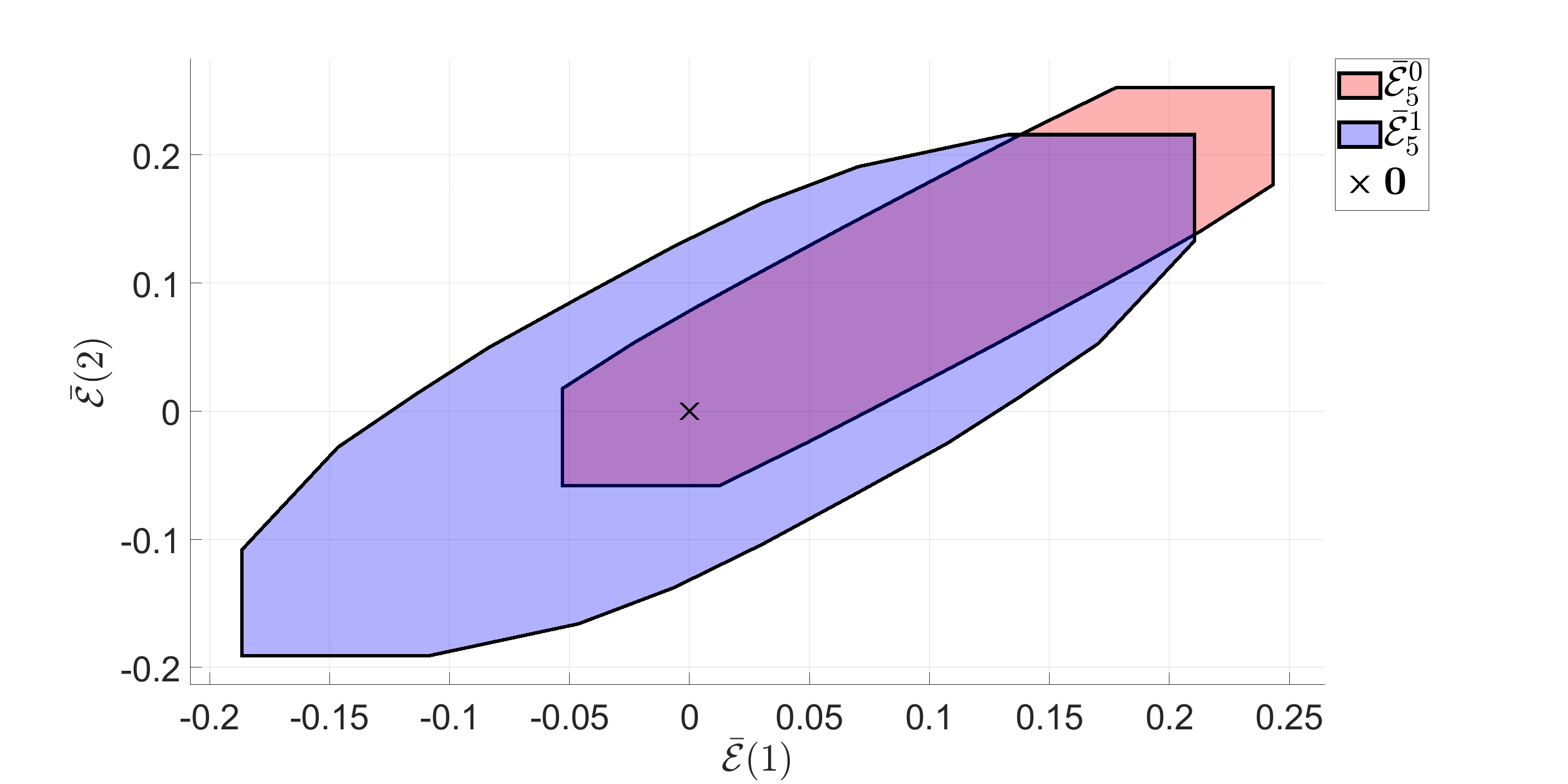}}
		\subfigure[]{\includegraphics[height=4cm,width=8.5cm]{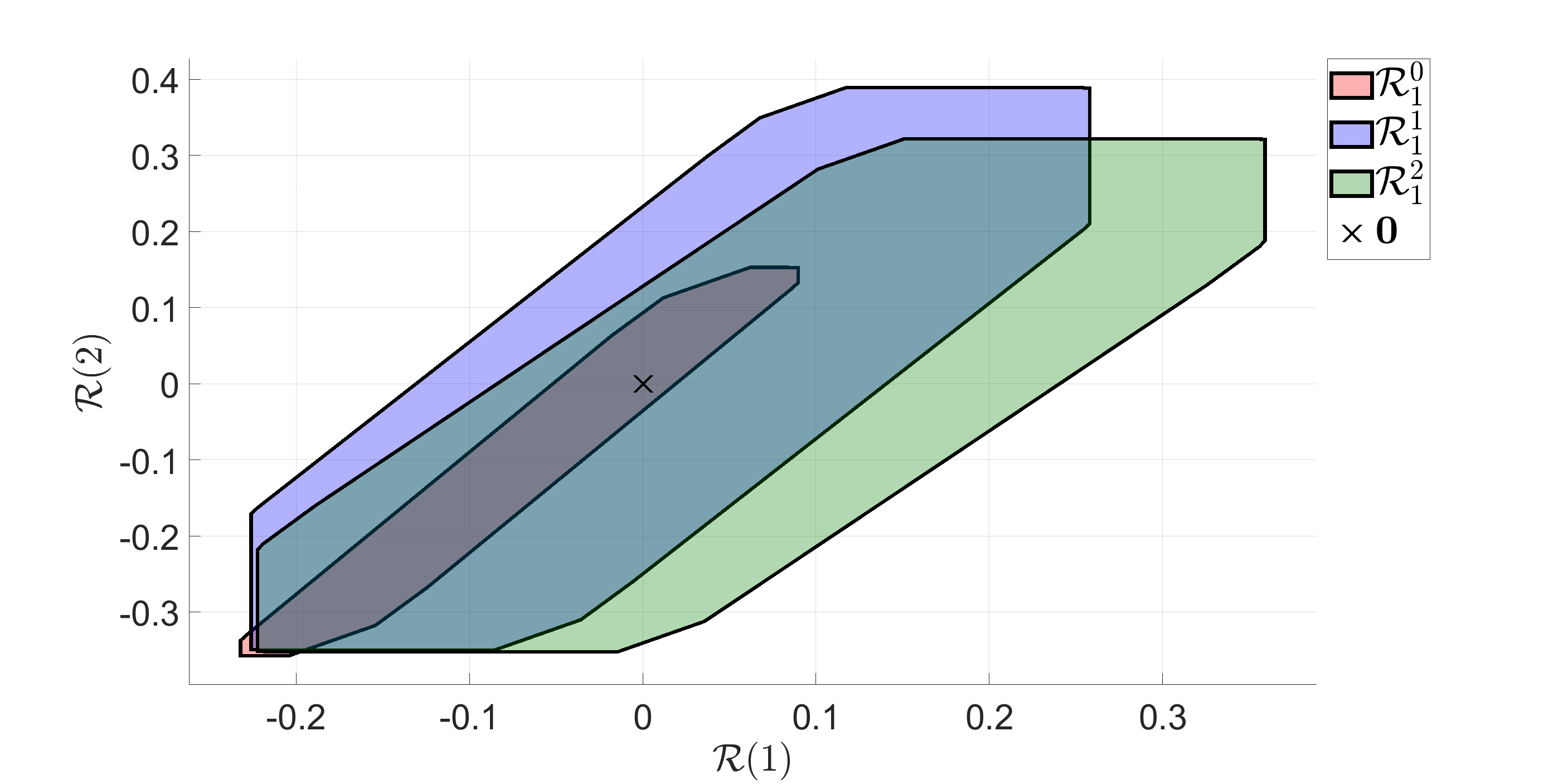}}
		\subfigure[]{\includegraphics[height=4cm,width=8.5cm]{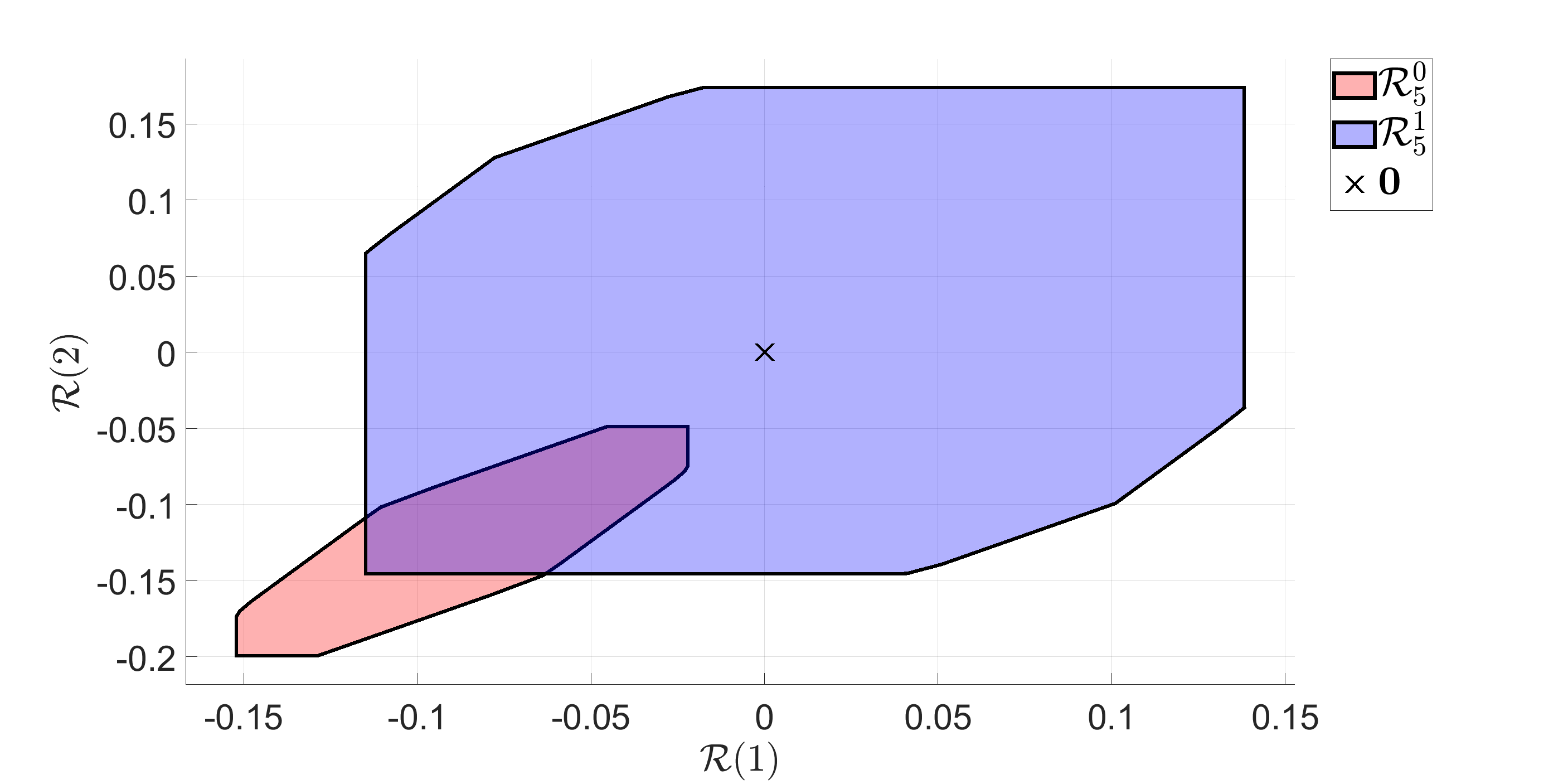}}
		\subfigure[]{\includegraphics[height=4cm,width=8.5cm]{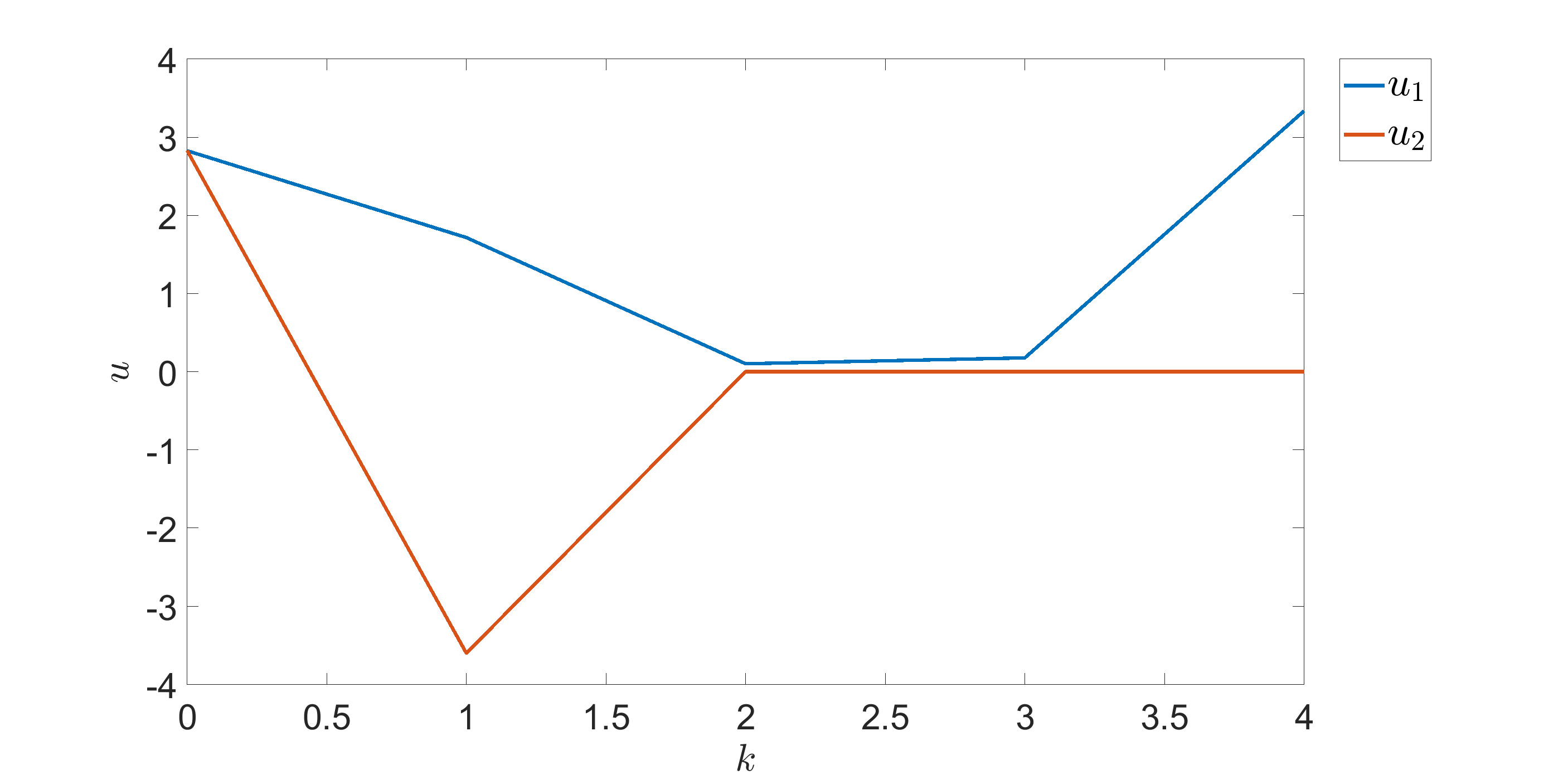}}
		\caption{AFD results of the first actuator fault $G_1$}
		\label{1_AFD_G1}
	\end{figure}
	\begin{figure}[htbp]
		\centering
		\subfigure[]{\includegraphics[height=4cm,width=8.5cm]{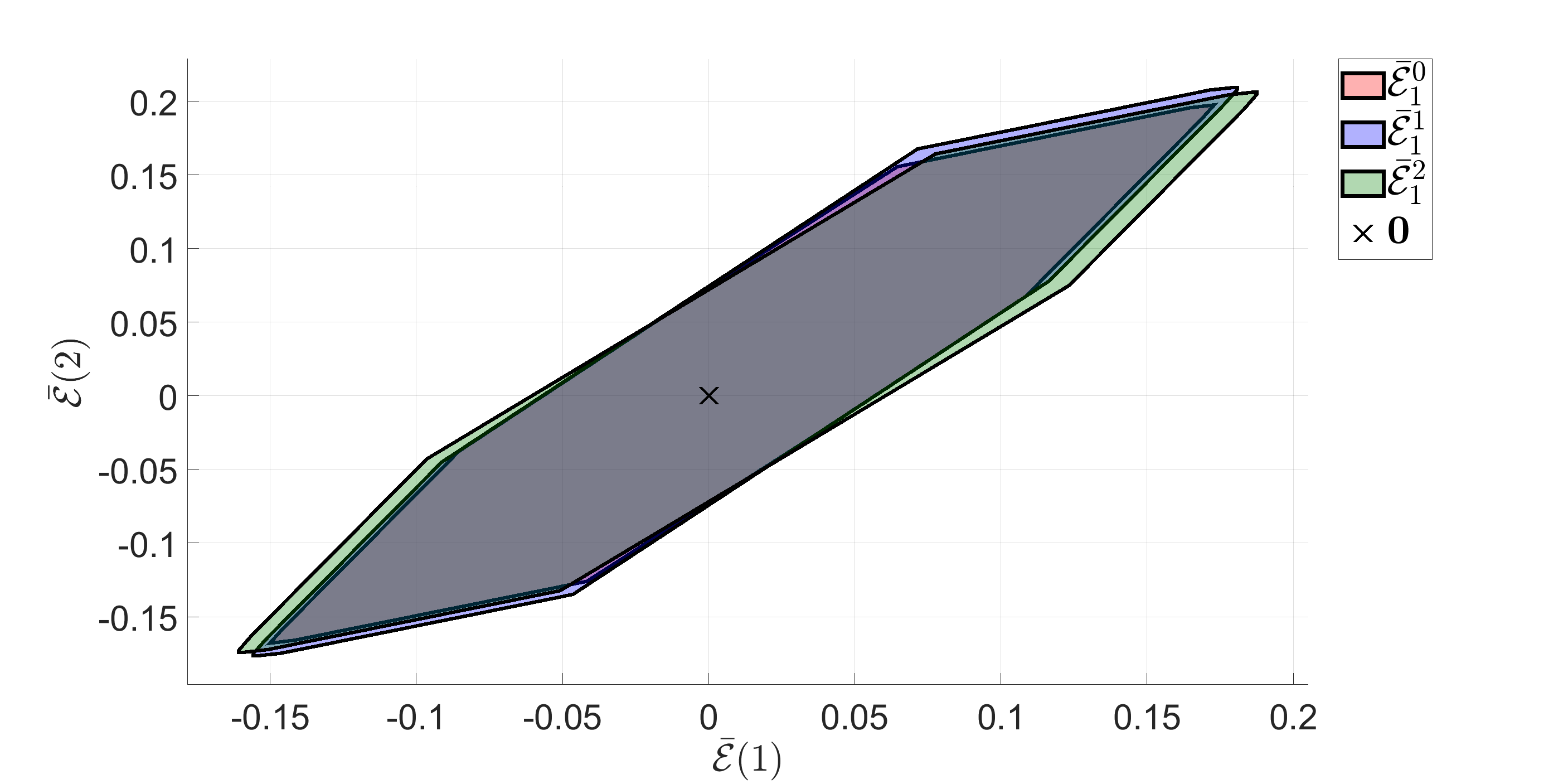}}
		\subfigure[]{\includegraphics[height=4cm,width=8.5cm]{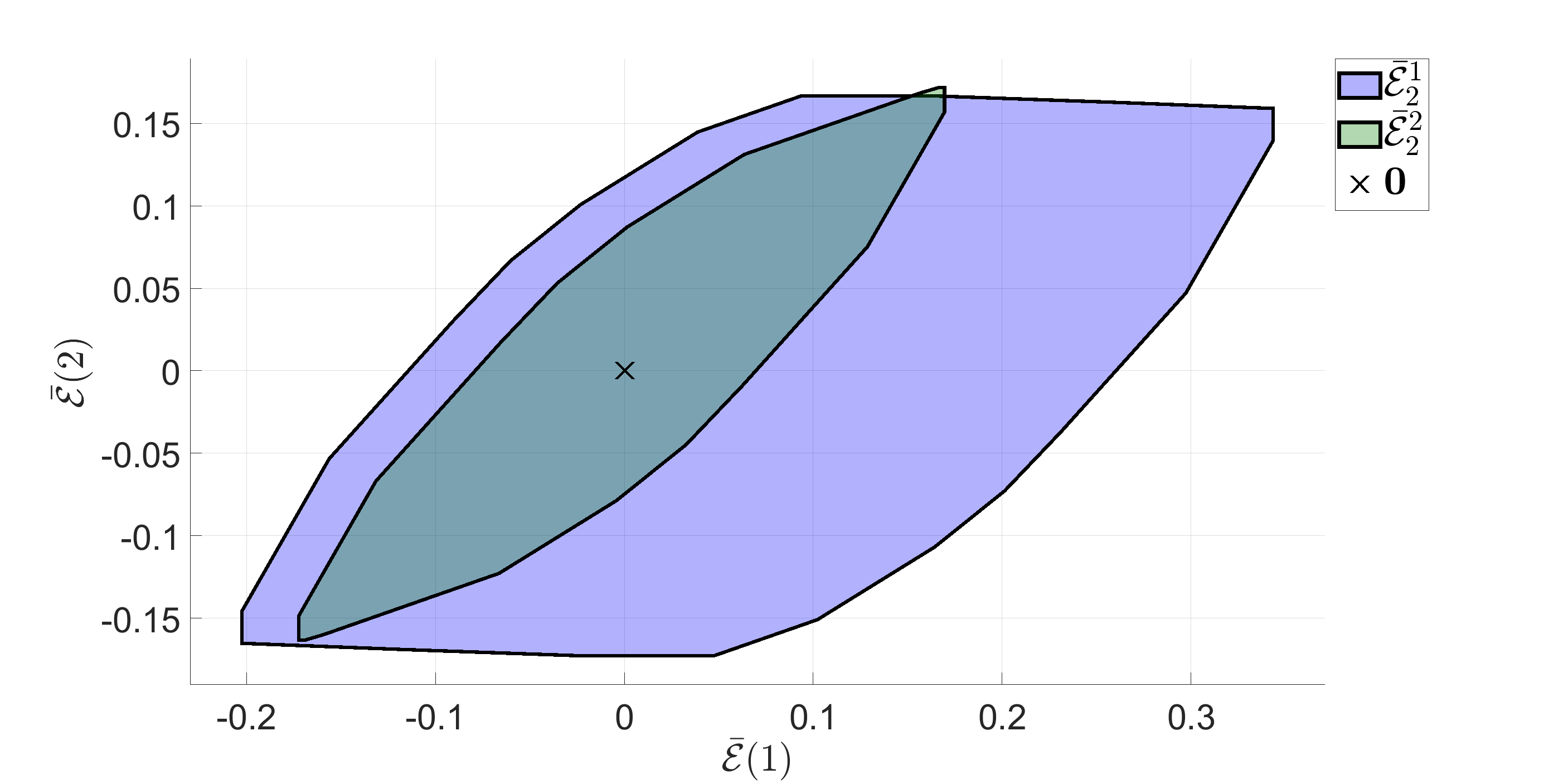}}
		\subfigure[]{\includegraphics[height=4cm,width=8.5cm]{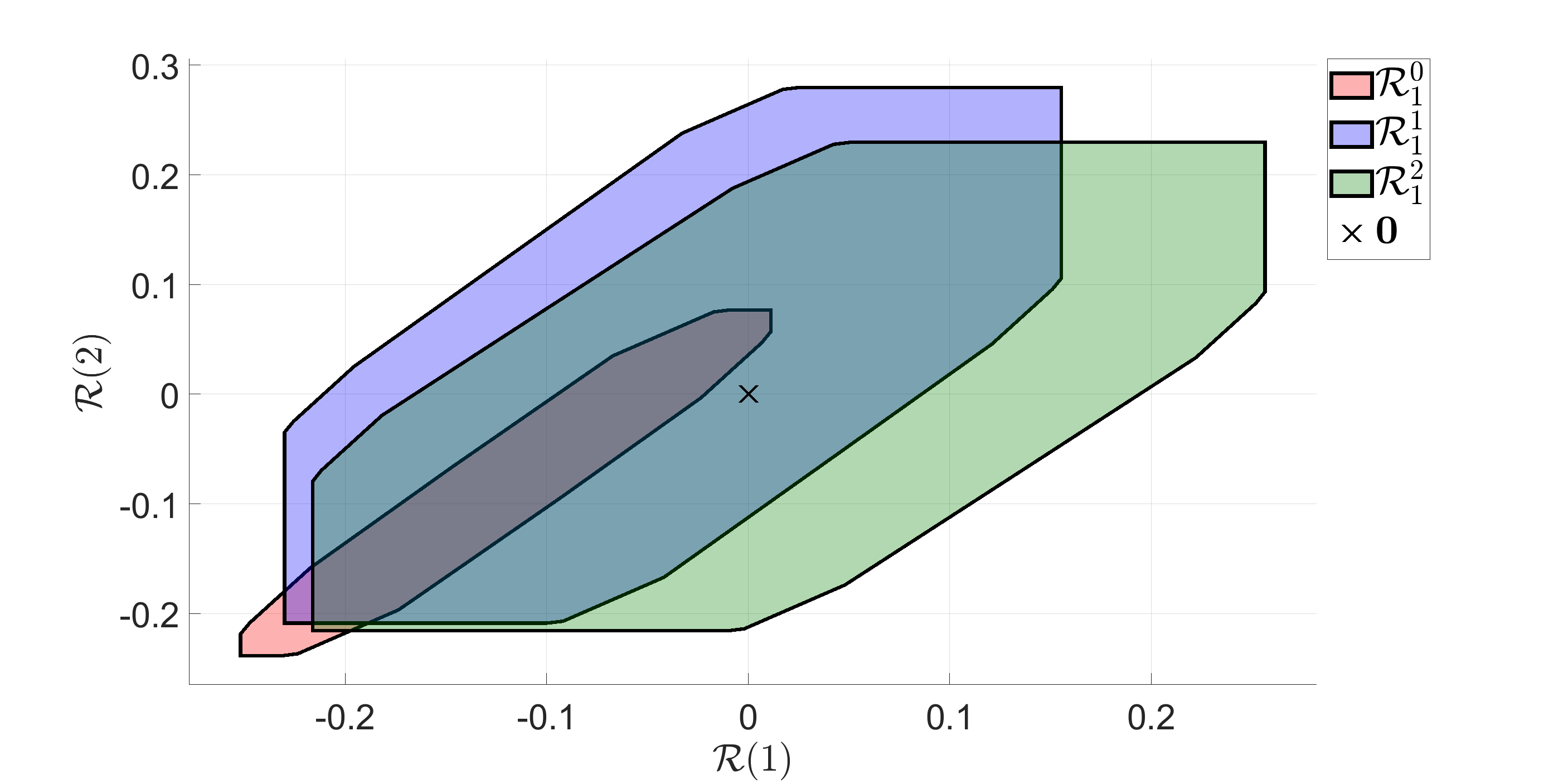}}
		\subfigure[]{\includegraphics[height=4cm,width=8.5cm]{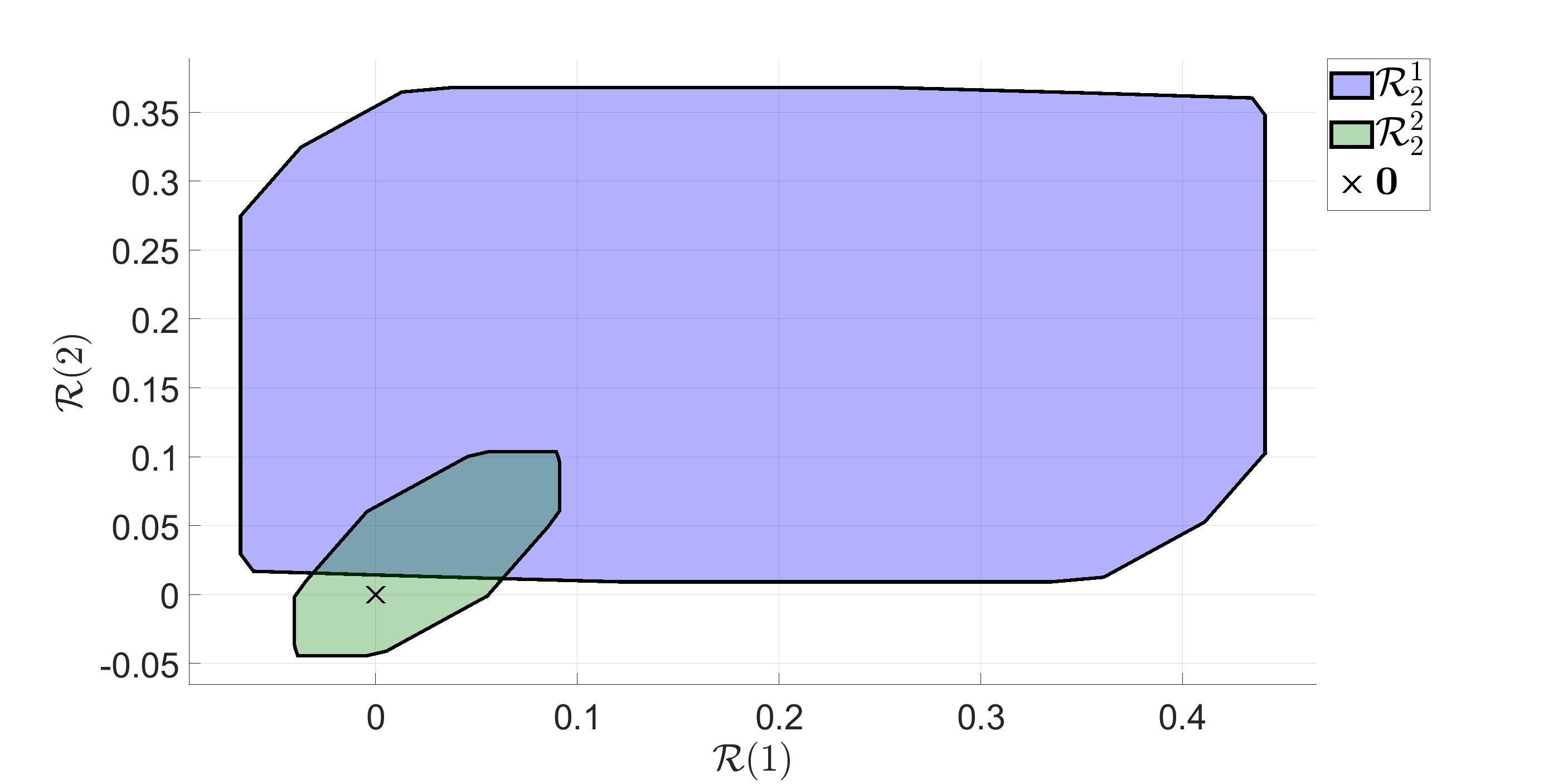}}
		\subfigure[]{\includegraphics[height=4cm,width=8.5cm]{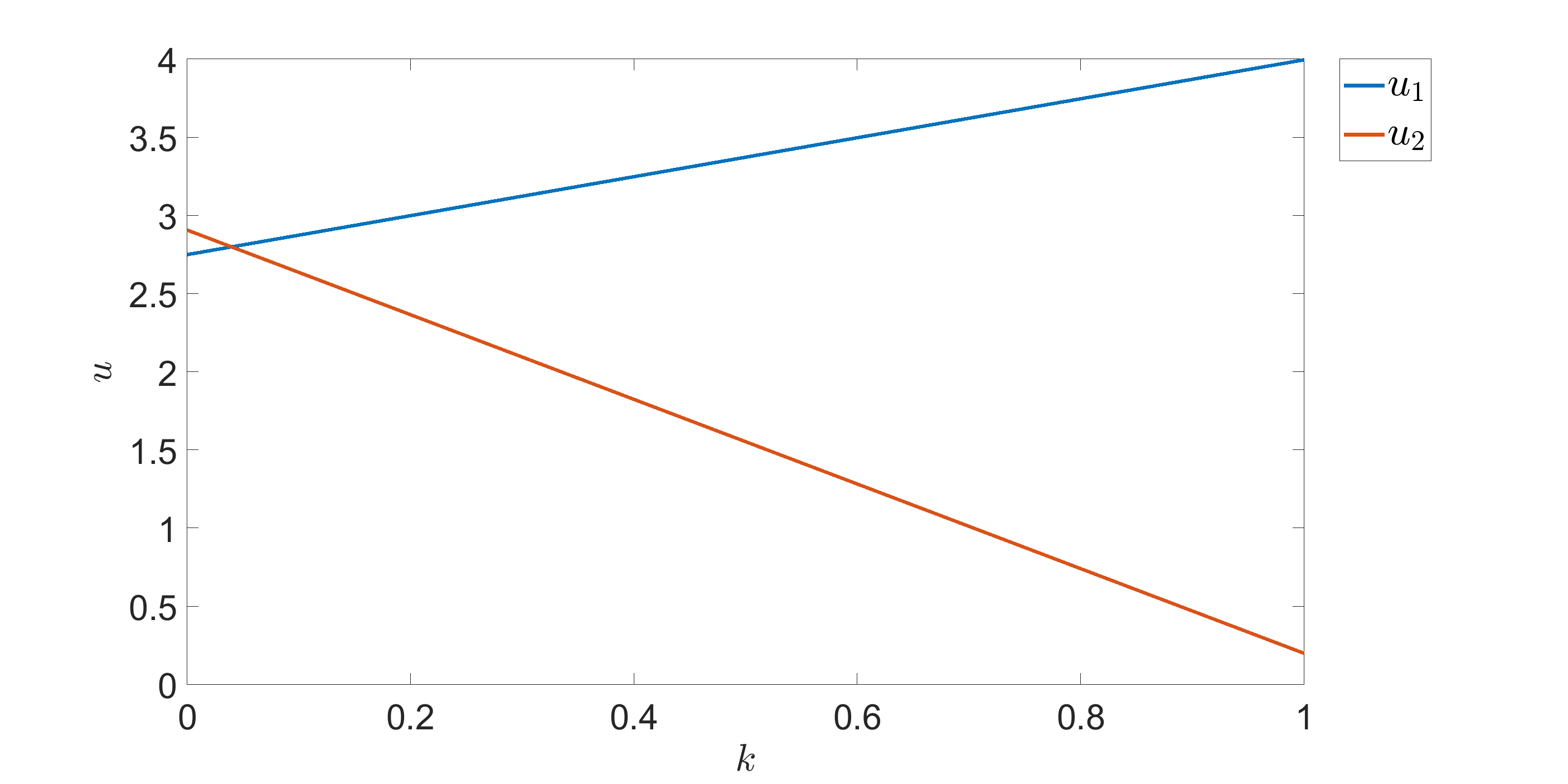}}
		\caption{AFD results of the first actuator fault $G_1$}
		\label{1_AFD_G2}
	\end{figure}
	
	Due to that the simulations above just compare the proposed joint gain and input optimization method with the proposed PFD method in Section \ref{Section4_2} using only one given input, in order to make a more convincing comparison, we do more comparisons between the proposed joint gain and input optimization method with the proposed PFD method using more sampling inputs. The system parameters keep the same with those used in Subsection \ref{5_1}. We test different inputs for the proposed PFD method and record the fault isolation time instant of each test. And then compare them with the results of the proposed joint gain and input optimization method for AFD. The comparisons of diagnosis results are shown in \figref{AFD_PFD_compare}. The red solid squares mean the proposed PFD method has shorter fault isolation time, the yellow solid squares mean the fault isolaton time of the two methods are the same, the green solid squares mean the proposed PFD method can isolate the fault within 21 time instants but the fault isolation time is longer than that of the proposed joint gain and input optimization method, and the blue solid squares mean the proposed PFD method cannot isolate the fault within 21 time instants at all. In \figref{AFD_PFD_compare}, it is shown that the proposed joint gain and input optimization method can achieve better FD performances than the proposed PFD method in most of cases, which also illustrates the effectiveness of the proposed joint gain and input optimization method. 
	
	\begin{figure}[htbp]
		\centering
		\subfigure[Diagnosis of the first fault $G_1$]{\includegraphics[height=6cm,width=9cm]{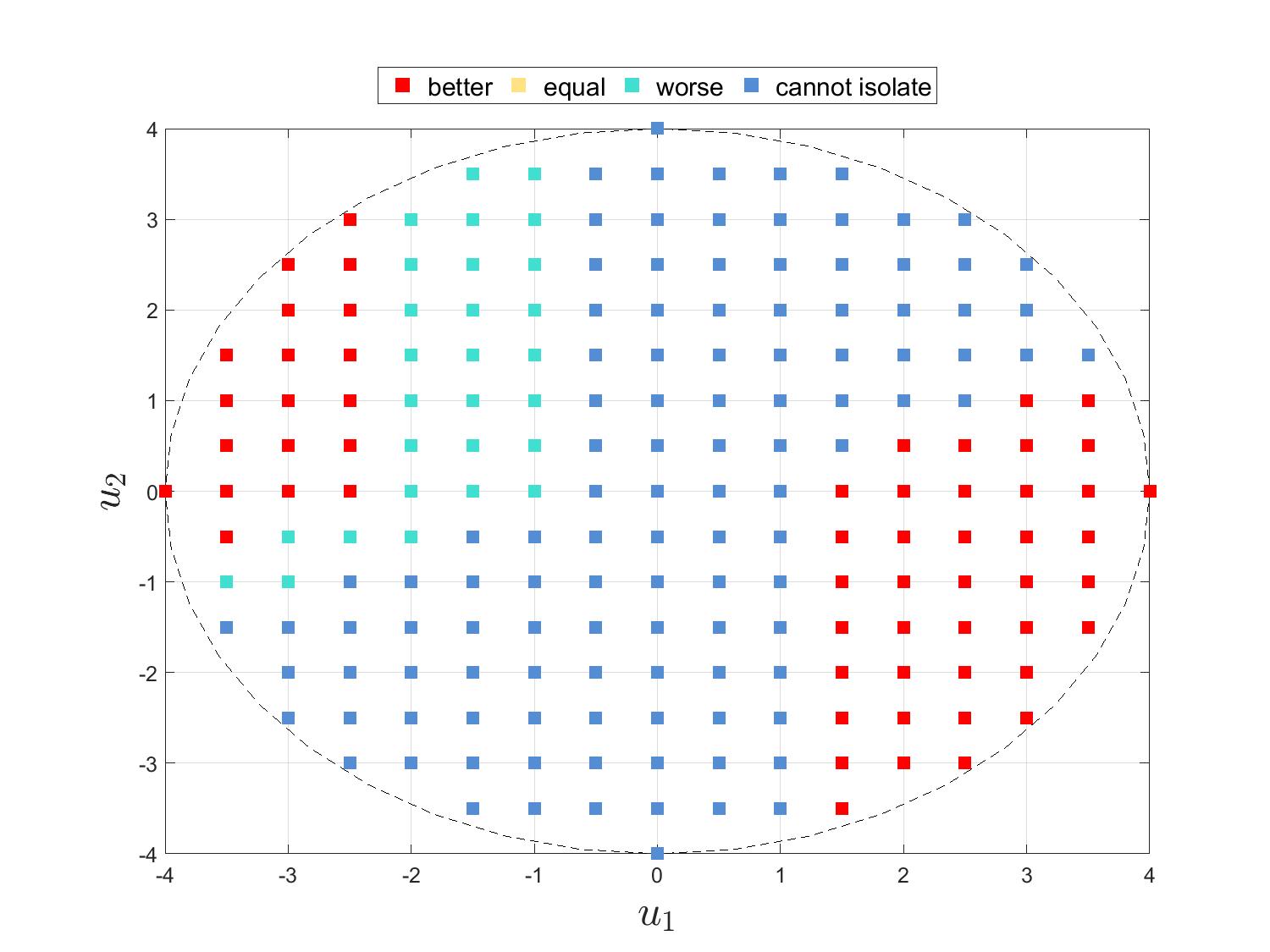}}
		\subfigure[Diagnosis of the second fault $G_2$]{\includegraphics[height=6cm,width=9cm]{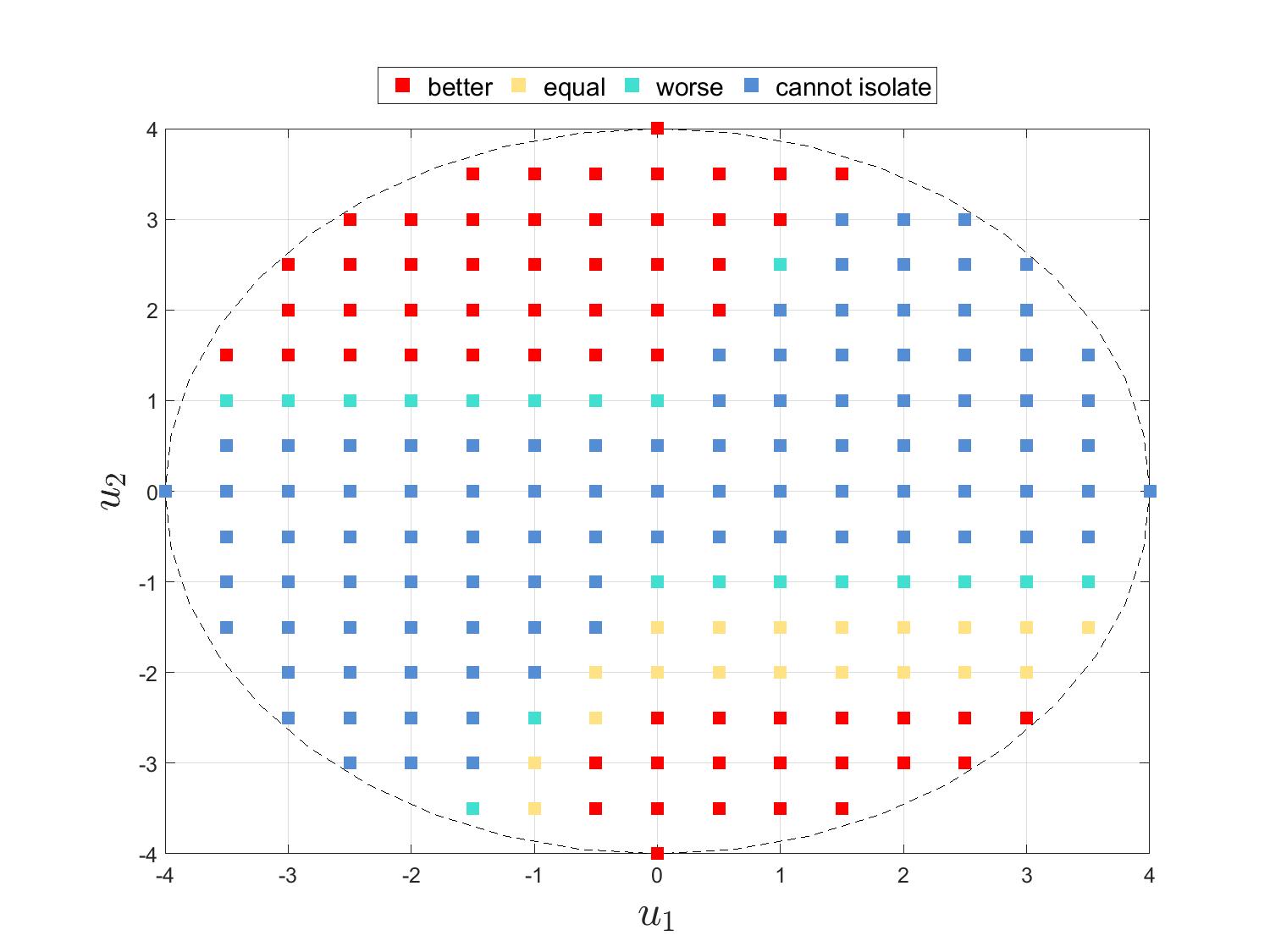}}
		\caption{Comparison of the methods in Sections \ref{Section4_2} and \ref{Section4_3}}
		\label{AFD_PFD_compare}
	\end{figure}
	
	\begin{remark}
		The comparisons shown above do not mean that the proposed joint gain and input optimization method absolutely outperforms the proposed PFD method in Section \ref{Section4_2}. Since the performance of the PFD method is affected by the given system inputs, if an appropriate input for the proposed PFD method happens to be injected into the system, it is also possible for the proposed PFD method to outperform the proposed joint gain and input optimization method in some cases. Besides, due to that the simulation process generates a large number of observer gains, this results in that we do not have enough space to display the gains. This is the reason why the designed gains are not shown here.   
	\end{remark}
	
	\section{Conclusions}
	In this paper, a new FD performance specification named the excluding degree of the origin from a zonotope is proposed. Based on the excluding degree, a new optimal FD gain design method is proposed for SVOs, which can be applied for both PFD and AFD methods using a bank of SVOs. Furthermore, using the excluding degree, the joint design problem of observer gains and inputs is solved for the observer-based AFD framework to sufficiently exploit its FD potential. In our future research, on one hand, more efficient optimization methods will be further explored to solve the formulated optimization problems to reduce computational complexity. On the other hand, less conservative stability conditions will be further investigated for SVOs.

	\section*{Acknowledgement}
		Yuxin Sun is acknowledged for some contributions in Section \ref{Section4_3} and the simulations. Bo Min is acknowledged for \propref{FD9}. This work is supported by the Shenzhen Science and Technology Program, China (No.JCYJ20210324132606015), the National Natural Science Foundation of China (Grant No. 62373161), and the Australian Research Council via the Discovery Early Career Researcher Award (DE220100609).
	
	\bibliographystyle{IEEEtran}
	\bibliography{refs}

\end{document}